\documentclass[
  prb,
  reprint,
]{revtex4-2}

\usepackage{amssymb}
\usepackage{amsmath}
\usepackage{amsthm}
\usepackage{graphicx}
\usepackage{dcolumn}
\usepackage{hyperref}
\usepackage{tikz}
\usepackage{dsfont}
\usetikzlibrary{calc}
\usepackage{physics}
\usepackage[caption=false]{subfig}
\usepackage[dvipsnames]{xcolor}
\usepackage[normalem]{ulem}

\begin{document}

\title{Quantum contextuality with mixed states of 1D symmetry-protected topological order}
\author{Leroy Fagan}
\email{lfagan23@unm.edu}

\author{Akimasa Miyake}
\email{amiyake@unm.edu}

\affiliation{
Department of Physics and Astronomy,
Center for Quantum Information and Control, 
Quantum New Mexico Institute,
University of New Mexico, Albuquerque, NM 87106, USA
}

\date{\today}
\begin{abstract}

Bell theorems of many-body nonlocality and contextuality serve as a benchmark for proving quantum advantage in that a quantum computer outperforms a classical computer for a certain problem. In practice, however, near-term quantum devices do not prepare perfectly pure states but rather mixed states produced from noisy channels. We investigate noisy quantum advantage by considering thermal mixed states of one-dimensional many-body systems with a symmetry-protected topological (SPT) order. In the pure-state (or zero-temperature) case, these states are known to be useful for measurement-based quantum computation, and to outperform classical computers in a many-body contextuality game, provided string order parameters (SOPs) of SPT are sufficiently large. Here, we show that quantum advantage in mixed states is measured by a combination of twisted SOP and symmetry representation expectation values. Using the minimally entangled typical thermal states algorithm, it is demonstrated that quantum advantage persists to a nonzero critical temperature for finite-sized instances of the many-body contextuality game. While this critical temperature goes to zero in the thermodynamic limit, it is relatively robust to system size, suggesting that these states remain useful for demonstrating genuine ``quantumness'' of noisy hardware in a scalable fashion. Finally, we show that the quantum winning probability is lower bounded by the global fidelity with the 1D cluster state, so that our contextuality game can provide an operational meaning to benchmark the capacity to create long-range order like SPT states in near-term experimental devices.

\end{abstract}

\maketitle
\section{Introduction} \label{sec:introduction}

The current age of quantum computing is marked by pre-fault-tolerant devices which are not able to perfectly perform computational tasks. While in most situations the goal of an ideal quantum computer would be to prepare and manipulate a pure quantum state $\ket{\psi}$, these noisy intermediate-scale quantum (NISQ) devices \cite{Preskill2018quantumcomputingin,RevModPhys.94.015004} are influenced by interactions with the environment, creating instead a classical probabilistic mixture of pure states (or a \emph{mixed} state $\rho$). The power of NISQ devices is thus closely linked to the properties of mixed quantum states resulting from physically realistic noise models on useful pure states. One of the most important questions in quantum computation concerns quantum advantage, which asks if a quantum computer can solve a problem more efficiently than a comparable classical computer and how this behavior scales with the problem size. While there are a number of results establishing that fault-tolerant quantum algorithms outperform the best known classical algorithms (e.g. \cite{Grover,Shor} are some of the most famous), less is known about whether near-term devices could demonstrate such a separation. One might suspect that the lack of complete quantum coherence would limit noisy devices' ability to exhibit quantum advantage. Despite this general view, however, in this work we demonstrate that computationally useful properties of a class of many-body entangled states would persist at the level of size and noise strength of modern NISQ hardware. In particular, finite-sized mixed states resulting from statistical thermal fluctuations are able to surpass classical bounds in Bell-type nonlocal games, up to a certain critical temperature.

 We adopt the framework of measurement-based quantum computing (MBQC) \cite{RaussendorfBriegel,Raussendorf2003}, because it matches the canonical setting of Bell nonlocal games described below. In particular, MBQC simulates quantum computation by performing adaptive single-qubit measurements on a quantum resource state, consuming its entanglement in the process. A particular class of short-ranged entangled pure states known as \emph{symmetry-protected topological} (SPT) states have been shown to be resources for MBQC \cite{Miyake,QCinHaldanePhase,SPTforMBQC,SPTinMBQCGS,Miller,Miller2016,1DSPTforComputation,SPTComputationalPower,FractalMBQC,ComputationalPhaseofMatter,Stephen2019subsystem,Daniel2020computational}. These states are said to exhibit SPT order (SPTO) and belong to respective SPT phases \cite{PollmannEntanglement,PollmannSPTO,Homology,QImeetsQM}. The ability to perform MBQC has been shown to be a property of the entire SPT phase rather than just individual states. MBQC procedures therefore have a certain robustness, as the resource state does not have to be known exactly as long as it resides in the right phase.

MBQC resource quality may be measured through the degree to which it is able to exhibit quantum advantage in \emph{nonlocal games} \cite{NonlocalgamesBook,NonlocalgamesPaper}, distributed computational tasks subject to global linear constraints designed to thwart explanation by local classical circuits. This is a modern extension of the famous Bell inequalities which demonstrate that quantum correlations cannot be reproduced by any local hidden variable theory \cite{Bell}. Translated to nonlocal games, this means that by sharing certain entangled quantum resource states and performing local measurements, players have the potential to outperform any local classical strategy. States which outperform classical strategies in these games have been shown to be useful MBQC resources \cite{nonlocalgamesMBQC}, due to a quantum-mechanical property known as \emph{contextuality} \cite{KS1967,Spekkens2005} which arises from its linear algebraic structure. The contextuality of many-body states is studied in Refs. \cite{Bulchandani3,Bulchandani2,Hart,Bulchandani,ToricCodeNonlocalGame,Braidingforthewin}.

We study a previously formulated nonlocal game known as the ``contextual triangle game'' \cite{Austin} which is based on previous work \cite{modelingPaulimeas,ShallowCircuits,Bravyi2020} in which pure states with SPTO enable quantum advantage. The necessary condition for symmetric states to exhibit quantum advantage in the game is a sufficiently high expectation value of a \emph{string order parameter} (SOP) which acts as a detector of the SPT phase \cite{denNijs,SOPinSpinLattice}.

Here we consider a simple extension to the contextual triangle game formulation which allows one to consider nonsymmetric states and more generally, {\em mixed} states. In this case, the quantum advantage cannot be directly traced back to the string order parameter but instead relies on a certain combination of symmetry operators and a ``twisted string order parameter'' which probes a topological invariant of the SPT phase. Using this framework, we study the performance of mixed states resulting from noisy preparation of SPT-ordered pure states in the triangle game, giving a notion of how quantum advantage is affected by noisy hardware. Our work is thus closely related to the emerging study of mixed state phases of matter studied in Refs. \cite{Masui2025,MixedSPT,Lessa2025,Ma2025}.

We prove that the global fidelity of the resource state with the cluster state \cite{Cluster}, the canonical example of SPTO in our SPT phase, provides a lower bound for the quantum winning probability in our nonlocal game, connecting our work more closely to experiment. In this picture, once the fidelity is sufficiently high, the state is verifiably quantum as it can be used in our nonlocal game to beat classical strategies. Framed in a different light, our contextuality game can be used to benchmark whether true long-range SPTO has been created on NISQ devices.

While decoherence in an open system depends on the physical details of the specific NISQ device, we highlight a noise model given by the Gibbs thermal mixed state of a system Hamiltonian with SPTO. This is because when the system is in contact with a thermal bath at a fixed temperature, the Gibbs state appears ubiquitously by minimizing the Gibbs free energy under standard assumptions of statistical mechanics, giving an ``average'' noise model to illustrate explicitly our test of (state-dependent) contextuality.
Using an efficient Monte-Carlo sampling technique known as the minimally entangled typical thermal states (METTS) algorithm \cite{White, Stoudenmire_2010} applied to matrix product states (MPS) \cite{White1992, Ostlund1995, Perez-Garcia, Verstraete2008,Schollwock2011} on ITensor \cite{itensor}, we obtain accurate expectation values of symmetry operators and twisted SOPs in the Gibbs state. Although the SPT phase we consider does not exist above zero temperature {\em in the thermodynamic limit} \cite{Roberts}, we find that finite-sized thermal states still contain entanglement properties necessary to exhibit quantum advantage in the triangle game up to some size-dependent finite temperature. Our result implies that demonstration of unconditional quantum advantage may be possible on NISQ devices.

The rest of the paper will be arranged as follows. In Section \ref{sec:SPTO} we describe symmetry-protected topological order. In particular, we focus on one-dimensional (1D) SPTO protected by a $\mathds{Z}_2\times\mathds{Z}_2$ symmetry in spin-$1/2$ systems. In Section \ref{sec:trianglegame} we review the contextual triangle game and subsequent quantum advantage using SPT resource states. We then extend the previous framework to arbitrary mixed states. Section \ref{sec:results} contains our main numerical results which show that finite-sized SPT states can still show quantum advantage under thermal noise. We explicitly analyze the thermal cluster state, giving an analytic solution of the temperature at which quantum advantage is lost. In Section \ref{sec:experiment} we prove that the winning probability is lower bounded by the global fidelity with the cluster state and comment on current experimental progress.
Finally, in Section \ref{sec:conclusion} we conclude and give outlook for remaining work and future directions.

\section{1D $\mathds{Z}_2\times\mathds{Z}_2$ symmetry-protected topological order} \label{sec:SPTO}

Phases of matter describe distinct emergent collective behavior in the study of many-body systems and appear ubiquitously throughout many fields of science. In classical systems, phases are often characterized by global symmetries and their spontaneous breaking; In a solid-liquid phase transition the discrete lattice symmetry of the solid breaks the continuous translational and rotational symmetries of the liquid, and a local observable that detects the number of bonds emerging from an atom is thus able to distinguish between these two phases. However, quantum many-body systems admit exotic phases of matter which do not arise from any broken symmetry, and such ``topological'' phases are characterized through global invariants rather than local observables. From a quantum information viewpoint, topological phases cannot be connected to a product state by any constant-depth quantum circuit with local interactions \cite{Chen}. It follows that local perturbations do not destroy the entanglement structure of topologically ordered states, making them natural candidates for quantum information storage and quantum error correcting codes \cite{toric,Kitaev}.

Reintroducing symmetry further enriches the phase diagram.  Some phases that are topologically trivial in the absence of symmetry become nontrivial once the symmetry constraint is imposed, in that states in that phase cannot be connected to a product state by a constant-depth local quantum circuit \emph{which respects the symmetry}. These \emph{symmetry-protected topological} (SPT) phases still remain robust against symmetry-preserving local perturbations and are characterized by nonlocal order parameters. States in an SPT phase are said to exhibit SPT order (SPTO) \cite{PollmannEntanglement,PollmannSPTO,Homology,QImeetsQM}.

SPTO is interesting in its own right as it has many uniquely quantum-mechanical properties. For one, it has been shown that certain two-dimensional (2D) SPT phases provide universal resource states for measurement-based quantum computing (MBQC) \cite{Miyake,QCinHaldanePhase,SPTforMBQC,SPTinMBQCGS,Miller,Miller2016,1DSPTforComputation,SPTComputationalPower,FractalMBQC,ComputationalPhaseofMatter,Stephen2019subsystem,Daniel2020computational}, meaning quantum computation can be simulated using only single-qubit adaptive measurements on the state. States that enable universal MBQC are also suspected to exhibit \emph{contextuality}, \cite{Raussendorf}, the intrinsically quantum mechanical property that measurement outcomes are dependent upon the set of measurements performed alongside it.

SPT states' properties make them ideal candidates for studying quantum advantage both through simulation and actual implementation on NISQ hardware. As mentioned,
SPT ground states are efficiently preparable for arbitrary system sizes. This follows from the fact that SPT states lie in the trivial phase once the protecting symmetry is removed, meaning that one can reach arbitrarily close to the states from some fiducial product state by a (non-symmetry-respecting) constant depth local quantum circuit. Additionally, while 2D SPT phases are of the primary interest, one can gain insights by studying their 1D building blocks. Gapped 1D SPT ground states are area-law entangled \cite{Hastings_2007}, allowing for efficient classical simulation using matrix product states, a fact we will take advantage of in our numerical analysis.

We first establish our notation. We work in a many-body system composed of $n$ qubits. The identity matrix is denoted as $\mathds{1}$ and the Pauli matrices as $X,Y,Z$, with subscripts denoting the site index. We denote the $Z$ ($X$) $+1$ eigenstate as $\ket{0}$ ($\ket{+}$) and the $-1$ eigenstate as  $\ket{1}$ ($\ket{-}$). Additionally the controlled phase gate between sites $j,k$ is denoted 
$CZ_{j,k}=\ket{0}\bra{0}_{j}\otimes\mathds{1}_k+\ket{1}\bra{1}_j\otimes Z_k$. The cyclic group of order 2, with elements $\{0,1\}$ under addition modulo 2, is denoted $\mathds{Z}_2$. 

Our work focuses specifically on the symmetry group $\mathds{Z}_2\times\mathds{Z}_2$ with elements $(0,0)$, $(0,1)$, $(1,0)$, and $(1,1)$ under element-wise addition modulo 2. We will sometimes refer to these group elements as $e,x,y,$ and $z$, respectively. We will express an arbitrary group element as $g\in\mathds{Z}_2\times\mathds{Z}_2$ or $(a,b)\in\mathds{Z}_2\times\mathds{Z}_2$ where $a,b\in\{0,1\}$.

\subsection{Review of general 1D SPTOs}
\label{sec:generalSPTO}

We briefly review a general construction of 
 a 1D many-body quantum system under a Hamiltonian $H$ which obeys a global symmetry $G$. On the Hilbert space, the symmetry takes the unitary representation given by $U(g)$ for $g\in G$, which by definition obeys $\comm{U(g)}{H}=0$ $\forall g\in G$. In this paper we consider only finite abelian groups with ``onsite'' representations, where the global symmetries may be decomposed into tensor products of $m$-local blocks which we label by $p$:
\begin{align}
    U(g)=\bigotimes_{p=1}^{n/m} u(g). \label{eq:onsite}
\end{align}

The possible 1D SPT phases for chains with onsite symmetries are classified by elements of the second cohomology group $H^2(G,U(1))$ which correspond to equivalence classes of projective representations of $G$ \cite{Homology}. A projective representation is a symmetry representation $V(g)$ which satisfies group multiplication up to a phase $w(g,h)\in U(1)$: for $g,h\in G$,
\begin{align}
V(g)V(h)=w(g,h)V(gh).
\label{eq:projectiverep}
\end{align}
Physically, this characterization describes how the chain's edge degrees of freedom transform under a bulk symmetry action. This can be seen by acting on the state with a truncated symmetry operator
\begin{align}
    U_{[p,q]}(g)=\bigotimes_{p+1}^{q-1}u(g).
\end{align}
 The operator acts trivially in the bulk but in general does not necessarily commute with $H$ near the boundaries $p+1$ and $q-1$. However, one can turn this operator into a Hamiltonian symmetry by attaching boundary operators $V^R(g), V^L(g)$ onto each edge which remove the corresponding excitations. These are individually exactly the projective representations of the symmetry described earlier. Note, however, that these boundary operators do not necessarily have the same support size $m$ as $u(g)$.
 It follows that
\begin{align}
S_{[p,q]}(g)=V_p^L(g)\otimes U_{[p,q]}(g) \otimes V_q^R(g)\label{eq:SOP}
\end{align}
is a symmetry of the Hamiltonian and acts as $S_{[p,q]}(g)\ket{\phi}=\ket{\phi}$ on a symmetric ground state $\ket{\phi}$ of $H$. This operator is known as a \emph{string order parameter} (SOP) \cite{denNijs,SOPinSpinLattice}.

The projective representations $V^L,V^R$ differ between states, however, their commutation properties are common throughout an SPT phase. Particularly, the \emph{twist phase} $\Omega(g,h)$ \cite{twistphase} is an SPT invariant defined by
\begin{align}
V(g)V(h)=\Omega(g,h)V(h)V(g).
\label{eq:twistphase}
\end{align}
Operationally, all of the $V$'s in Eq. \eqref{eq:twistphase} are chosen to be either $V^L$ or $V^R$.

In this work we consider only boundary operators constructed at the renormalization group (RG) fixed-point state of the SPT phase \cite{QuantumRG,QImeetsQM}, which is the ``paradigmatic'' state of the SPT phase, in that it displays most prominently the behavior of the phase. In this case the boundary operators have the same support size $m$ as $u(g)$ and are related by
 \begin{align}
     V^R(g)V^L(g)=u(g).
     \label{eq:V^RV^L}
 \end{align}
The twist phase in the RG fixed-point state can be probed by the \emph{twisted string order parameter}, created by twisting a string order parameter over a global symmetry operator:
\begin{align}
T_{[p,q]}^{(g,h)}=V_q^R(g)U(h)V_p^L(g)U_{[p,q]}(g).
\label{eq:twistedSOP}
\end{align}
Using Eqs.  \eqref{eq:twistphase} and \eqref{eq:V^RV^L} this is equivalent to $\Omega(g,h)U(h)S_{[p,q]}(g)$, which makes it clear that  $T_{[p,q]}^{(g,h)}\ket{\phi}=\Omega(g,h)\ket{\phi}$ on the fixed-point state \cite{Austin}.  Whenever we mention the SOP and twisted SOP in the rest of the text, we are referring respectively to Eq. \eqref{eq:SOP} and Eq.  \eqref{eq:twistedSOP} constructed with the fixed-point boundary operators.

The importance of the twisted SOP in our work is that we find it may be used to detect SPT properties in \emph{non-symmetric} states. When considered on symmetric pure states, such as SPTO ground states of a Hamiltonian, this metric reduces to a string order parameter (up to the twist phase). However, we are interested in SPT properties that exist in more general, potentially mixed states. We will show that these properties do indeed exist for some mixed states as measured (in part) by expectation values of the twisted SOP, despite the states not belonging strictly to the SPT phase. We elaborate on this connection in Section \ref{sec:trianglegame}.

\subsection{1D cluster SPT phase}
\label{sec:clusterSPTO}

\begin{figure*}[tb]
\includegraphics[width=\textwidth]{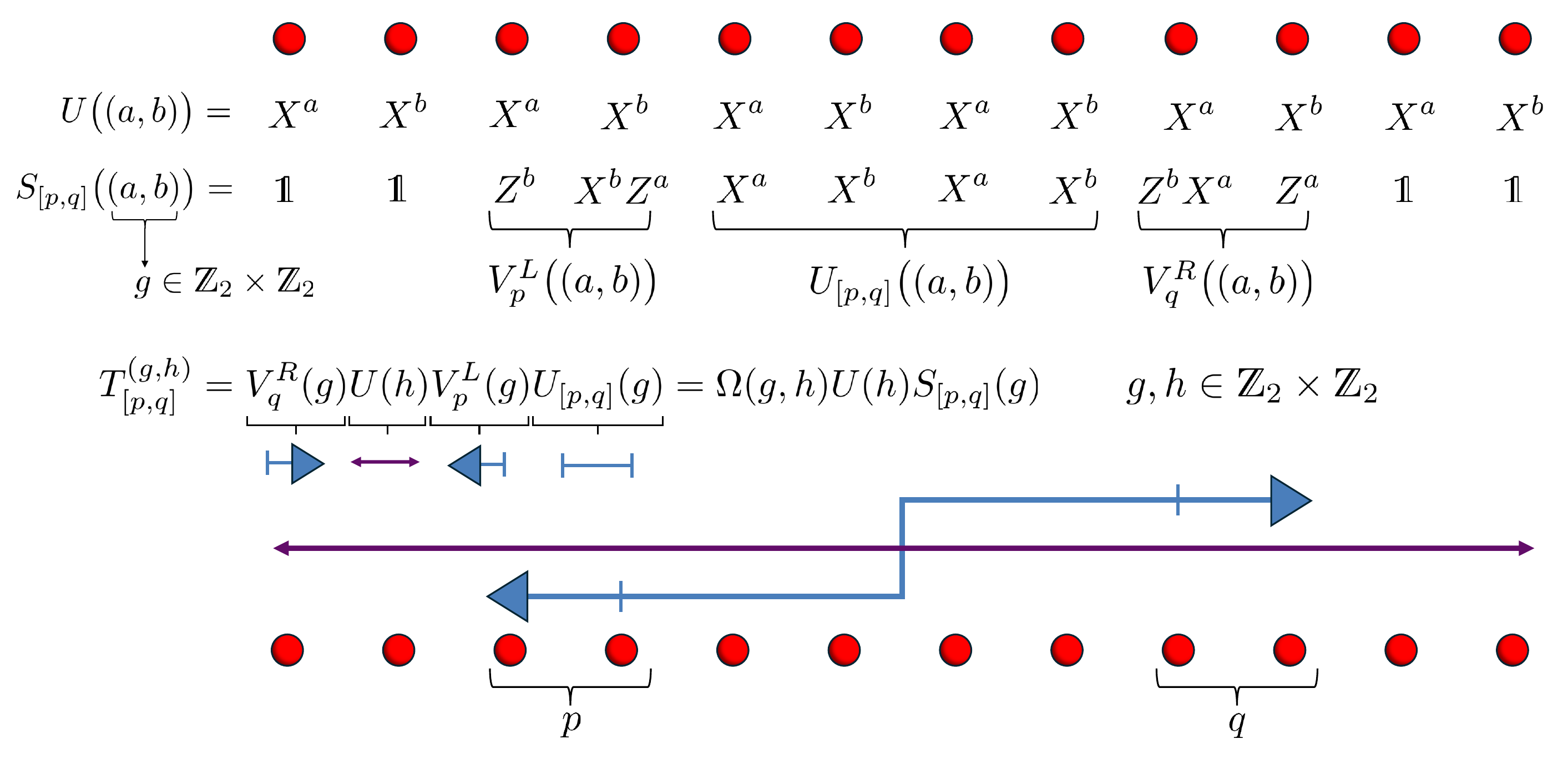}
\caption{Symmetry operators associated with $\mathds{Z}_2\times\mathds{Z}_2$ SPTO on a spin-$1/2$ chain with $n=12$ qubits.  The four $\mathds{Z}_2\times\mathds{Z}_2$ group elements are denoted compactly as $(a,b)$, where $a,b\in\{0,1\}$. The top qubit chain shows explicitly the $\mathds{Z}_2\times\mathds{Z}_2$ symmetry representations and cluster state string order parameters (SOPs). The left and right SOP boundary blocks of SOPs are labeled by $p$ and $q$; the figure illustrates the case of $p=2,q=5$.
The bottom qubit chain shows that the twisted string order parameter is constructed by ``twisting'' a symmetry operator and an SOP, and is equivalent to the product of the two up to the twist phase $\Omega(g,h)$ which measures the commutation properties of projective symmetry representations $V^{L}(g)$ or $V^{R}(g)$.}

\label{fig:SOP}
\end{figure*}

Now we focus on the specific case of a $\mathds{Z}_2\cross\mathds{Z}_2$ global symmetry in a 1D spin-1/2 system with an even number of sites $n$. In this case the representation $U(g)$ of the four $\mathds{Z}_2\times\mathds{Z}_2$ elements $(0,0),(0,1),(1,0),$ and $(1,1)$ can manifest as spin flip symmetry on no sites, all even sites, all odd sites, or all sites, respectively. For group elements $(a,b)\in\mathds{Z}_2\times\mathds{Z}_2$ the symmetry representation takes the form
\begin{align}
    U\big((a,b)\big)=\prod_{j\text{ odd}}X_j^a\prod_{k\text{ even}}X_k^b.
    \label{eq:fullsymmetry}
\end{align}
Following Eq. \eqref{eq:onsite}, the representation decomposes into an onsite operation on two adjacent sites (i.e. $m=2$):
\begin{align}
u\big((a,b)\big)=X^a\otimes X^b.
\label{eq:Z2xZ2symmetry}
\end{align}
From here on $j,k\in\{1,2,...,n\}$ will label individual qubits, whereas $p,q\in\{1,2,...,n/2\}$ will label blocks of two, for reasons that will become clear in Sec. \ref{sec:trianglegame}. The block $p$ then includes qubits $j=2p-1,2p$.

It can be shown that $H^2(\mathds{Z}_2\times\mathds{Z}_2,U(1))\cong\mathds{Z}_2$, meaning that this symmetry admits one trivial and one nontrivial SPT phase. To explore these phases we consider a $\mathds{Z}_2\times\mathds{Z}_2$ invariant Hamiltonian family on periodic boundary conditions described by the real parameters $(J_X,J_{ZZ})$:
\begin{align} 
H=-\frac{\Delta}{2}\sum_{j=1}^n\bigg(Z_{j-1}X_jZ_{j+1}+J_XX_j+J_{ZZ}Z_{j-1}Z_{j+1}\bigg),\label{eq:Hamiltonian}
\end{align}
where $\Delta$ is the energy gap at $(J_X,J_{ZZ})=(0,0)$.
The trivial SPT phase is realized in the large $J_X$ limit, where the model is a trivial paramagnet. In the large $J_{ZZ}$ limit, the model is a decoupled Ising ferromagnet on even and odd sublattices, spontaneously breaking the $\mathds{Z}_2\times\mathds{Z}_2$ symmetry.

Importantly, the nontrivial SPT phase is also witnessed: at $(J_X,J_{ZZ})=(0,0)$ the ground state of the corresponding Hamiltonian is the so-called 1D ``cluster state" \cite{Cluster}, which we denote as $\ket{C_n}$. This is a stabilizer state defined as the joint +1 eigenstate of the pairwise commuting set of Paulis $\{K_j=Z_{j-1}X_jZ_{j+1}|j\in1,2,\hdots,n\}$. The cluster state is a paradigmatic example of 1D $\mathds{Z}_2\times\mathds{Z}_2$ SPTO; it is the canonical resource for 1D MBQC (any single-qubit gate can be simulated by encoding a quantum state on boundaries of $\ket{C_n}$ carved out by $Z$ measurements and measuring adjacent qubits appropriately in the $X$-$Y$ plane) and is created by the constant depth quantum circuit

\begin{align}
\ket{C_n}=\prod_{j=1}^nCZ_{j,j+1}\ket{+}^{\otimes n}. \label{eq:cluster}
\end{align}
Since $\ket{C_n}$ is a stabilizer state it is straightforward to create its string order parameters. The boundary operators take the form
\begin{align} 
V^L\big((a,b)\big)=Z^b\otimes X^bZ^a,\label{eq:vL}\\
V^R\big((a,b)\big)=Z^bX^a\otimes Z^a.\label{eq:vR}
\end{align}
which is easily verified by constructing $S_{[p,q]}(g)$ from Eq. \eqref{eq:SOP} and noticing that it can be written as a product of stabilizers $K_{2p}$ through $K_{2q-1}$, where stabilizers on odd (even) sites are raised to the power of $a$ ($b$) according to the $\mathds{Z}_2\times\mathds{Z}_2$ group element $g=(a,b)$. The SOP thus satisfies the desired trivial action on the cluster state.

The boundary operators form a nontrivial projective representation of the symmetry group as they satisfy group multiplication up to a phase: for $(a,b),(c,d)\in\mathds{Z}_2\times\mathds{Z}_2$ and $E\in\{L,R\}$
\begin{align}
V^E\big((a,b)\big)V^E\big((c,d)\big)=(-1)^{ad}V^E\big((a+c,b+d)\big),
\end{align}
with $w\big((a,b),(c,d)\big)=(-1)^{ad}$ as in Eq. \eqref{eq:projectiverep}.
Furthermore, the twist phase associated with the SPT phase of the cluster state is
\begin{align}
    \Omega\big((a,b),(c,d)\big)=(-1)^{ad-bc}.
    \label{eq:clustertwist}
\end{align}
This is equivalently realized by $V\big((a,b)\big)=X^aZ^b$; the edge degrees of freedom in the cluster state thus transform under an effective Pauli algebra, realizing a symmetry-protected qubit.

Furthermore, the cluster state can be shown to be the RG fixed-point of the nontrivial SPT phase. One can then use Eq. \eqref{eq:twistedSOP} to create a twisted SOP $T_{[p,q]}^{(g,h)}$ whose expectation value in the cluster state is the nontrivial twist phase $\Omega(g,h)$ given by Eq. \eqref{eq:clustertwist}.
Expectation values of the twisted SOPs give a diagnostic of whether a state belongs to the nontrivial SPT phase. As mentioned previously, for pure symmetric states the action of the twisted SOP on the state is equivalent to the action of the SOP up to the twist phase. Much of the content of this paper will focus on extending previous results for these states to more general states by generalizing from SOPs to twisted SOPs. Figure \ref{fig:SOP} shows a visualization of the global symmetry operators, cluster state SOPs, and corresponding twisted SOPs which have been explained in this section.

Recall that the cluster state is the ground state of the Hamiltonian in Eq. \eqref{eq:Hamiltonian} with $(J_X,J_{ZZ})=(0,0)$. Any Hamiltonian that can be adiabatically connected to the cluster state Hamiltonian without breaking the symmetry has a ground state that exists in the same SPT phase as the cluster state. Indeed, Fig. \ref{fig:GSadvantage} shows that expectation values of cluster state string order parameters remain nonzero for many ground states of the Hamiltonian which are close to the cluster state in parameter space, implying that these states live in the nontrivial SPT phase.

\begin{figure}[t]
    \includegraphics[width=0.48\textwidth]{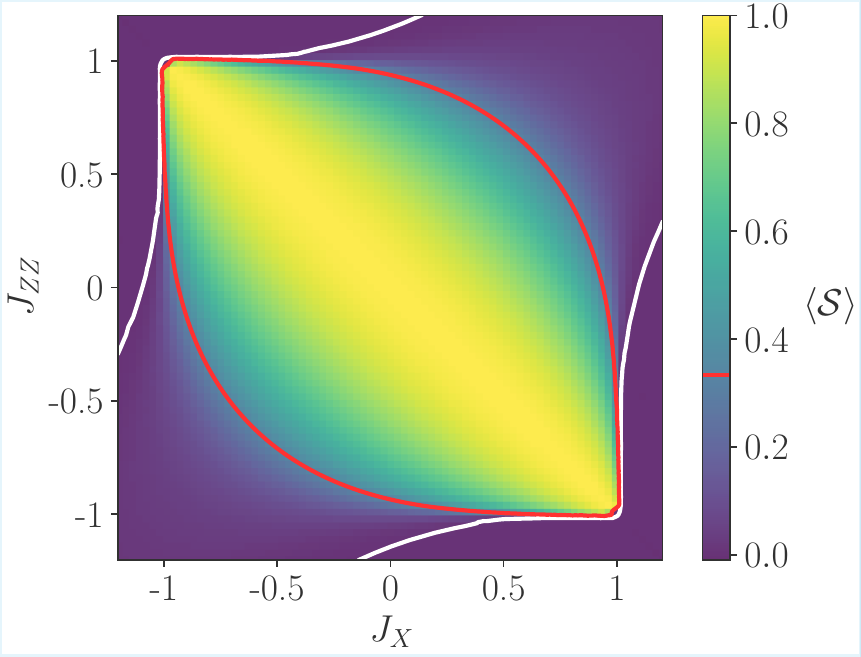}
    \caption{ Nontrivial SPT phase and quantum advantage of the ground states of our Hamiltonian Eq. \eqref{eq:Hamiltonian}. The boundary of the SPT phase is shown in white, witnessed when the expectation value of any one of the three ($g=x,y,$ or $z$) string order parameters, shown in Fig. \ref{fig:SOP}, is nonzero. The heatmap shows the minimum string order parameter expectation value $\ev{\mathcal{S}}$ given by Eq. \eqref{eq:classicalbound}. If this quantity is larger than $1/3$, shown by the red contour, the state can be used to show quantum advantage in the multiplayer contextual triangle game.}
    \label{fig:GSadvantage}
\end{figure}

\section{Contextual triangle game} \label{sec:trianglegame}

\begin{figure*}[t]
\includegraphics[width=\textwidth]{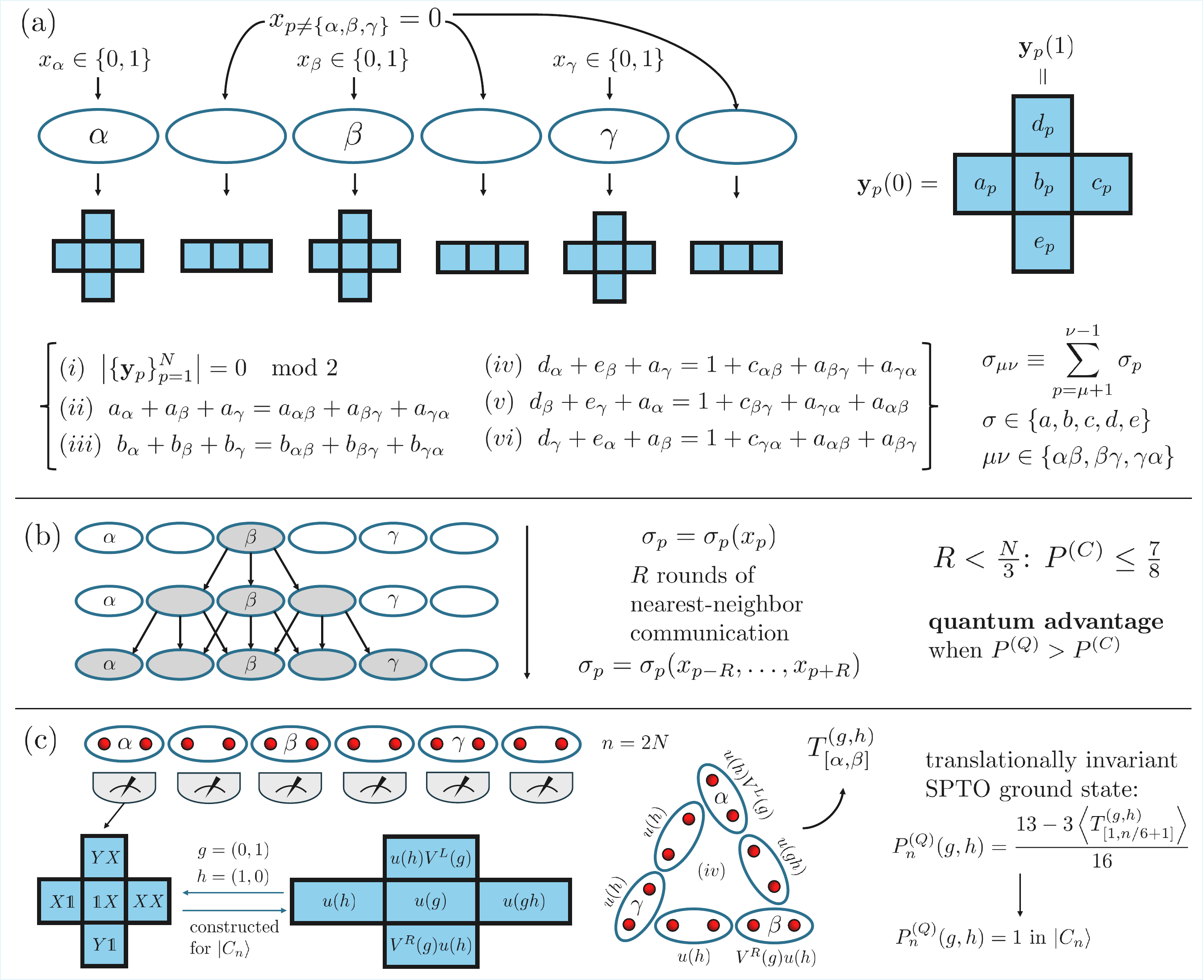}
\caption{
Multiplayer contextual triangle game and definition of quantum advantage. (a) Setup of the game.  Three players $\alpha,\beta,\gamma$, equally spaced within $N$ total players, are each given inputs $\{0,1\}$ uniformly at random, while all other players are given input $0$ and thus only ever output $(a_p,b_p,c_p)$. We can imagine players $\alpha,\beta,\gamma$ at the corners of a triangle. For each triangle edge, denoted $\mu\nu=\{\mu+1,\hdots,\nu-1\}$ for $\mu\nu\in\{\alpha\beta,\beta\gamma,\gamma\alpha\}$ and each variable $\sigma\in\{a,b,c\}$, an ``edge string'' $\sigma_{\mu\nu}$ is defined as the sum of all variables of type $\sigma$ on the edge $\mu\nu$. Winning conditions of the game are shown in the brackets and are analogous to the three-player game conditions Eqs. \eqref{eq:win1}-\eqref{eq:win4} up to addition of edge strings from the remaining $N-3$ players. (b) Classical strategies. We consider strategies that consist of $R$ rounds of nearest-neighbor communication where players can share their input information. When $R$ reaches $N/3$ the players $\alpha,\beta,\gamma$ are able to share input information and win with unit probability. However, when $R<N/3$ all classical strategies are bounded by $P^{(C)}\le7/8$. Our notion of quantum advantage is defined by a multiplayer triangle game strategy which uses quantum resources to achieve $P^{(Q)}>7/8$. (c) Quantum strategy. Each player $p$ holds 2 adjacent qubits $2p-1$, $2p$ of an $n=2N$-qubit state $\ket{\phi}$ (which we take to be translationally invariant) and outputs measurement outcomes of the cluster state symmetry operators shown in the table. We refer to this measurement strategy as the \emph{cluster state measurement strategy}. The global measurements corresponding to win conditions $(ii)$ and $(iii)$ form global symmetries and those corresponding to $(iv)$-$(vi)$ form twisted string order parameters. An SPTO ground state is symmetric so that the winning probability only depends on its expectation value of the cluster state twisted SOP. Correspondingly, the cluster state $\ket{C_n}$ wins with unit probability.}
\label{fig:multiplayertrianglegame}
\end{figure*}

Entanglement can be harnessed in a task called a nonlocal game in which players can gain an advantage by using quantum resources. In a nonlocal game, players $p\in\{1,2,...,N\}$ are given input bits $\vb{x}_p$ and asked to produce output bits $\vb{y}_p$ that satisfy a set of global linear constraints. While classical communication is prohibited between spatially separated players in the conventional setting of nonlocal games, we will generalize quantum advantage to a {\em computational} setting, by allowing limited classical communication later.

A classical strategy consists of each player assigning an output that is a deterministic or probabilistic function of inputs accessible within their lightcone. However, the richer algebraic structure of quantum mechanics allows for more powerful strategies. In particular, certain quantum states exhibit state-dependent contextuality, meaning that the outcome of measuring a given observable on that state can depend on which other jointly measurable (mutually commuting) observables are measured alongside it. Note that this arises only for specific states and observables rather than being a universal feature of all quantum states (another phenomenon known as state-\emph{independent} contextuality). A quantum strategy allows players to use contextual measurement outcomes of such a quantum state. When a quantum strategy for a nonlocal game wins with probability greater than any allowed classical strategy, we say the state gives quantum advantage in the game.

We consider a contextual game that is inherently connected to SPTO. Contextuality witnesses connected to other phases of matter have also been proposed \cite{Greenberger1990,Bulchandani2,Bulchandani3,ToricCodeNonlocalGame,Braidingforthewin}, however, SPT phases are particularly interesting in the near-term as their states can be prepared by constant-depth local quantum circuits without nonlocal classical communication of intermediate measurement outcomes. Furthermore, there is a known comparison between our nonlocal game and the so-called ``hidden linear function'' problem of Ref. \cite{ShallowCircuits}, which was originally used to show an unconditional separation between constant-depth local quantum circuits and constant-depth classical circuits of bounded fan-in gates. Our game thus has an inherent connection to computational quantum advantage. See Ref. \cite{Austin} and its supplemental material for more detail.

\subsection{Definition and quantum advantage}
We review the ``contextual triangle game'' construction from Ref. \cite{Austin} that was based on prior works \cite{modelingPaulimeas,ShallowCircuits,Bravyi2020}. In the simplest version of the game, three spatially separated players $p\in\{1,2,3\}$ are each given an input bit $x_p\in\{0,1\}$ drawn uniformly at random. Each player is asked to generate a three-bit output $\vb{y}_p\in\{0,1\}^3$ without classically communicating. Players' output depends on their input in the sense that $\vb{y}_p=\vb{y}_p(x_p)$: if player $p$ receives input $x_p=0$, they assign their three-bit output to the variables $\vb{y}_p(0)=(a_p,b_p,c_p)$, whereas if they receive input $x_p=1$ they assign it to variables $\vb{y}_p(1)=(d_p,b_p,e_p)$. Note that the variable $b_p$ is included in all outputs but other variables are input-dependent.
The players win the game if their joint output $\vb{y}=(\vb{y}_1,\vb{y}_2,\vb{y}_3)$ satisfies the following global linear constraints:
\begin{subequations} \label{eq:main}
\begin{align}
    &\abs{\vb{y}} = 0, \label{eq:win1} \\
    &a_1 + a_2 + a_3 = 0, \label{eq:win2} \\
    &b_1 + b_2 + b_3 = 0, \label{eq:win3} \\
    &d_p + e_{p+1} + a_{p+2} = 1 \quad \forall p, \label{eq:win4}
\end{align}
\end{subequations}
where $\abs{\cdot}$ denotes the Hamming weight and addition of the variables is done modulo 2. Additionally, the notation of players $p$ is cyclic. The winning conditions Eqs. \eqref{eq:win2} and \eqref{eq:win4} are only applied when certain inputs are given, whereas Eqs. \eqref{eq:win1} and \eqref{eq:win3} are applied to the output of all eight possible three-bit input strings. We will denote the average winning probability over all eight inputs as $P$, using superscripts $(C)$ or $(Q)$ to denote whether a classical or quantum resource is used.

A classical strategy assigns outputs as a linear boolean function of inputs. For example, a non-communicating classical strategy assigns $\sigma_p=\sigma_p(x_p)$  for $\sigma\in\{a,b,c,d,e\}$. Adding the winning conditions Eqs. \eqref{eq:win2} and \eqref{eq:win4} $\forall p$ with Eq. \eqref{eq:win3} for input $(1,1,1)$ gives $\sum_{p=1}^3(d_p+b_p(1)+e_p)=1$, implying the global output $\vb{y}(1,1,1)$ cannot satisfy Eq. \eqref{eq:win1}  (note we've written only $b_p$ as a function of the input because $d_p$ and $e_p$ only occur with input $x_p=1$). Thus any non-communicating classical strategy is constrained by the average winning probability $P^{(C)}\le7/8$, assuming all eight possible inputs are chosen uniformly at random.
While we focus on the $7/8$ bound in this paper, we note that the bound can be lowered to $4/5$ if only a certain subset of eight inputs (namely certain 5 inputs) are drawn randomly (see Ref.~\cite{AustinExperiment}).

On the other hand, if players are given the six-qubit cyclic cluster state $\ket{C_6}$, they can win the game with unit probability. Each player $p$ holds the two qubits $2p-1, 2p$ and measures two-qubit Pauli observables according to their input: for input $x_p=0$, the player outputs $ (a_p,b_p,c_p)$ according to measurement outcomes of $(X\mathds{1},\mathds{1}X,XX)$, whereas for $x_p=1$ the player outputs $(d_p,b_p,e_p)$ according to the outcomes of $(YX,\mathds{1}X,Y\mathds{1})$. Operators within each measurement context pairwise commute and thus allow simultaneous measurement outcomes, and additionally multiply to the identity, constraining the output string to have even parity and satisfying Eq. \eqref{eq:win1}. The other win conditions are satisfied as the global measurements corresponding to Eqs. \eqref{eq:win2} and \eqref{eq:win3} form cluster state stabilizers and the global measurement corresponding to Eq. \eqref{eq:win4} forms the negative of a stabilizer.

This game can be extended to a multipartite setting as explained in Fig. \ref{fig:multiplayertrianglegame} and Ref.~\cite{Austin}. Three players $\alpha,\beta,\gamma$ of $N=n/2$ total players are arbitrarily chosen to play the three-player triangle game and all other players are given input $0$. By sharing the $n$-qubit cyclic cluster state $\ket{C_n}$, players still maintain quantum advantage in the sense that even any classical strategy with geometrically restricted communication cannot win with probability greater than $7/8$ until the rounds of communication become of the order of the side length of the triangle $N/3=n/6$.

The multiplayer triangle game may be reframed as a computational task where a device is asked to solve a relation problem specified by the winning conditions of the game (see supplementary material of Ref. \cite{Austin}). In this sense the classical strategies we consider, which allow $R$ rounds of nearest-neighbor communication, correspond to depth-$R$ classical circuits consisting only of gates with nearest-neighbor (three-input) fan-in (Fig. \ref{fig:multiplayertrianglegame}(b)). As mentioned, the classical circuit is only able to solve the problem with unit probability when $R=\mathcal{O}(n/6)$. In contrast, the perfect quantum strategy requires only a constant-depth nearest-neighbor quantum circuit which prepares and measures $\ket{C_n}$, revealing an unconditional separation between the power of constant-depth quantum circuits and sublinear-depth classical circuits under the same gate connectivity in 1D. An embedding to a 2D square grid allows one to remove geometric constraints in the classical strategy and to grant more power to classical circuits, as originally done in Ref.~\cite{ShallowCircuits}.

For our purposes we restrict to the situation where players $\alpha,\beta,$ and $\gamma$ are equally spaced as this is the most difficult for classical strategies. Additionally, we consider classical strategies that only allow for $R<N/3$ rounds of nearest-neighbor classical communication. Players then obtain quantum advantage if they use a quantum strategy to obtain $P^{(Q)}>7/8$. When comparing against the computational power of classical circuits later, we will restrict this so that the quantum resources must be preparable in constant depth. To emphasize that this probability in the many-body game depends on the size $n$ of the resource state, we will write the probability as $P_n^{(Q)}$.

The quantum strategy of the multiplayer triangle game relies on players locally performing two-qubit measurements which globally form the SPTO operators of the cluster state shown in Fig. \ref{fig:SOP}. In particular, for input $x_p=0$, the player outputs $ (a_p,b_p,c_p)$ according to measurement outcomes of the two-qubit operators $(u(h),u(g),u(gh))$, whereas for $x_p=1$ the player outputs $(d_p,b_p,e_p)$ according to the outcomes of $(u(h)V^L(g),u(g),V^R(g)u(h))$. We will call this the ``cluster state measurement strategy''.  Note that there are several choices of measurement settings corresponding to different choices of group elements $g,h\in\mathds{Z}_2\times\mathds{Z}_2$. This is a natural generalization of the three-player case: taking $n=6$ and $g=(0,1)=x$, $h=(1,0)=y$ reconstructs the three-player strategy. 

This notation makes it clear that the measurement outcomes of SPTO operators are actually what control the winning probabilities. Specifically, winning conditions which require even parity ($(ii)$ and $(iii)$ in Fig. \ref{fig:multiplayertrianglegame}) correspond to measurements of global symmetries of the cluster state, whereas winning conditions which require odd parity ($(iv)-(vi)$) correspond to measurements of its' twisted string order parameters.
Recall that expectation values of cluster state twisted SOPs give a diagnostic on whether a state is in the nontrivial $\mathds{Z}_2\times\mathds{Z}_2$ SPT phase, implying that states in the same phase as $\ket{C_n}$ have the potential to allow for quantum advantage in the triangle game using the cluster state measurement strategy. We now elaborate on this fact.

\subsection{SPTO pure-state resources}

It was shown in Ref.~\cite{Austin} that many $n$-qubit pure states in the nontrivial (cluster) $\mathds{Z}_2\times\mathds{Z}_2$ SPT phase also admit quantum advantage in the $n/2$-player triangle game, in the sense that $P_n^{(Q)}>7/8$ using only a constant-depth quantum circuit and the cluster state measurement strategy. Particularly, if the expectation value of all three nontrivial cluster state SOPs (Eq. \eqref{eq:SOP} and Fig. \ref{fig:SOP}) are greater than 1/3 in the state $\ket{\phi}$, then it has a winning probability $P^{(Q)}_n>7/8$:
\begin{align}
\ev{\mathcal{S}}\equiv\min_{g\in\mathds{Z}_2\cross\mathds{Z}_2}\bra{\phi}S_{[p,q]}(g)\ket{\phi}>\frac{1}{3}\Rightarrow{}P_n^{(Q)}>\frac{7}{8}.
\label{eq:classicalbound}
\end{align}
The two-site blocks $p,q$ must be chosen such that $S_{[p,q]}(g)$ spans the distance between the qubits held by adjacent players $\alpha,\beta,\gamma$, which in our setting is $N/3=n/6$. However, string order parameters remain nonzero in the thermodynamic limit for SPT ground states, so that a sufficiently high SOP implies quantum advantage for all multiplayer triangle game instances. In Ref.~\cite{Austin} a full proof of Eq. \eqref{eq:classicalbound} is given.

We apply this theorem to the SPT-ordered ground states of the Hamiltonian Eq. \eqref{eq:Hamiltonian}: the phase diagram in Fig. \ref{fig:GSadvantage} shows that a large portion of ground states in the 1D cluster SPT phase provide quantum advantage in the contextual triangle game. In the next section we prove a general bound regarding mixed states in which this equation emerges as the pure-state limit when the state is symmetric.

Note that only a part of the SPT phase may be used as a contextuality witness, as here we utilize the cluster-state measurement strategy independent of the SPT states. It remains an open question whether nonzero SOP expectation values imply the presence of contextuality in a similar fashion with Ref.~\cite{Raussendorf2023measurementbased}; 
The setup of the latter allows for communication between different sites of the chain and is thus not identical to our nonlocal game setting.

\subsection{SPTO mixed-state resources}
\label{sec:mixedresources}

A mixed state rather than a pure state may be used in the contextual triangle game as a quantum resource. However, unlike the SPT-ordered ground states of Hamiltonian Eq. \eqref{eq:Hamiltonian}, arbitrary mixed states are generally not symmetric, i.e. $\ev{U(g)}\ne 1$. The quantum advantage is extended to mixed states as follows.

\newtheorem{thm}{Theorem}
\begin{thm}\label{thm:1}
Let $\rho$ denote a translationally invariant mixed state on a spin-1/2 chain of even length $n$.
Using the cluster state measurement strategy on $\rho$, the winning probability of the $n/2$-player contextual triangle game for all possible random inputs and equidistant players $\alpha,\beta,\gamma$ is denoted as $P_n^{(Q)}(g,h)$
with nontrivial group elements $g,h\in \mathds{Z}_2\times\mathds{Z}_2$. Then
\begin{equation}
\begin{aligned}
P_n^{(Q)}(g,h)=&\frac{1}{32}\bigg[12\Big(1+\ev{U(g)}\Big)+\ev{U(h)}+\ev{U(gh)}\\
&-3\Big(\ev{T_{[1,n/6+1]}^{(g,h)}}+\ev{U(g)T_{[1,n/6+1]}^{(g,h)}}\Big)\bigg] ,\label{eq:pwin}
\end{aligned}
\end{equation}
using the symmetry operators and twisted string order parameters defined in Section \ref{sec:SPTO}, and the overall performance of quantum players is determined by 
\begin{equation}
P_n^{(Q,min)}\equiv\min_{g\ne h\in\mathds{Z}_2\times\mathds{Z}_2}P_n^{(Q)}(g,h).
\end{equation}
In particular, a nontrivial SPTO state is necessary to pass the value $P_n^{(Q)}(g,h)>13/16$.
\end{thm}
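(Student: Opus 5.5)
The plan is to write the winning probability for each of the eight inputs as the expectation value of a product of commuting projectors, and then average over inputs.

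\textbf{Per-input projectors.} Fix nontrivial $g\neq h$ and an input string $(x_\alpha,x_\beta,x_\gamma)$. Every player measures a commuting context of two-qubit operators.
\begin{itemize}
\item For $x_p=0$ the context is $(u(h),u(g),u(gh))$. Its product is $\mathds{1}$ because $u$ is an honest, abelian representation.
\item For $x_p=1$ the context is $(u(h)V^L(g),u(g),V^R(g)u(h))$. Its product reduces to $\mathds{1}$ via Eq.~\eqref{eq:V^RV^L} and the explicit forms Eqs.~\eqref{eq:vL}--\eqref{eq:vR}; I will check the sign once by hand.
\end{itemize}
Hence condition (i) holds deterministically. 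The remaining constraints for that input, namely (ii)--(iii) and whichever of (iv)--(vi) are active, are each the statement that a global product $O_k$ of the measured local operators has eigenvalue $s_k=\pm1$. These $O_k$ are products of commuting local operators taken from a single joint context, so they mutually commute. The winning probability for that input is therefore
\begin{equation*}
\operatorname{tr}\Big[\rho\prod_k \tfrac{1}{2}\big(\mathds{1}+s_kO_k\big)\Big].
\end{equation*}

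\textbf{Identifying the global operators.} Reading off Fig.~\ref{fig:multiplayertrianglegame}:
\begin{itemize}
\item The $b$-constraint gives $O=U(g)$, required with $s=+1$.
\item The $a$-constraint, active when some player has input $0$, gives $U(h)$.
\item Each odd-parity constraint $d_\mu+e_\nu+a_{\cdots}=1$ gives a twisted SOP $T^{(g,h)}$ supported between adjacent corners $\mu,\nu$, required with $s=-1$.
\end{itemize}
Expanding each product of projectors gives a signed sum over the abelian group generated by these operators. That group contains $\mathds{1}$, $U(g)$, $U(h)$, $U(gh)$, the $T$'s and $U(g)T$'s, and products of two or three twisted SOPs.

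The key simplifications use $V^EV^E\propto V^E$ and $V^RV^L=u$. With these, a product of twisted SOPs on different edges either collapses to a global symmetry operator, for instance the product of all three twisted strings for input $(1,1,1)$ reduces to $\pm U(gh)$ or $\pm U(h)$, or it reduces to a single twisted string times $U(g)$. Translation invariance together with equidistance of $\alpha,\beta,\gamma$ then identifies every surviving twisted string with $\langle T^{(g,h)}_{[1,n/6+1]}\rangle$ or $\langle U(g)T^{(g,h)}_{[1,n/6+1]}\rangle$.

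\textbf{Main obstacle: the sign bookkeeping.} The hardest step is fixing the sign and coefficient of each collapsed product, especially for inputs with two or three $1$'s, where twisted strings overlap at shared corners and the projective phases $w$ and $\Omega$ from Eq.~\eqref{eq:clustertwist} enter. To control this I will organise the eight inputs by Hamming weight:
\begin{itemize}
\item weight $0$: one input;
\item weight $1$: three inputs, related by translation;
\item weight $2$: three inputs;
\item weight $3$: one input.
\end{itemize}
For each class I will compute the expansion once, using the explicit Pauli forms of $V^{L,R}$, and then sum with multiplicities and divide by $8$. The resulting constant and the coefficients of $\langle U(g)\rangle,\langle U(h)\rangle,\langle U(gh)\rangle,\langle T\rangle,\langle U(g)T\rangle$ should reproduce the $\tfrac{1}{32}[\cdots]$ of Eq.~\eqref{eq:pwin}. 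As a sanity check, on $\ket{C_n}$ we have $\langle U\rangle=1$, $\langle T\rangle=\Omega(g,h)=-1$ and $\langle U(g)T\rangle=-1$, so the formula must give $P=1$.

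\textbf{The $13/16$ claim.} All expectation values lie in $[-1,1]$, so the non-twisted part of Eq.~\eqref{eq:pwin} is at most $(24+1+1)/32=13/16$. Exceeding $13/16$ therefore forces $\langle T^{(g,h)}\rangle+\langle U(g)T^{(g,h)}\rangle<0$.

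For states in the trivial phase, the fixed-point twist phase is $\Omega=+1$, so $T=U(h)S$ has nonnegative expectation value on the relevant states. I will argue this by writing $\rho$ as a mixture of fixed-point-like trivial states or, more safely, by showing that a negative twisted-SOP signal is exactly the signature of $\Omega(g,h)=-1$ in Eq.~\eqref{eq:clustertwist}. It follows that passing $13/16$ requires nontrivial SPTO. This last step is the one whose rigour I am least sure of; I would at minimum state it as \emph{negative twisted SOP is necessary}.
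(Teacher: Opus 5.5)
You follow the paper's method: write each input's winning probability as $\operatorname{tr}[\rho\prod_k\tfrac12(\mathds{1}+s_kO_k)]$, average over the eight inputs, and use translation invariance with equidistant corners to identify every twisted string with $T^{(g,h)}_{[1,n/6+1]}$. The gap is that you have misassigned which winning conditions are active for each input, and the ``main obstacle'' you identify comes from that error.

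Condition (ii) is checked only at input $(0,0,0)$, not whenever some player has input $0$. A corner player with input $1$ outputs $(d_p,b_p,e_p)$, not $a_p$, so the sum of all $a$'s is unavailable. In your own terms, $U(h)$ needs the factor $u(h)$ on every block, whereas the input-$1$ context is $(u(h)V^L(g),u(g),V^R(g)u(h))$.

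Likewise, an odd-parity condition $d_p+e_{p+1}+a_{p+2}=1$ (plus edge strings) requires $x_p=x_{p+1}=1$ and $x_{p+2}=0$. So each weight-two input activates exactly one twisted SOP, on the edge joining its two $1$-corners. The input $(1,1,1)$ activates none of (iv)--(vi), only (i) and (iii). The product of all three odd-parity constraints belongs to the classical impossibility argument, which adds constraints from \emph{different} inputs. It is not checked at any single input, so products of two or three twisted SOPs never occur.

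With the correct assignment there is no sign bookkeeping:
\begin{itemize}
\item the four odd-weight inputs contribute $P(U(g)=+1)$;
\item the input $(0,0,0)$ contributes $P(U(g)=+1,U(h)=+1)$;
\item the three weight-two inputs contribute $P(U(g)=+1,T^{(g,h)}_{[1,n/6+1]}=-1)$.
\end{itemize}
Hence $P_n^{(Q)}(g,h)=\tfrac18[4P(U(g)=+1)+P(U(g)=+1,U(h)=+1)+3P(U(g)=+1,T=-1)]$. The one- and two-observable formulas for commuting $\pm1$-valued Paulis then give Eq.~\eqref{eq:pwin} at once. Carried out as you describe, the expansion would instead pick up $\langle U(h)\rangle$ and $\langle U(gh)\rangle$ terms from the weight-one inputs and $\langle U(h)T\rangle$-type terms from the weight-two inputs, so the coefficients could not match.

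Your $13/16$ argument is correct: the non-twisted part is at most $26/32$, so exceeding $13/16$ forces $\langle T\rangle+\langle U(g)T\rangle<0$. This is more explicit than the paper's proof, which does not address that sentence. The paper's link from a negative twisted SOP to nontrivial SPTO is likewise interpretive rather than proved.
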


\begin{proof} We use the same proof as Theorem 2 of Ref. \cite{Austin} but do not assume the state is pure or symmetric. By definition, $P_n^{(Q)}(g,h)$ is the average winning probability over all triangle game inputs using the cluster state measurement strategy. For the four inputs where $x_1 +x_2 +x_3=1\mod2$, only winning condition $(iii)$ in Fig. \ref{fig:multiplayertrianglegame} is applied (besides $(i)$ which is automatically satisfied using the cluster state measurement strategy), so that globally $U(g)$ is measured. The winning condition $(iii)$ is satisfied only when the measurement outcome of $U(g)$ is $+1$.

For the input $(0,0,0)$, both $(ii)$ and $(iii)$ are applied, which are satisfied only when both $U(h)$ and $U(g)$ have measurement outcome $+1$. Finally for the remaining inputs $(0,1,1),(1,0,1), (1,1,0)$,  winning conditions $(iii)$ and one of $(iv)-(vi)$ are applied, so that $U(g)$ and $T_{[p,q]}^{(g,h)}$ are measured for some $p,q\in\{\alpha,\beta,\gamma\}$ depending on the input. The winning conditions are satisfied when they have measurement outcomes of $+1$ and $-1$, respectively. 
Since $\rho$ is translationally invariant only the value $q-p$ affects the expectation value of $T_{[p,q]}^{(g,h)}$ for fixed $(g,h)$. In our setting of equidistant players, $q-p=N/3=n/6$ so that all twisted SOP expectation values are equivalent to $\ev{T_{[1,n/6+1]}^{(g,h)}}$.

Let $P(\mathcal{O}=a)$ denote the probability that measurement of the operator $\mathcal{O}$ in the state $\rho$ results in eigenvalue $a$. Averaging over all inputs, the probability of winning using this strategy is then
\begin{widetext}
\begin{align}
    P_n^{(Q)}(g,h)&=\frac{1}{8}\bigg[4P\big(U(g)=+1\big)+P\big(U(g)=+1,U(h)=+1\big)+3P\big(U(g)=+1,T_{[1,n/6+1]}^{(g,h)}=-1\big)\bigg]\\
    &=\frac{1}{32}\bigg[12\Big(1+\ev{U(g)}\Big)+\ev{U(h)}+\ev{U(gh)}-3\Big(\ev{T_{[1,n/6+1]}^{(g,h)}}+\ev{U(g)T_{[1,n/6+1]}^{(g,h)}}\Big)\bigg], \label{eq:pwinproof}
\end{align}
\end{widetext}
where $\ev{\mathcal{O}}=\tr(\mathcal{O}\rho)$. In the second line we used the identities $P(\mathcal{O}=a)=(1+a\ev{\mathcal{O}})/2$ and $P(\mathcal{O}_1=a,\mathcal{O}_2=b)=(1+a\ev{\mathcal{O}_1}+b\ev{\mathcal{O}_2}+ab\ev{\mathcal{O}_1\mathcal{O}_2})/4$ since all symmetry and boundary operators of the cluster state are jointly-measurable Paulis with eigenvalues $a,b\in\{\pm1\}$.
\end{proof}

In the case that $\rho$ is pure and symmetric, $\ev{U(g)}=1$ and $\ev{T_{[p,q]}^{(g,h)}}=\ev{\Omega(g,h)U(h)S_{[p,q]}(g)}=-\ev{S_{[p,q]}(g)}$ for distinct nontrivial choices of $g$ and $h$, reducing Eq. \eqref{eq:pwin} to the SOP condition given by Eq. \eqref{eq:classicalbound}. Note that the proof makes no explicit mention of mixed states but only takes into account the potential non-symmetric nature of the state. 
This theorem leads to the following corollary which connects the performance of the mixed state $\rho$ in the multiplayer triangle game to quantum \emph{computational} advantage.

\newtheorem{cor}{Corollary}
\begin{cor}\label{cor:1}
If the mixed state $\rho$ of Theorem~\ref{thm:1} is preparable by a nearest-neighbor quantum circuit in constant depth, a quantum computation that solves the $n/2$-player contextual triangle game with $P_n^{(Q,min)}$ greater than $7/8$ proves computational advantage, in that any classical computation using geometrically-restricted fan-in gates with bounded inputs from the nearest neighboring players needs a depth growing linearly as $n/6$ to match the winning probability.
The computational advantage is scalable asymptotically for SPTO pure states but may not be scalable asymptotically for mixed states.

\end{cor}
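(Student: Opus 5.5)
The corollary follows by combining Theorem~\ref{thm:1} with the classical lower bound of Ref.~\cite{Austin}. I will treat the two sentences of the corollary separately: first computational advantage at a fixed size $n$, then its scalability.

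\emph{Advantage at fixed $n$.}
I will view the multiplayer triangle game as a relation problem, as in the supplementary material of Ref.~\cite{Austin}.
\begin{itemize}
\item[(1)] On the quantum side, the device prepares $\rho$ with a constant-depth nearest-neighbor circuit. Each player then makes a single round of two-qubit Pauli measurements, namely the cluster state measurement strategy. The whole procedure is therefore a constant-depth local quantum computation.
\item[(2)] By Theorem~\ref{thm:1}, its winning probability for the worst choice of $(g,h)$ is exactly $P_n^{(Q,min)}$. This holds for every allowed choice of measurement settings.
\item[(3)] On the classical side, I will invoke the lightcone argument behind Fig.~\ref{fig:multiplayertrianglegame}(b). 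A depth-$R$ circuit of nearest-neighbor, bounded fan-in gates lets each player's output depend only on inputs within distance $R$.
\item[(4)] If $R<N/3=n/6$, no output can depend on two of the inputs $x_\alpha,x_\beta,x_\gamma$ at once. The outputs then reduce to non-communicating local functions, up to edge-string terms that can be absorbed into them. The parity contradiction derived after Eqs.~\eqref{eq:win1}--\eqref{eq:win4} then forces failure on the input $(1,1,1)$. Randomized strategies are convex mixtures of deterministic ones, so the average winning probability is at most $7/8$.
\item[(5)] Hence $P_n^{(Q,min)}>7/8$ implies that matching it classically requires $R\ge n/6$, which is linear depth.
\end{itemize}

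\emph{Scalability.}
For pure SPTO states I will use the remark after Theorem~\ref{thm:1}. For a symmetric pure state, Eq.~\eqref{eq:pwin} reduces to the SOP criterion Eq.~\eqref{eq:classicalbound}. Ground-state SOPs stay bounded away from zero as $n\to\infty$ in the SPT phase, so if $\ev{\mathcal{S}}>1/3$ the advantage holds uniformly in $n$.

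For mixed states the corollary only claims that scalability \emph{may} fail, so an example suffices. I will take the thermal cluster state at temperature $T>0$. There $\ev{U(g)}$ is a product over $\sim n$ sites of factors $\tanh(\beta\Delta)$, because $U(g)$ is a product of stabilizers $K_j$. It therefore decays exponentially in $n$. Likewise $\ev{T^{(g,h)}}$ decays, since it involves $\sim n$ stabilizers. Substituting into Eq.~\eqref{eq:pwin} gives $P_n^{(Q)}\to 12/32<7/8$. So advantage at a fixed $T>0$ is lost for large $n$, consistent with the absence of finite-temperature SPT order noted in Ref.~\cite{Roberts}.

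\emph{Main obstacle.}
The hardest step is making (4) rigorous. One must check that arbitrary bounded fan-in gates with $R<n/6$ cannot correlate the three corner players' inputs, even though non-corner players' outputs enter the edge strings. My plan is to absorb each edge string into the output of the nearest corner whose lightcone contains it. This reduces the game to the three-player non-communicating game, as in Ref.~\cite{Austin}.
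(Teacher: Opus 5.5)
Your route is essentially the paper's. The paper gives no separate proof; it combines Theorem~\ref{thm:1}, the lightcone bound of Ref.~\cite{Austin} summarized in Fig.~\ref{fig:multiplayertrianglegame}(b), the pure-state reduction to Eq.~\eqref{eq:classicalbound}, and the thermal cluster state with finite $n_c$. Your scalability part is fine, with two small corrections. The factors are $\tanh(\beta\Delta/2)$, not $\tanh(\beta\Delta)$. You should also note that the thermal cluster state is a cluster state followed by independent single-qubit dephasing, hence constant-depth preparable, so it really is an instance of the corollary's hypothesis.

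\textbf{The gap is step (4).} For $R<N/3$, only the \emph{corners} are out of each other's reach. A non-corner player near the midpoint of edge $\alpha\beta$ is within distance $\lceil N/6\rceil$ of both $\alpha$ and $\beta$. Once $R\ge\lceil N/6\rceil$, its outputs can be arbitrary functions of $(x_\alpha,x_\beta)$, and your absorption into a single corner fails.

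This breaks the bound itself, not just your proof. Let that midpoint player output $(a,b,c)=(x_\alpha x_\beta,0,x_\alpha x_\beta)$, let $\gamma$ output $(d,b,e)=(1,0,1)$ on input $1$, and set every other variable to $0$. Then:
\begin{itemize}
\item all outputs have even parity and all $b$'s vanish, so (i) and (iii) hold on every input;
\item the $a$'s sum to $0$ on $(0,0,0)$, so (ii) holds;
\item each twisted-SOP condition evaluates to $1$: on $(1,1,0)$ through $c_{\alpha\beta}=1$, and on the other two through $d_\gamma$ or $e_\gamma$.
\end{itemize}
So this classical strategy wins all eight inputs with only $\lceil N/6\rceil$ rounds.

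More generally, sum (ii) at $(0,0,0)$, the three twisted-SOP conditions, and (iii) and (i) at $(1,1,1)$. Everything cancels except $\sum_E\sum_{u,v\in\{0,1\}}A_E(u,v)=1$. Here $A_E$ is the $a$-edge string of edge $E$, viewed as a function of its two corner inputs. A contradiction is forced whenever every $A_E$ is affine. That holds when each non-corner player sees at most one corner, i.e.\ $R<N/6=n/12$.

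In that regime your absorption argument works and gives a classical depth lower bound of $n/12$. This is still linear in $n$, which is what the corollary needs qualitatively. But the constant $n/6$ cannot be proved this way and is off by a factor of two. You inherited it from the $R<N/3$ claim in Fig.~\ref{fig:multiplayertrianglegame}(b).
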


\section{Demonstrations of noisy quantum advantage}\label{sec:results}

We are interested in how the minimum winning probability $P_n^{(Q,min)}$ changes for mixed states generated by noisy preparation of the SPT ground states of Eq. \eqref{eq:Hamiltonian}. Gaining insight into this question is difficult as (i) NISQ devices are exposed to complicated noise that can vary between devices, and (ii) even if the exact noise model was known, an exact classical simulation of the $2^{n} \times 2^{n}$-dimensional mixed quantum state would be computationally unattainable for reasonable $n$. To counter the first obstacle we consider the Gibbs state. This state has a statistical mechanical meaning as a Hamiltonian system which is in thermal equilibrium with an external environment, therefore washing out details of the specific noise model. By considering this state, the second obstacle is also overcome: an efficient algorithm known as the minimally entangled typical thermal states (METTS) algorithm exists which approximates expectation values in the Gibbs state by sampling over classically simulable Matrix Product States. In this section we use the METTS algorithm to show that quantum advantage in the triangle game persists to nonzero temperatures. Details of the METTS algorithm and its application to our situation are explained in Appendix \ref{sec:METTS}.

The Gibbs state for a system under the Hamiltonian $H$ is given by
\begin{align}
    \rho=\frac{1}{\mathcal{Z}}e^{-\beta H},\hspace{5mm}\mathcal{Z}=\tr(e^{-\beta H}),\label{eq:Gibbs}
\end{align}
where $\beta=(k_BT)^{-1}$ and we take $k_B=1$. 
Using Theorem~\ref{thm:1}, we will show that thermal states of the nontrivial SPTO regime of Hamiltonian Eq. \eqref{eq:Hamiltonian} can still exhibit quantum advantage in the multiplayer triangle game using the cluster state measurement strategy, up to some critical temperature which is only dependent on the system size. In the thermodynamic limit this temperature drops to zero, but for finite-sized systems it remains nonzero and is surprisingly robust to system size. We study a system with $n=64$ total qubits, chosen to reflect the capability of modern quantum devices. In our numerics we set $\Delta=2$ in the Hamiltonian.

To connect the thermal state $\rho$ to quantum computational advantage by applying our Corollary \ref{cor:1}, it must be preparable by a constant-depth nearest-neighbor quantum circuit. Imagine a situation where one attempts to prepare a pure SPT state but instead prepares a mixed state due to an imperfection of a quantum device. In this situation, the constant-depth preparation of SPT states translates to constant-depth preparation of the corresponding mixed states. However, here we will not consider a noise model specific to a particular device, but rather a ubiquitous noise model given by the Gibbs thermal state which often has nontrivial complexity to prepare. It is noteworthy that no nontrivial SPT phase protected by onsite symmetry is robust (in the thermodynamic limit) above zero temperature \cite{Roberts}.
On the other hand, it is also known that a thermal cluster state is equivalent to the mixed state resulting from dephasing noise applied independently to each qubit of the pure cluster state, so that the thermal cluster state can indeed be constructed by the constant-depth channels consisting of preparation of the cluster state followed by single-qubit dephasing noise. However, whether Gibbs thermal states associated with an arbitrary $\mathds{Z}_2\times\mathds{Z}_2$ 1D SPT Hamiltonian can be constructed by a constant-depth quantum channel seems to be an open question \cite{Brandao2019}. This may be provable through techniques similar to those in Refs. \cite{Sang2024,Lake2025}.

\subsection{Cluster state analysis}\label{sec:clusteranalysis}

Before investigating the behavior of arbitrary thermal mixed states of the SPTO Hamiltonian in Eq. \eqref{eq:Hamiltonian}, it is helpful to investigate the behavior in the exactly solvable thermal cluster state. 
As mentioned, the thermal cluster state can be seen as a pure cluster state with single-qubit dephasing applied to each qubit with probability $\epsilon=(1+e^{2\beta})^{-1}$. By studying the Gibbs state we thus also get a notion of dephasing noise for free.

Let $K_j=Z_{j-1}X_jZ_{j+1}$ denote the cluster state stabilizer. We consider periodic boundary conditions so that site labeling is cyclic. The thermal state $\rho=e^{\beta\sum_jK_j}/\mathcal{Z}$ can be written as
\begin{align}
    \rho=\frac{1}{2^n}\sum_{\vb{r}\in\mathds{Z}_2^n}\tanh^{\abs{\vb{r}}}\bigg(\frac{\beta\Delta}{2}\bigg)\kappa(\vb{r}),
\end{align}
where $\abs{\vb{r}}$ denotes the Hamming weight of the bit string $\vb{r}$ and
\begin{align}
\kappa(\vb{r})=\prod_{j=1}^nK_j^{r_j}.
\label{eq:kappa}
\end{align}

All global symmetry representations $U(g)$ and twisted SOPs $T_{[p,q]}^{(g,h)}$ can be written in the form of Eq. \eqref{eq:kappa}, allowing an analytic expression for their expectation values. For example, $U(g)=\kappa(\vb{r}_g)$ where $\vb{r}_g$ is the bit string with ones in every even, every odd, and every entry (with zeros everywhere else) for $g=(a,b)=(0,1),(1,0),(1,1)\in\mathds{Z}_2\times\mathds{Z}_2$, respectively. Thus
\begin{align}
\ev{U\big((a,b)\big)}=\tanh^{(a+b)n/2}\bigg(\frac{\beta\Delta}{2}\bigg).
\end{align}
With the same method we find
\begin{align}
    \ev{T_{[p,q]}^{(g,h)}}=-
    \begin{cases}
        \tanh^{(n+d)/2}(\frac{\beta\Delta}{2}) & g,h = x,y \text{ or } y,x\\
        \tanh^{n-d/2}(\frac{\beta\Delta}{2}) & g = x \text{ or } y, h=z\\
        \tanh^{n/2}(\frac{\beta\Delta}{2}) & g = z, h=x \text{ or } y
    \end{cases},
\end{align}
where $d=2(q-p)$.
Using Eq. \eqref{eq:pwin} and analyzing over all possible $g,h\in\mathds{Z}_2\times\mathds{Z}_2$ we then find that the minimum winning probability belongs to the strategy $(g,h)=(z,x)$ or $(z,y)$ with
\begin{align}
    P_n^{(Q,min)}=\frac{1}{8}\bigg[3+2\tanh^{n/2}\Big(\frac{\beta\Delta}{2}\Big)+3\tanh^n\Big(\frac{\beta\Delta}{2}\Big)\bigg].
    \label{eq:Pcluster}
\end{align}
This is greater than $7/8$ whenever $T=\beta^{-1}<T_c$, where
\begin{align}
    T_c=\frac{\Delta}{2\text{ arctanh}\bigg[\Big(\frac{\sqrt{13}-1}{3}\Big)^{2/n}\bigg]}.
    \label{eq:betac}
\end{align}
For our case of $n=64,\Delta=2$, this gives $T_c=0.327$, corresponding to a single-qubit dephasing noise probability of $\epsilon\approx 0.2\%$. A $64$-qubit thermal cluster state with $T<T_c$ thus admits quantum advantage in a $32$-player multiplayer triangle game instance.
Conversely, rearranging Eq. \eqref{eq:betac} shows that the thermal cluster state at fixed temperature $T$ supports quantum advantage in the multiplayer triangle game when the system size $n$ is less than $n_c$, where
\begin{align}
    n_c=\frac{2\ln((\sqrt{13}-1)/3)}{\ln(\tanh(\Delta/2T))}.
    \label{eq:ncrit}
\end{align}

This behavior is shown in Fig. \ref{fig:thermalcluster}. Surprisingly, although $T_c$ goes to zero in the thermodynamic limit, for reasonably large system sizes it remains relatively constant, only decreasing slightly as the system size is increased. Indeed, Eq. \eqref{eq:ncrit} implies that the maximum system size $n_c$ which supports quantum advantage scales exponentially as $e^{\Delta/T}$ in the small-$T$ limit. In this regime, small decreases in temperature (correspondingly decreases in dephasing probability) result in large increases in quantum advantageous system sizes.

\begin{figure}[t]
    \includegraphics[width=0.48\textwidth]{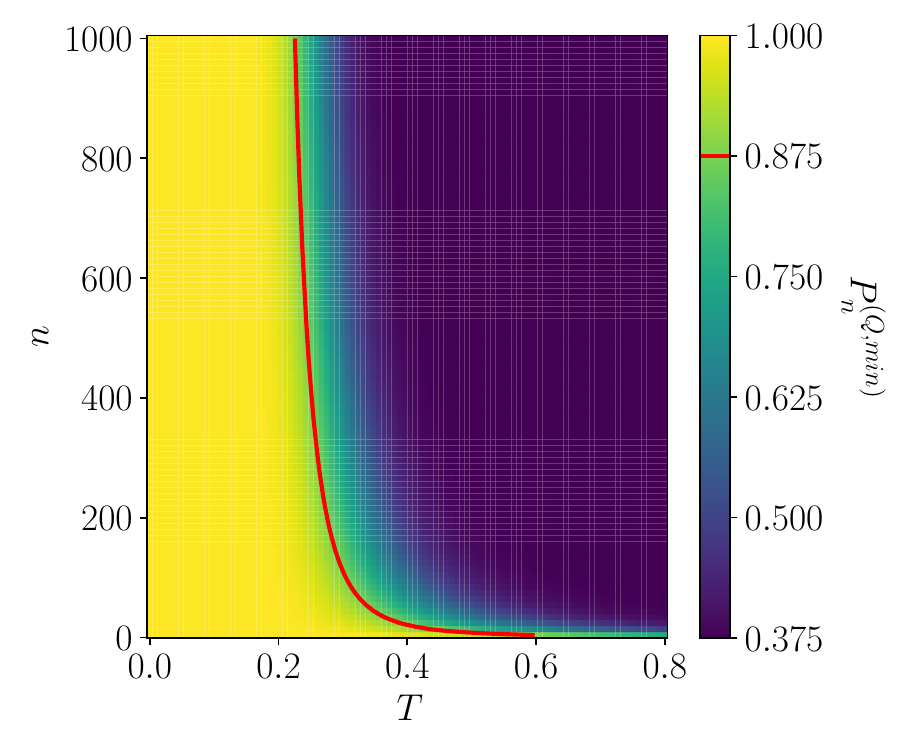}
    \caption{Behavior of the minimum winning probability $P_{n}^{(Q,min)}$ in a thermal cluster state as a function of temperature and system size. The quantum advantageous contour $P_{n}^{(Q,min)}=7/8$ is shown in red. Although the critical temperature $T_c$ (where $P_{n}^{(Q,min)}=7/8$ for fixed $n$) decays to zero in the thermodynamic limit, it is robust for finite-sized systems.}
    \label{fig:thermalcluster}
\end{figure}

\begin{figure}[t]
\centering

\begin{tikzpicture}[font=\large]

% -----------------------
% Layout parameters
% -----------------------
\def\W{8cm}          % width of each panel (a) and (b)
\def\Ysep{0.8cm}     % horizontal separation between panels

% ======================
% Left panel (a)
% ======================
\node[inner sep=0pt, anchor=north west] (A) at (0,0)
{
    \includegraphics[width=\W, keepaspectratio]{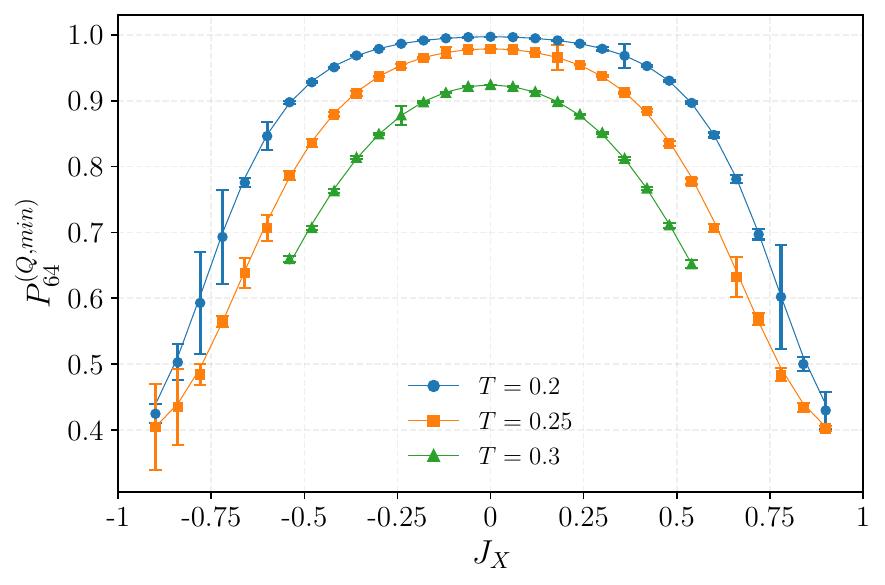}
};

\node[anchor=north] at (A.north west) {(a)};
% ======================
% Right panel (b)
% ======================
\node[inner sep=0pt, anchor=north west] (B)
    at ($(A.south west) - (0, \Ysep)$)
{
    \includegraphics[width=\W, keepaspectratio]{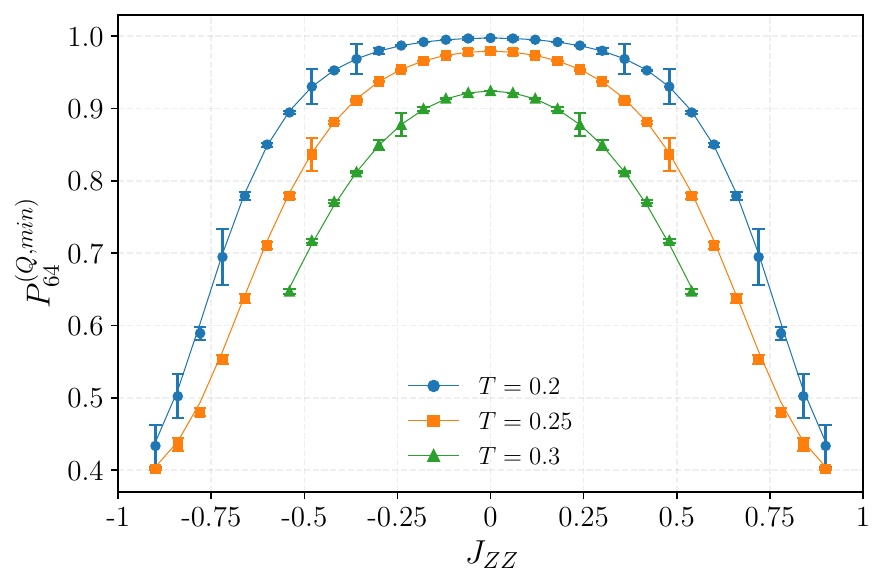}
};
\node[anchor=north] at (B.north west) {(b)};
\end{tikzpicture}

\caption{ Analysis of accuracy and precision of the METTS algorithm by plotting METTS estimates of triangle game winning probability against exact solutions along (a) the $J_X$ axis and (b) the $J_{ZZ}$ axis. Exact solutions are shown as solid lines and METTS estimates are plotted as points with error bars corresponding to relative error. The algorithm performs well across these axes.}
\label{fig:METTSvsExact}
\end{figure}

\subsection{Thermal advantage in the SPT phase}

\begin{figure*}[t]
\centering

\begin{tikzpicture}[font=\large]

% -----------------------
% Layout parameters
% -----------------------
\def\LeftW{10.5cm}      % width of left figure (a)
\def\RightW{7cm}     % width of right figures (b) and (c)
\def\Xsep{0.25cm}       % horizontal separation
\def\Ysep{0.6cm}     % vertical separation between (b) and (c)

% ======================
% Right top panel (b)
% ======================
\node[inner sep=0pt, anchor=north west] (B)
    at (0,0)
{
    \includegraphics[width=\RightW, keepaspectratio]{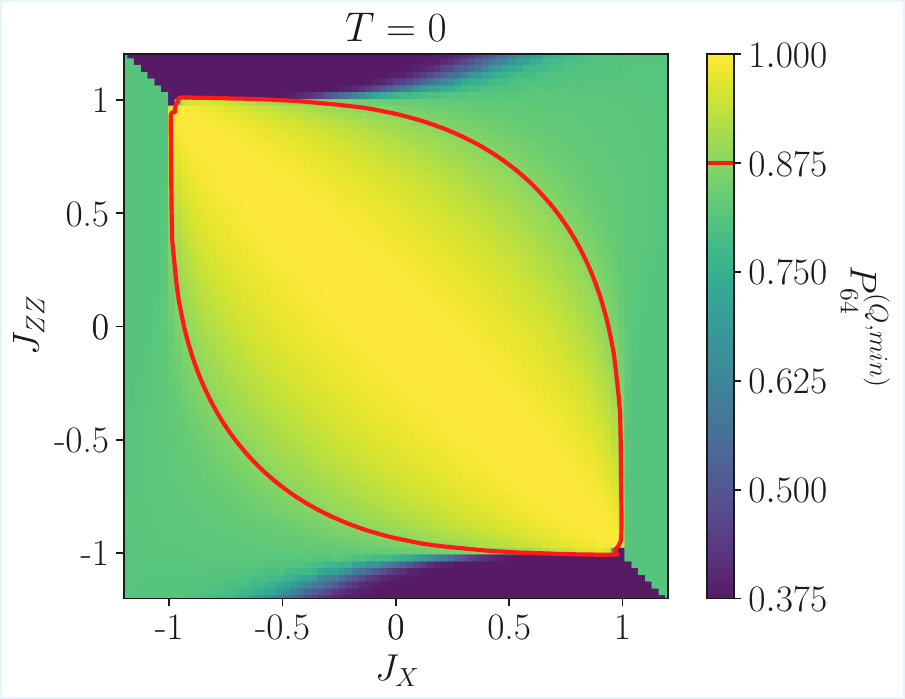}
};
\node[anchor=north west] at (B.north west) {(b)};

% ======================
% Right bottom panel (c)
% ======================
\node[inner sep=0pt, anchor=north west] (C)
    at ($(B.south west) + (0, -\Ysep)$)
{
    \includegraphics[width=\RightW, keepaspectratio]{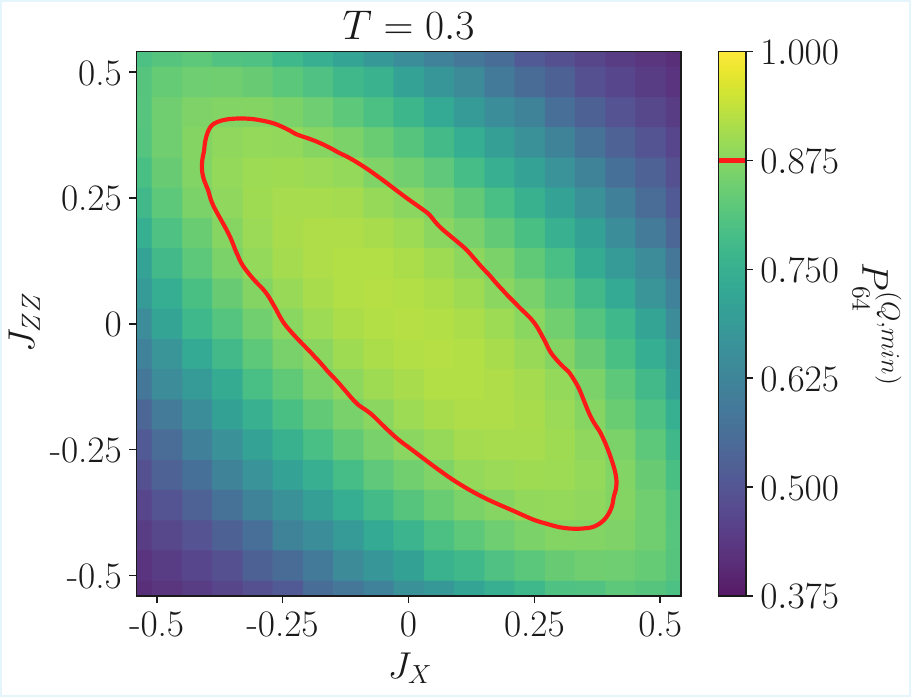}
};
\node[anchor=north west] at (C.north west) {(c)};

% ======================
% Left panel (a), vertically centered
% ======================
\node[inner sep=0pt, anchor=center] (A)
    at ($(B.north west)!0.5!(C.south west) + (-\Xsep - 0.5*\LeftW, 0)$)
{
    \includegraphics[width=\LeftW, keepaspectratio]{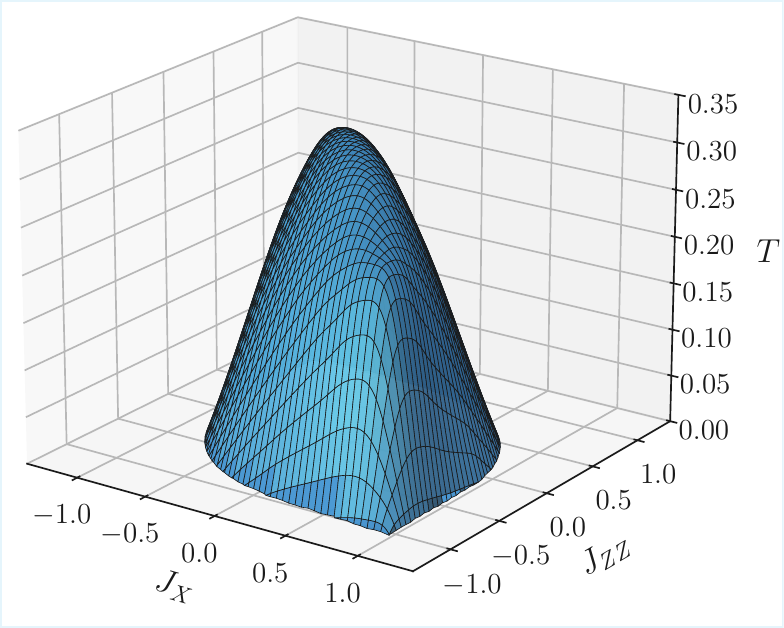}
};
\node[anchor=north west] at (A.north west) {(a)};

\end{tikzpicture}

\caption{
(a) Persistence of quantum advantage to finite temperature in a $N=32$-player contextual triangle game with $n=64$ qubits. The surface shows the boundary of the quantum advantageous region in thermal phase space. Gibbs states, characterized by Hamiltonian parameters $J_X,J_{ZZ}$ and temperature $T$, which lie inside the surface are able to beat classical strategies in the multiplayer triangle game (i.e. $P_{64}^{(Q,min)}>7/8$) by using the cluster state measurement strategy. As the temperature increases, the quantum advantageous region shrinks until completely disappearing at a finite temperature determined by Eq. \eqref{eq:betac}. (b) A cross section in the $T=0$ plane. The outline of the quantum advantageous region, shown in red, is identical to that shown in Fig. \ref{fig:GSadvantage}. This plot also shows the symmetry-broken phase of the Hamiltonian (purple), where the winning probability is 3/8 according to Eq. \eqref{eq:pwin}. (c) A cross section in the $T=0.3$ plane.}
\label{fig:thermaladv}
\end{figure*}

The exactly solvable region of our Hamiltonian further extends to the $J_X$ and $J_{ZZ}$ axes of the Hamiltonian by mapping the model to a quadratic fermionic Hamiltonian via a Jordan-Wigner transformation.
Thermal expectation values of symmetry operators and twisted string order parameters can then be exactly calculated \cite{LIEB1961407}. We outline this approach in Appendix \ref{app:exactsolns}. At $J_X=J_{ZZ}=0$ these results match the cluster state analysis. 
Along these axes, we find similar scaling results to the cluster state, namely, the critical temperature is robust to system size below the thermodynamic limit.

Using the METTS algorithm described in Appendix ~\ref{sec:METTS}, we extend the results of quantum advantage to arbitrary thermal states parameterized by $(J_X,J_{ZZ})$ in Hamiltonian Eq. \eqref{eq:Hamiltonian}. 
The exact solutions previously discussed allow us to fine-tune the parameters of the METTS algorithm to most accurately model the system; Figure \ref{fig:METTSvsExact} shows strong agreement between these exact solutions and the METTS algorithm. The exact solutions also help with the choice of the twisted SOP endpoint blocks $p,q$. These are selected to balance two constraints in the ground state: they must be far enough apart to capture bulk behavior, but not long enough that they are close due to the periodicity. We find that $p=7, q=26$ satisfy these properties.

Figure \ref{fig:thermaladv} shows our main numerical results. As temperature increases, the quantum advantageous region of Fig. \ref{fig:GSadvantage}, which is plotted again in Fig. \ref{fig:thermaladv}(b) using the success probability, shrinks until the temperature given by Eq. \eqref{eq:betac}, at which point all quantum advantage within the SPT phase disappears. The SPT phase (on finite systems) thus retains its inherently quantum mechanical properties which could allow its states to outperform classical strategies in the contextual triangle game when exposed to thermal noise.

The surface in Fig. \ref{fig:thermaladv}(a) represents the critical temperature $T_c(J_X,J_{ZZ})$, above which a 64-qubit thermal state loses quantum advantage. It is always less than or equal to the cluster state critical temperature $T_c(0,0)$ in Eq. \eqref{eq:betac}. In the thermodynamic limit, $T_c(J_X,J_{ZZ})\xrightarrow{}0$, but quantum advantage still exists as long as the system size $n$ remains finite. Conversely, for a fixed temperature $T$, there exists a critical system size $n_c(J_X,J_{ZZ})$ such that a thermal state can support quantum advantage in the $n/2$ player multiplayer triangle game when $n<n_c(J_X,J_{ZZ})$. Throughout the phase this is less than or equal to the cluster state result $n_c(0,0)$ in Eq. \eqref{eq:ncrit}. Using our Corollary \ref{cor:1} we can relate this to quantum \emph{computational} advantage. Since $n_c$ is finite for a fixed nonzero temperature we clearly do not have asymptotic computational advantage, as the quantum advantage is not scalable past $n\ge n_c$. However, it would be possible to witness the first glimpse of computational advantage of shallow quantum circuits when $n<n_c$ in that a classical circuit with nearest neighbor fan-in gates would require depths of $n/6$ to win the multiplayer triangle game with $P^{(C)}>7/8$.

\subsection{Player geometry does not affect minimum winning probability}
\begin{figure*}[t]
\centering

\begin{tikzpicture}[font=\large]

% -----------------------
% Layout parameters
% -----------------------
\def\W{8.5cm}          % width of each panel
\def\Xsep{0.6cm}     % horizontal separation
\def\Ysep{0.8cm}     % vertical separation

% ======================
% Top row: (a) (b)
% ======================
\node[inner sep=0pt, anchor=north west] (A) at (0,0)
{
    \includegraphics[width=\W]{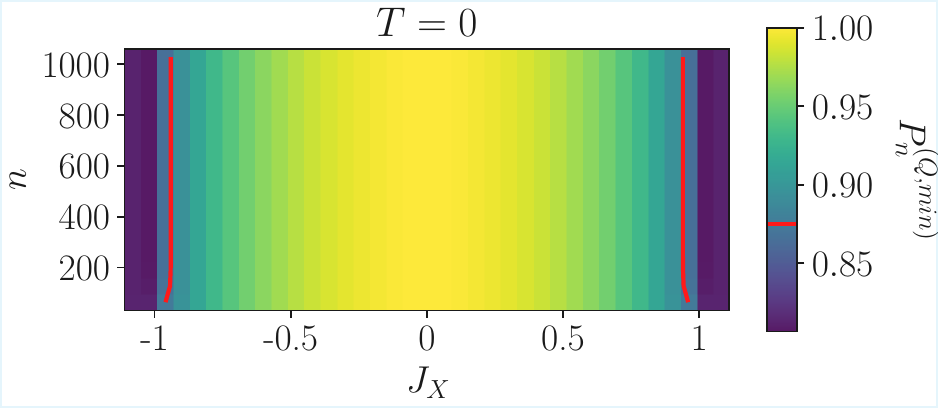}\label{fig:7a}
};
\node[anchor=north] at (A.north west) {(a)};

\node[inner sep=0pt, anchor=north west] (B)
    at ($(A.north east) + (\Xsep,0)$)
{
    \includegraphics[width=\W]{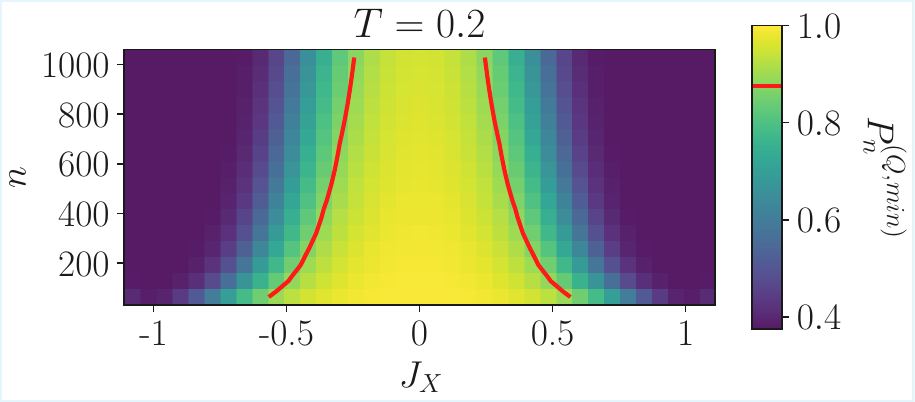}\label{fig:7b}
};
\node[anchor=north] at (B.north west) {(b)};

% ======================
% Bottom row: (c) (d)
% ======================
\node[inner sep=0pt, anchor=north west] (C)
    at ($(A.south west) - (0,\Ysep)$)
{
    \includegraphics[width=\W]{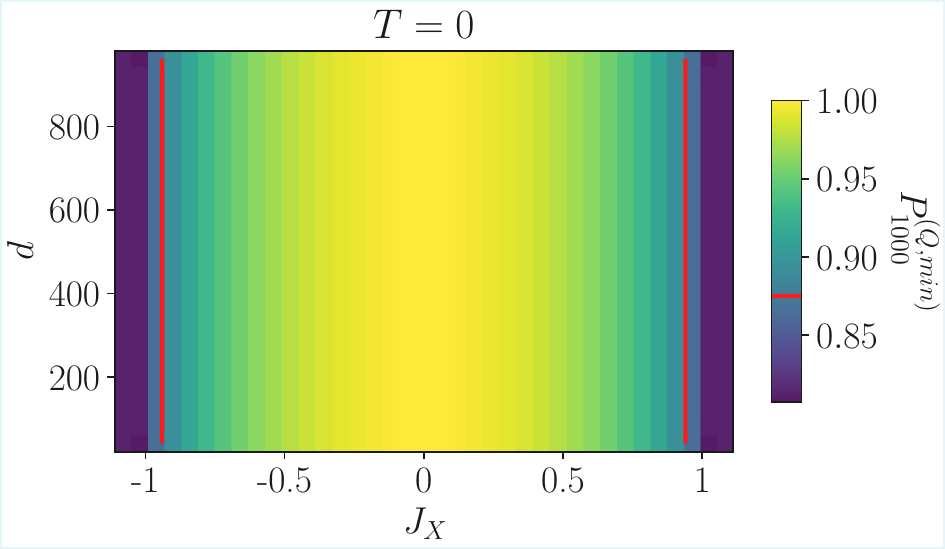}
};
\node[anchor=north] at (C.north west) {(c)};

\node[inner sep=0pt, anchor=north west] (D)
    at ($(C.north east) + (\Xsep,0)$)
{
    \includegraphics[width=\W]{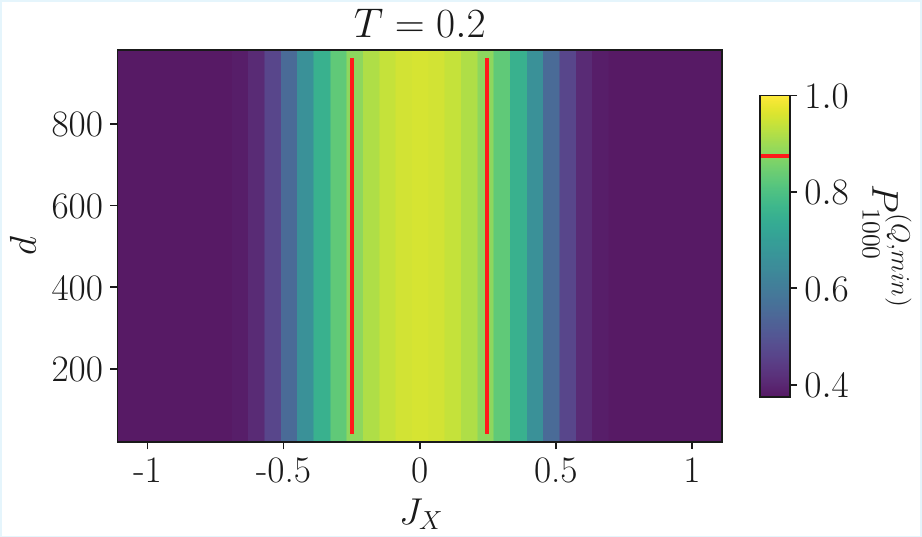}
};
\node[anchor=north] at (D.north west) {(d)};

\end{tikzpicture}

\caption{ Triangle game minimum winning probability $P_n^{(Q,min)}$ (using the cluster state measurement strategy) and quantum advantageous contour (red) of thermal states along the $J_X$ axis of Eq.~\eqref{eq:Hamiltonian} while changing length parameters.
(a) At zero temperature, $P_n^{(Q,min)}$ is constant with system size $n$.
(b) At finite temperature, $P_n^{(Q,min)}$ is nonzero for finite systems but vanishes in the thermodynamic limit.
Both (a) and (b) have constant $d=2(p-q)$ for twisted SOPs $T_{[p,q]}^{(g,h)}$ while varying $n$.
(c) and (d) fix $n=1000$ while varying $d$.
Varying SOP length at fixed system size has no effect on $P_{1000}^{(Q,min)}$ for all temperatures, indicating that the quantum advantageous region depends only on $n$ and $T$.}
\label{fig:varyingnjk}

\end{figure*}

It is notable that the cluster state minimum winning probability, given by Eq.~\eqref{eq:Pcluster}, is completely independent of the positions of players $\alpha,\beta,\gamma$; the only reference to the physical layout of the triangle game problem is in the system size $n$. Interestingly, we find numerical evidence that this is true throughout the quantum advantageous region. 
This behavior is shown in Fig. \ref{fig:varyingnjk} for states along the $J_X$ axis: provided the SOP length $d=2(q-p)$ is greater than the correlation length, then varying $d$ with a fixed system size $n$ does not affect $P_n^{(Q,min)}$, however, increasing $n$ with a fixed $d$ does decrease $P_n^{(Q,min)}$ when $T$ is nonzero.

Our numerics suggest that this behavior holds for all states in the SPT phase.
This implies that the quantum minimum winning probability in the multiplayer triangle game is not affected by the layout of the three players, and only by the size of the triangle game instance. This should be compared with the classical hardness, where $P^{(C)}$ increases from $7/8$ only when the depth of the classical circuit reaches the minimum separation between the three players, and thus depends on the specific layout of the players. Indeed, we assumed the three players were equidistant as this situation is the hardest for classical strategies, but these findings imply that this is actually irrelevant for $P_n^{(Q,min)}$.

\section{Connecting to experiment}\label{sec:experiment}
Although the thermal state analysis provides a concrete example of how quantum advantage could theoretically persist under noisy conditions, it is hard to compare with current experimental progress. The following theorem and corollary connect our work more closely to the quantum state fidelity prepared in experiments, hypothetically assuming that noiseless contextual measurements can be made.

\begin{thm}\label{thm:2}
Let $\rho$ be a translationally invariant density matrix on $n$ qubits, so that its use in the multiplayer triangle game is described by Theorem \ref{thm:1}. Furthermore, denote the winning probabilities of Eq. \eqref{eq:pwin} as $P_n^{(Q)}(g,h|\rho)$ to highlight their state-dependence. Define the graph state basis $\{\ket{G_{\vb r}}\}_{\vb{r}\in\mathds{Z}_2^n}$ as the simultaneous eigenbasis of all cluster state stabilizers $\{K_j=Z_{j-1}X_jZ_{j+1}|j\in 1,2,\hdots, n\}$ so that $K_j\ket{G_{\vb r}}=(-1)^{r_j}\ket{G_{\vb r}}$ and define $\Pi_{\vb{r}}=\ket{G_{\vb{r}}}\bra{G_{\vb{r}}}$. Then the winning probabilities $P_n^{(Q)}(g,h|\rho)$ only depend on $\rho$'s graph diagonal components $c_{\vb{rr}}=\bra{G_{\vb r}}\rho\ket{G_{\vb r}}$:
\begin{align}
    P_n^{(Q)}(g,h|\rho)=\sum_{\vb{r}\in\mathds{Z}_2^n}c_{\vb{rr}}P_n^{(Q)}(g,h|\Pi_{\vb r}).
    \label{eq:gdprobs}
\end{align}

\end{thm}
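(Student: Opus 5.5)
The plan rests on Theorem~\ref{thm:1}, which writes $P_n^{(Q)}(g,h|\rho)$ as an affine function of expectation values $\tr(\mathcal{O}\rho)$. The observables involved are $U(g)$, $U(h)$, $U(gh)$, $T^{(g,h)}_{[1,n/6+1]}$ and $U(g)T^{(g,h)}_{[1,n/6+1]}$. Writing the constant term as $\tfrac{12}{32}=\tfrac{12}{32}\tr(\mathds{1}\rho)$, the map $\rho\mapsto P_n^{(Q)}(g,h|\rho)$ becomes linear in $\rho$ on trace-one operators. So it suffices to show that each of these observables is diagonal in the graph state basis. Then $\tr(\mathcal{O}\rho)=\sum_{\vb r}c_{\vb{rr}}\bra{G_{\vb r}}\mathcal{O}\ket{G_{\vb r}}=\sum_{\vb r}c_{\vb{rr}}\tr(\mathcal{O}\Pi_{\vb r})$. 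Substituting into Eq.~\eqref{eq:pwin} and using $\sum_{\vb r}c_{\vb{rr}}=\tr\rho=1$ to distribute the constant term gives Eq.~\eqref{eq:gdprobs}.

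The key step is to express every relevant operator as $\pm\kappa(\vb r')$ for a product $\kappa(\vb r')=\prod_j K_j^{r'_j}$ of cluster stabilizers, as in Eq.~\eqref{eq:kappa}. Each $\kappa(\vb r')$ is diagonal in the graph basis, with eigenvalue $(-1)^{\vb r\cdot\vb r'}$ on $\ket{G_{\vb r}}$. For the global symmetries this is immediate: $\prod_{j \text{ even}}K_j=\prod_{j\text{ even}}X_j$, since on a periodic chain of even length each odd site receives a $Z$ from both even neighbours and $Z^2=\mathds{1}$. The same argument handles odd sites and the product over all sites. The twisted SOP is handled the same way. By Eqs.~\eqref{eq:twistphase}--\eqref{eq:twistedSOP}, $T^{(g,h)}_{[p,q]}=\Omega(g,h)U(h)S_{[p,q]}(g)$ as an operator identity, and Section~\ref{sec:clusterSPTO} already notes that $S_{[p,q]}(g)$ is the product of $K_{2p},\dots,K_{2q-1}$ raised to the appropriate powers $a,b$. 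Products of these operators, such as $U(g)T^{(g,h)}$, are again signed stabilizer products. The paper's thermal-cluster computation implicitly uses exactly this representation.

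The main subtlety I expect is the signs. One must check that the operator identity $T=\Omega U(h)S$ holds exactly, not only on $\ket{C_n}$. It does, since it follows from Eqs.~\eqref{eq:twistphase} and \eqref{eq:V^RV^L} at the operator level. One must also check that the Pauli reorderings produce only global phases $\pm1$, so the operators stay Hermitian and $\pm\kappa$. Any such sign simply carries through into $P_n^{(Q)}(g,h|\Pi_{\vb r})$ and does not affect linearity. Off-diagonal elements $\bra{G_{\vb r}}\rho\ket{G_{\vb s}}$ with $\vb r\neq\vb s$ then drop out because each observable is diagonal. Translational invariance plays no role beyond making Eq.~\eqref{eq:pwin} valid. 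The $\Pi_{\vb r}$ themselves need not be translationally invariant, so I will interpret $P_n^{(Q)}(g,h|\Pi_{\vb r})$ as the right-hand side of Eq.~\eqref{eq:pwin} evaluated at $\Pi_{\vb r}$. This is the sense in which the decomposition is used.
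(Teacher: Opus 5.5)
Your proposal is correct and is essentially the paper's argument. The paper likewise writes $P_n^{(Q)}(g,h|\rho)=\tr(\rho\mathcal{O})$, with the identity term included in $\mathcal{O}$, and notes that every term of $\mathcal{O}$ is a product of cluster stabilizers of the form of Eq.~\eqref{eq:kappa}; hence $\mathcal{O}$ is diagonal in the graph state basis, the off-diagonal $c_{\vb{rs}}$ drop out, and $P_n^{(Q)}(g,h|\Pi_{\vb r})=\bra{G_{\vb r}}\mathcal{O}\ket{G_{\vb r}}$. Your extra checks — the sign bookkeeping, the operator identity $T^{(g,h)}_{[p,q]}=\Omega(g,h)U(h)S_{[p,q]}(g)$, and reading $P_n^{(Q)}(g,h|\Pi_{\vb r})$ as the right-hand side of Eq.~\eqref{eq:pwin} because $\Pi_{\vb r}$ need not be translationally invariant — make explicit what the paper leaves implicit.
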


\begin{proof}
The winning probability $P_n^{(Q)}(g,h|\rho)$ can be written as the expectation value of the operator $\mathcal{O}=(12\mathds{1}+12U(g)+U(h)+U(gh)-3T_{[1,n/6+1]}^{(g,h)}-3U(g)T_{[1,n/6+1]}^{(g,h)})/32$ as in Eq. \eqref{eq:pwin}. Expanding out $\rho$ in the graph state basis
\begin{equation}
\begin{aligned}
    P_n^{(Q)}(g,h|\rho)&=\tr(\rho\mathcal{O})\\
    &=\tr\bigg(\sum_{\vb r,\vb s\in\mathds{Z}_2^n}c_{\vb{rs}}\ket{G_{\vb{r}}}\bra{G_{\vb{s}}}\mathcal{O}\bigg)\\
    &=\sum_{\vb r,\vb s\in\mathds{Z}_2^n}c_{\vb{rs}}\bra{G_{\vb{s}}}\mathcal{O}\ket{G_{\vb r}}.
\end{aligned}
\label{eq:trrhoo}
\end{equation}
The graph state basis is an eigenbasis for $\mathcal{O}$ as
each of the operators contained in it, e.g. $\mathds{1},U(g), T_{[p,q]}^{(g,h)}$, can be written as products of cluster state stabilizers, i.e. the form of Eq. \eqref{eq:kappa}. Thus the off-diagonal terms in Eq. \eqref{eq:trrhoo} vanish. Equation \eqref{eq:gdprobs} is obtained by noticing $P_n^{(Q)}(g,h|\Pi_{\vb r}) =\bra{G_{\vb r}}\mathcal{O}\ket{G_{\vb r}}$.
\end{proof}

We note that this theorem connects to the study of strong and weak symmetries in mixed state phases. A weak symmetry is described by $U\rho U^\dag=\rho$, whereas the strong symmetry is a more constraining requirement that the symmetry must be respected for each element of the ensemble $U\rho=e^{i\theta}\rho$ \cite{Lieu2020,Lessa2025}. Equation \eqref{eq:gdprobs} shows that the winning probability $P_n^{(Q)}(g,h|\rho)$ is completely determined by the weakly symmetric state that is the diagonal elements of $\rho$ in the graph state basis. However, the winning probability can only reach one when the state is strongly symmetric, as all symmetry representations must have a unit expectation value. For the winning probability to fully reach one, there is also the additional requirement that all twisted SOP expectation values must be one. This is interesting to compare to Ref. \cite{deGroot2022}, which found that SOP expectation values are preserved if and only if the noise respects a strong symmetry.

In the graph state basis notation, the cluster state is denoted $\ket{G_{\vb 0}}$, where $\vb 0$ is the all-zero bit string. Since the cluster state performs perfectly under the cluster state measurement strategy, $P_n^{(Q)}(g,h|\Pi_{\vb 0})=1$ and thus
\begin{align}
    P_n^{(Q)}(g,h|\rho)=c_{\vb{00}}+\sum_{\vb{r}\ne\vb{0}\in\mathds{Z}_2^n}c_{\vb{rr}}P_n^{(Q)}(g,h|\Pi_{\vb r}).
\end{align}
Since $c_{\vb{00}}$ is the fidelity of $\rho$ with the cluster state and $c_{\vb{rr}}$ and $P_n^{(Q)}$ are both nonnegative, this leads to the following useful corollary.
\begin{cor}
Let $f_n$ denote the global fidelity between the density matrix $\rho$ and the $n$-qubit cyclic cluster state $\ket{C_n}$. Then $f_n$ lower bounds the quantum winning probability obtained by using the cluster state measurement strategy on $\rho$ in the $n/2$-player contextual triangle game, namely
\begin{align}
    f_n\le P_n^{(Q,min)}
    \label{eq:fidelity} .
\end{align}
\label{cor:2}
\end{cor}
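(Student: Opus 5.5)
The plan is to derive Corollary~\ref{cor:2} directly from Theorem~\ref{thm:2} applied separately to each strategy $(g,h)$, followed by a minimization over $(g,h)$.

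Fix any pair of nontrivial group elements $g\ne h$. Theorem~\ref{thm:2} gives
\begin{align}
P_n^{(Q)}(g,h|\rho)=\sum_{\vb r\in\mathds{Z}_2^n}c_{\vb{rr}}P_n^{(Q)}(g,h|\Pi_{\vb r}),
\end{align}
so $P_n^{(Q)}(g,h|\rho)$ is a convex-type combination of the winning probabilities of the graph basis states.

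Two facts are needed about the terms of this sum.

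First, $c_{\vb{rr}}=\bra{G_{\vb r}}\rho\ket{G_{\vb r}}\ge 0$ because $\rho$ is positive semidefinite.

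Second, $P_n^{(Q)}(g,h|\Pi_{\vb r})\ge0$. This holds because it is a genuine winning probability: it is the average over inputs of the probabilities of events, as in the first line of Eq.~\eqref{eq:pwinproof}. One can also check it directly. Each operator in $\mathcal{O}$ is a product $\kappa(\vb s)$ of stabilizers, which acts on $\ket{G_{\vb r}}$ as the sign $\pm1$, so the expression in Eq.~\eqref{eq:pwin} becomes a combination of joint-outcome probabilities that are each $0$ or $1$. One subtlety is that $\Pi_{\vb r}$ need not be translationally invariant, whereas Theorem~\ref{thm:1} assumed translation invariance. I would handle this by reading $P_n^{(Q)}(g,h|\Pi_{\vb r})$ as defined through $\bra{G_{\vb r}}\mathcal{O}\ket{G_{\vb r}}$, exactly as in the proof of Theorem~\ref{thm:2}. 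Since $\mathcal{O}$ is $\tfrac18$ times a sum of products of commuting projectors onto $\pm1$ eigenspaces, its eigenvalues are nonnegative. I expect this bookkeeping to be the only mildly delicate step.

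Next, the cluster state term equals the fidelity. Since $\ket{C_n}=\ket{G_{\vb 0}}$ and the cluster state wins with certainty, every symmetry has expectation value $1$ and every twisted SOP has expectation value $-1$ on it. Substituting into Eq.~\eqref{eq:pwin} gives $(24+2+6)/32=1$, so $P_n^{(Q)}(g,h|\Pi_{\vb 0})=1$, and the $\vb r=\vb 0$ term contributes $c_{\vb{00}}=\bra{C_n}\rho\ket{C_n}=f_n$.

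Dropping all the nonnegative terms with $\vb r\ne\vb 0$ then yields $P_n^{(Q)}(g,h|\rho)\ge f_n$ for every admissible $(g,h)$. Because this bound holds uniformly in $g,h$, taking the minimum over $g\ne h$ gives $P_n^{(Q,min)}\ge f_n$, which is Eq.~\eqref{eq:fidelity}.
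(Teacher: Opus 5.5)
Your proposal is correct and follows essentially the same route as the paper: expand $P_n^{(Q)}(g,h|\rho)$ via Theorem~\ref{thm:2}, identify the $\vb r=\vb 0$ term as $f_n$ times a unit winning probability, drop the remaining nonnegative terms, and then minimize over $(g,h)$. The paper only asserts that $P_n^{(Q)}(g,h|\Pi_{\vb r})\ge 0$. Your argument fills this in: $\mathcal{O}$ is $\tfrac18$ times a sum of products of commuting projectors, hence positive semidefinite, so nonnegativity holds even when $\Pi_{\vb r}$ is not translationally invariant.
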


Equality in Eq. \eqref{eq:fidelity} is reached only when $\rho$ is the pure cluster state.

Corollary \ref{thm:2} allows us to link this work to experiment.
The reference \cite{AustinExperiment} reports an early experiment of the triangle game, by preparing a six-qubit cluster state on a trapped-ion quantum computer with a fidelity of $0.606$ before correcting measurement readout errors. It is clearly insufficient to leverage our Corollary \ref{cor:2}, but the work also evaluated directly the winning probability $P_6^{(Q,min)}$ to be 0.794, which is close to the classical bound $4/5$ of a variant of the triangle game mentioned briefly in Section~\ref{sec:trianglegame}.

A more recent work has shown that 2D cluster states can be built from microwave photons on superconducting devices of up to twenty photons with higher fidelities \cite{OSullivan2025}, and the six photon case has a fidelity of $0.77$. The reference \cite{Harvard} prepared a 12-qubit cluster state on a neutral atom quantum computer with an average  expectation value $\ev{K_j}=0.87$ of the stabilizer generators only (i.e., the operators with $\abs{\vb{r}}=1$ in Eq. \eqref{eq:kappa}) Similarly, Ref. \cite{VTech} 
prepared a 36-qubit (two-dimensional) cluster state on a neutral-atom analog quantum simulator, Aquila of QuEra, with average $\ev{K_j}=0.985$ for the stabilizer generators only. Although this is still not enough to verifiably apply Theorem 2, it shows that NISQ devices are very close to reaching the classical limit of computation.

Other works indicate that they actually have high enough quality many-body states to pass the classical limit of the contextual triangle game, hypothetically assuming that ideal contextual measurements could be applied. 
References \cite{Cao2023, jiang2025generation95qubitgenuineentanglement} created 1D cluster states on superconducting hardware with estimated fidelity greater than $7/8$ for up to $\approx 15$ qubits, using the randomized fidelity estimation of Ref.~\cite{FlammiaLiu}.
Recent preparations of two-dimensional graph states on Google's superconducting hardware have solved the original 2D hidden linear function problem of Ref.~\cite{ShallowCircuits} with probability greater than $7/8$ for systems of $\approx 30$ qubits \cite{Google}. Choosing a cycle graph out of the 2D grid, this problem may be translated to our multiplayer triangle game construction.

\section{Conclusion}\label{sec:conclusion}
Our results show that states with symmetry-protected topological (SPT) order, known to be useful resources for nonlocal games and MBQC, are also practical for near-term demonstration of uniquely quantum mechanical behavior. We've demonstrated that these efficiently preparable states have reasonable robustness to thermal noise. Indeed, the ``most quantum'' aspects of the states survive to nonzero temperatures on finite-sized systems. This feature makes them good candidates for experimentally realizing unconditional separations between quantum and classical computation.

Also, we've shown that noisy states' performance in the triangle game can be lower bounded by their global fidelity with the cluster state, assuming the contextual measurements of the cluster state measurement strategy could be applied noiselessly. If the fidelity is greater than $7/8$, this guarantees that the state could be used to beat all classical strategies in the triangle game. This gives an experimentally-friendly measure of whether near-term states are ``quantum advantageous''. At the time of this paper, NISQ devices are close to or potentially beyond this threshold. 

There are several open questions that are beyond the scope of the current paper.
The triangle game can be played with the $\mathds{Z}_2\times\mathds{Z}_2$ SPTO of the spin-1 Haldane phase. This phase is generated by a Hamiltonian with two-body rather than the three-body interactions of the cluster phase and includes the antiferromagnetic Heisenberg model, which is of particular interest due to its physically simplistic Hamiltonian. Our Theorem \ref{thm:1} relies on dichotomic operators (i.e. the operators have eigenvalues $\pm1$), which is not the case for spin-1 operators. However, in Ref. \cite{Austin}, such operators are constructed directly through an isometry that maps the cluster state MPS tensors to those of the RG fixed point state called the AKLT state \cite{AKLT}. It would be interesting to investigate the interplay between noise and quantum advantage in this phase in a similar manner.

Finally, it would be interesting to extend our nonlocal game to models with more complicated SPTO. For example, in Ref. \cite{Roberts} it was shown that the 3D cluster state model has a nontrivial SPT phase which persists to finite temperature in the thermodynamic limit when the model is protected by a 1-form symmetry. Constructing a nonlocal game for such a model would have an inherent robustness to noise, and that may be useful to benchmark eventually fault-tolerant quantum devices.

\section*{Acknowledgments}
We thank Austin Daniel, Ben Corbett, and Manuel Mu\~{n}oz-Arias for useful discussions and help with numerics. 

This work was supported by the National Science Foundation Grant No. PHY-2310567 and STAQ Project (Grants No.PHY-2325080). Support is acknowledged from the U.S. Department of Energy, Office of Science National Quantum Information Science Research Center, Quantum Systems Accelerator (Award No. DE-SCL0000121). We thank the UNM Center for Advanced Research Computing, supported in part by the National Science Foundation, for providing the research computing resources used in this work.

\bibliography{bibliography}

%apsrev4-2.bst 2019-01-14 (MD) hand-edited version of apsrev4-1.bst
%Control: key (0)
%Control: author (8) initials jnrlst
%Control: editor formatted (1) identically to author
%Control: production of article title (0) allowed
%Control: page (0) single
%Control: year (1) truncated
%Control: production of eprint (0) enabled
\begin{thebibliography}{81}%
\makeatletter
\providecommand \@ifxundefined [1]{%
 \@ifx{#1\undefined}
}%
\providecommand \@ifnum [1]{%
 \ifnum #1\expandafter \@firstoftwo
 \else \expandafter \@secondoftwo
 \fi
}%
\providecommand \@ifx [1]{%
 \ifx #1\expandafter \@firstoftwo
 \else \expandafter \@secondoftwo
 \fi
}%
\providecommand \natexlab [1]{#1}%
\providecommand \enquote  [1]{``#1''}%
\providecommand \bibnamefont  [1]{#1}%
\providecommand \bibfnamefont [1]{#1}%
\providecommand \citenamefont [1]{#1}%
\providecommand \href@noop [0]{\@secondoftwo}%
\providecommand \href [0]{\begingroup \@sanitize@url \@href}%
\providecommand \@href[1]{\@@startlink{#1}\@@href}%
\providecommand \@@href[1]{\endgroup#1\@@endlink}%
\providecommand \@sanitize@url [0]{\catcode `\\12\catcode `\$12\catcode `\&12\catcode `\#12\catcode `\^12\catcode `\_12\catcode `\%12\relax}%
\providecommand \@@startlink[1]{}%
\providecommand \@@endlink[0]{}%
\providecommand \url  [0]{\begingroup\@sanitize@url \@url }%
\providecommand \@url [1]{\endgroup\@href {#1}{\urlprefix }}%
\providecommand \urlprefix  [0]{URL }%
\providecommand \Eprint [0]{\href }%
\providecommand \doibase [0]{https://doi.org/}%
\providecommand \selectlanguage [0]{\@gobble}%
\providecommand \bibinfo  [0]{\@secondoftwo}%
\providecommand \bibfield  [0]{\@secondoftwo}%
\providecommand \translation [1]{[#1]}%
\providecommand \BibitemOpen [0]{}%
\providecommand \bibitemStop [0]{}%
\providecommand \bibitemNoStop [0]{.\EOS\space}%
\providecommand \EOS [0]{\spacefactor3000\relax}%
\providecommand \BibitemShut  [1]{\csname bibitem#1\endcsname}%
\let\auto@bib@innerbib\@empty
%</preamble>
\bibitem [{\citenamefont {Preskill}(2018)}]{Preskill2018quantumcomputingin}%
  \BibitemOpen
  \bibfield  {author} {\bibinfo {author} {\bibfnamefont {J.}~\bibnamefont {Preskill}},\ }\bibfield  {title} {\bibinfo {title} {Quantum {C}omputing in the {NISQ} era and beyond},\ }\href {https://doi.org/10.22331/q-2018-08-06-79} {\bibfield  {journal} {\bibinfo  {journal} {{Quantum}}\ }\textbf {\bibinfo {volume} {2}},\ \bibinfo {pages} {79} (\bibinfo {year} {2018})}\BibitemShut {NoStop}%
\bibitem [{\citenamefont {Bharti}\ \emph {et~al.}(2022)\citenamefont {Bharti}, \citenamefont {Cervera-Lierta}, \citenamefont {Kyaw}, \citenamefont {Haug}, \citenamefont {Alperin-Lea}, \citenamefont {Anand}, \citenamefont {Degroote}, \citenamefont {Heimonen}, \citenamefont {Kottmann}, \citenamefont {Menke}, \citenamefont {Mok}, \citenamefont {Sim}, \citenamefont {Kwek},\ and\ \citenamefont {Aspuru-Guzik}}]{RevModPhys.94.015004}%
  \BibitemOpen
  \bibfield  {author} {\bibinfo {author} {\bibfnamefont {K.}~\bibnamefont {Bharti}}, \bibinfo {author} {\bibfnamefont {A.}~\bibnamefont {Cervera-Lierta}}, \bibinfo {author} {\bibfnamefont {T.~H.}\ \bibnamefont {Kyaw}}, \bibinfo {author} {\bibfnamefont {T.}~\bibnamefont {Haug}}, \bibinfo {author} {\bibfnamefont {S.}~\bibnamefont {Alperin-Lea}}, \bibinfo {author} {\bibfnamefont {A.}~\bibnamefont {Anand}}, \bibinfo {author} {\bibfnamefont {M.}~\bibnamefont {Degroote}}, \bibinfo {author} {\bibfnamefont {H.}~\bibnamefont {Heimonen}}, \bibinfo {author} {\bibfnamefont {J.~S.}\ \bibnamefont {Kottmann}}, \bibinfo {author} {\bibfnamefont {T.}~\bibnamefont {Menke}}, \bibinfo {author} {\bibfnamefont {W.-K.}\ \bibnamefont {Mok}}, \bibinfo {author} {\bibfnamefont {S.}~\bibnamefont {Sim}}, \bibinfo {author} {\bibfnamefont {L.-C.}\ \bibnamefont {Kwek}},\ and\ \bibinfo {author} {\bibfnamefont {A.}~\bibnamefont {Aspuru-Guzik}},\ }\bibfield  {title} {\bibinfo {title} {Noisy intermediate-scale quantum algorithms},\ }\href
  {https://doi.org/10.1103/RevModPhys.94.015004} {\bibfield  {journal} {\bibinfo  {journal} {Rev. Mod. Phys.}\ }\textbf {\bibinfo {volume} {94}},\ \bibinfo {pages} {015004} (\bibinfo {year} {2022})}\BibitemShut {NoStop}%
\bibitem [{\citenamefont {Grover}(1996)}]{Grover}%
  \BibitemOpen
  \bibfield  {author} {\bibinfo {author} {\bibfnamefont {L.~K.}\ \bibnamefont {Grover}},\ }\bibfield  {title} {\bibinfo {title} {A fast quantum mechanical algorithm for database search},\ }in\ \href {https://doi.org/10.1145/237814.237866} {\emph {\bibinfo {booktitle} {Proceedings of the Twenty-Eighth Annual ACM Symposium on Theory of Computing}}},\ \bibinfo {series and number} {STOC '96}\ (\bibinfo  {publisher} {Association for Computing Machinery},\ \bibinfo {address} {New York, NY, USA},\ \bibinfo {year} {1996})\ p.\ \bibinfo {pages} {212–219}\BibitemShut {NoStop}%
\bibitem [{\citenamefont {Shor}(1997)}]{Shor}%
  \BibitemOpen
  \bibfield  {author} {\bibinfo {author} {\bibfnamefont {P.~W.}\ \bibnamefont {Shor}},\ }\bibfield  {title} {\bibinfo {title} {Polynomial time algorithms for prime factorization and discrete logarithms on a quantum computer},\ }\href {https://doi.org/10.1137/S0097539795293172} {\bibfield  {journal} {\bibinfo  {journal} {SIAM J. Sci. Statist. Comput.}\ }\textbf {\bibinfo {volume} {26}},\ \bibinfo {pages} {1484} (\bibinfo {year} {1997})}\BibitemShut {NoStop}%
\bibitem [{\citenamefont {Raussendorf}\ and\ \citenamefont {Briegel}(2001)}]{RaussendorfBriegel}%
  \BibitemOpen
  \bibfield  {author} {\bibinfo {author} {\bibfnamefont {R.}~\bibnamefont {Raussendorf}}\ and\ \bibinfo {author} {\bibfnamefont {H.~J.}\ \bibnamefont {Briegel}},\ }\bibfield  {title} {\bibinfo {title} {A one-way quantum computer},\ }\href {https://doi.org/10.1103/PhysRevLett.86.5188} {\bibfield  {journal} {\bibinfo  {journal} {Phys. Rev. Lett.}\ }\textbf {\bibinfo {volume} {86}},\ \bibinfo {pages} {5188} (\bibinfo {year} {2001})}\BibitemShut {NoStop}%
\bibitem [{\citenamefont {Raussendorf}\ \emph {et~al.}(2003)\citenamefont {Raussendorf}, \citenamefont {Browne},\ and\ \citenamefont {Briegel}}]{Raussendorf2003}%
  \BibitemOpen
  \bibfield  {author} {\bibinfo {author} {\bibfnamefont {R.}~\bibnamefont {Raussendorf}}, \bibinfo {author} {\bibfnamefont {D.~E.}\ \bibnamefont {Browne}},\ and\ \bibinfo {author} {\bibfnamefont {H.~J.}\ \bibnamefont {Briegel}},\ }\bibfield  {title} {\bibinfo {title} {Measurement-based quantum computation on cluster states},\ }\href {https://doi.org/10.1103/PhysRevA.68.022312} {\bibfield  {journal} {\bibinfo  {journal} {Phys. Rev. A}\ }\textbf {\bibinfo {volume} {68}},\ \bibinfo {pages} {022312} (\bibinfo {year} {2003})}\BibitemShut {NoStop}%
\bibitem [{\citenamefont {Miyake}(2010)}]{Miyake}%
  \BibitemOpen
  \bibfield  {author} {\bibinfo {author} {\bibfnamefont {A.}~\bibnamefont {Miyake}},\ }\bibfield  {title} {\bibinfo {title} {Quantum computation on the edge of a symmetry-protected topological order},\ }\href {https://doi.org/10.1103/PhysRevLett.105.040501} {\bibfield  {journal} {\bibinfo  {journal} {Phys. Rev. Lett.}\ }\textbf {\bibinfo {volume} {105}},\ \bibinfo {pages} {040501} (\bibinfo {year} {2010})}\BibitemShut {NoStop}%
\bibitem [{\citenamefont {Bartlett}\ \emph {et~al.}(2010)\citenamefont {Bartlett}, \citenamefont {Brennen}, \citenamefont {Miyake},\ and\ \citenamefont {Renes}}]{QCinHaldanePhase}%
  \BibitemOpen
  \bibfield  {author} {\bibinfo {author} {\bibfnamefont {S.~D.}\ \bibnamefont {Bartlett}}, \bibinfo {author} {\bibfnamefont {G.~K.}\ \bibnamefont {Brennen}}, \bibinfo {author} {\bibfnamefont {A.}~\bibnamefont {Miyake}},\ and\ \bibinfo {author} {\bibfnamefont {J.~M.}\ \bibnamefont {Renes}},\ }\bibfield  {title} {\bibinfo {title} {Quantum computational renormalization in the haldane phase},\ }\href {https://doi.org/10.1103/PhysRevLett.105.110502} {\bibfield  {journal} {\bibinfo  {journal} {Phys. Rev. Lett.}\ }\textbf {\bibinfo {volume} {105}},\ \bibinfo {pages} {110502} (\bibinfo {year} {2010})}\BibitemShut {NoStop}%
\bibitem [{\citenamefont {Else}\ \emph {et~al.}(2012{\natexlab{a}})\citenamefont {Else}, \citenamefont {Schwarz}, \citenamefont {Bartlett},\ and\ \citenamefont {Doherty}}]{SPTforMBQC}%
  \BibitemOpen
  \bibfield  {author} {\bibinfo {author} {\bibfnamefont {D.~V.}\ \bibnamefont {Else}}, \bibinfo {author} {\bibfnamefont {I.}~\bibnamefont {Schwarz}}, \bibinfo {author} {\bibfnamefont {S.~D.}\ \bibnamefont {Bartlett}},\ and\ \bibinfo {author} {\bibfnamefont {A.~C.}\ \bibnamefont {Doherty}},\ }\bibfield  {title} {\bibinfo {title} {Symmetry-protected phases for measurement-based quantum computation},\ }\href {https://doi.org/10.1103/PhysRevLett.108.240505} {\bibfield  {journal} {\bibinfo  {journal} {Phys. Rev. Lett.}\ }\textbf {\bibinfo {volume} {108}},\ \bibinfo {pages} {240505} (\bibinfo {year} {2012}{\natexlab{a}})}\BibitemShut {NoStop}%
\bibitem [{\citenamefont {Else}\ \emph {et~al.}(2012{\natexlab{b}})\citenamefont {Else}, \citenamefont {Bartlett},\ and\ \citenamefont {Doherty}}]{SPTinMBQCGS}%
  \BibitemOpen
  \bibfield  {author} {\bibinfo {author} {\bibfnamefont {D.~V.}\ \bibnamefont {Else}}, \bibinfo {author} {\bibfnamefont {S.~D.}\ \bibnamefont {Bartlett}},\ and\ \bibinfo {author} {\bibfnamefont {A.~C.}\ \bibnamefont {Doherty}},\ }\bibfield  {title} {\bibinfo {title} {Symmetry protection of measurement-based quantum computation in ground states},\ }\href {https://doi.org/10.1088/1367-2630/14/11/113016} {\bibfield  {journal} {\bibinfo  {journal} {New Journal of Physics}\ }\textbf {\bibinfo {volume} {14}},\ \bibinfo {pages} {113016} (\bibinfo {year} {2012}{\natexlab{b}})}\BibitemShut {NoStop}%
\bibitem [{\citenamefont {Miller}\ and\ \citenamefont {Miyake}(2015)}]{Miller}%
  \BibitemOpen
  \bibfield  {author} {\bibinfo {author} {\bibfnamefont {J.}~\bibnamefont {Miller}}\ and\ \bibinfo {author} {\bibfnamefont {A.}~\bibnamefont {Miyake}},\ }\bibfield  {title} {\bibinfo {title} {Resource quality of a symmetry-protected topologically ordered phase for quantum computation},\ }\href {https://doi.org/10.1103/PhysRevLett.114.120506} {\bibfield  {journal} {\bibinfo  {journal} {Phys. Rev. Lett.}\ }\textbf {\bibinfo {volume} {114}},\ \bibinfo {pages} {120506} (\bibinfo {year} {2015})}\BibitemShut {NoStop}%
\bibitem [{\citenamefont {Miller}\ and\ \citenamefont {Miyake}(2016)}]{Miller2016}%
  \BibitemOpen
  \bibfield  {author} {\bibinfo {author} {\bibfnamefont {J.}~\bibnamefont {Miller}}\ and\ \bibinfo {author} {\bibfnamefont {A.}~\bibnamefont {Miyake}},\ }\bibfield  {title} {\bibinfo {title} {Hierarchy of universal entanglement in 2d measurement-based quantum computation},\ }\href {https://doi.org/10.1038/npjqi.2016.36} {\bibfield  {journal} {\bibinfo  {journal} {npj Quantum Information}\ }\textbf {\bibinfo {volume} {2}},\ \bibinfo {pages} {16036} (\bibinfo {year} {2016})}\BibitemShut {NoStop}%
\bibitem [{\citenamefont {Raussendorf}\ \emph {et~al.}(2017)\citenamefont {Raussendorf}, \citenamefont {Wang}, \citenamefont {Prakash}, \citenamefont {Wei},\ and\ \citenamefont {Stephen}}]{1DSPTforComputation}%
  \BibitemOpen
  \bibfield  {author} {\bibinfo {author} {\bibfnamefont {R.}~\bibnamefont {Raussendorf}}, \bibinfo {author} {\bibfnamefont {D.-S.}\ \bibnamefont {Wang}}, \bibinfo {author} {\bibfnamefont {A.}~\bibnamefont {Prakash}}, \bibinfo {author} {\bibfnamefont {T.-C.}\ \bibnamefont {Wei}},\ and\ \bibinfo {author} {\bibfnamefont {D.~T.}\ \bibnamefont {Stephen}},\ }\bibfield  {title} {\bibinfo {title} {Symmetry-protected topological phases with uniform computational power in one dimension},\ }\href {https://doi.org/10.1103/PhysRevA.96.012302} {\bibfield  {journal} {\bibinfo  {journal} {Phys. Rev. A}\ }\textbf {\bibinfo {volume} {96}},\ \bibinfo {pages} {012302} (\bibinfo {year} {2017})}\BibitemShut {NoStop}%
\bibitem [{\citenamefont {Stephen}\ \emph {et~al.}(2017)\citenamefont {Stephen}, \citenamefont {Wang}, \citenamefont {Prakash}, \citenamefont {Wei},\ and\ \citenamefont {Raussendorf}}]{SPTComputationalPower}%
  \BibitemOpen
  \bibfield  {author} {\bibinfo {author} {\bibfnamefont {D.~T.}\ \bibnamefont {Stephen}}, \bibinfo {author} {\bibfnamefont {D.-S.}\ \bibnamefont {Wang}}, \bibinfo {author} {\bibfnamefont {A.}~\bibnamefont {Prakash}}, \bibinfo {author} {\bibfnamefont {T.-C.}\ \bibnamefont {Wei}},\ and\ \bibinfo {author} {\bibfnamefont {R.}~\bibnamefont {Raussendorf}},\ }\bibfield  {title} {\bibinfo {title} {Computational power of symmetry-protected topological phases},\ }\href {https://doi.org/10.1103/PhysRevLett.119.010504} {\bibfield  {journal} {\bibinfo  {journal} {Phys. Rev. Lett.}\ }\textbf {\bibinfo {volume} {119}},\ \bibinfo {pages} {010504} (\bibinfo {year} {2017})}\BibitemShut {NoStop}%
\bibitem [{\citenamefont {Devakul}\ and\ \citenamefont {Williamson}(2018)}]{FractalMBQC}%
  \BibitemOpen
  \bibfield  {author} {\bibinfo {author} {\bibfnamefont {T.}~\bibnamefont {Devakul}}\ and\ \bibinfo {author} {\bibfnamefont {D.~J.}\ \bibnamefont {Williamson}},\ }\bibfield  {title} {\bibinfo {title} {Universal quantum computation using fractal symmetry-protected cluster phases},\ }\href {https://doi.org/10.1103/PhysRevA.98.022332} {\bibfield  {journal} {\bibinfo  {journal} {Phys. Rev. A}\ }\textbf {\bibinfo {volume} {98}},\ \bibinfo {pages} {022332} (\bibinfo {year} {2018})}\BibitemShut {NoStop}%
\bibitem [{\citenamefont {Raussendorf}\ \emph {et~al.}(2019)\citenamefont {Raussendorf}, \citenamefont {Okay}, \citenamefont {Wang}, \citenamefont {Stephen},\ and\ \citenamefont {Nautrup}}]{ComputationalPhaseofMatter}%
  \BibitemOpen
  \bibfield  {author} {\bibinfo {author} {\bibfnamefont {R.}~\bibnamefont {Raussendorf}}, \bibinfo {author} {\bibfnamefont {C.}~\bibnamefont {Okay}}, \bibinfo {author} {\bibfnamefont {D.-S.}\ \bibnamefont {Wang}}, \bibinfo {author} {\bibfnamefont {D.~T.}\ \bibnamefont {Stephen}},\ and\ \bibinfo {author} {\bibfnamefont {H.~P.}\ \bibnamefont {Nautrup}},\ }\bibfield  {title} {\bibinfo {title} {Computationally universal phase of quantum matter},\ }\href {https://doi.org/10.1103/PhysRevLett.122.090501} {\bibfield  {journal} {\bibinfo  {journal} {Phys. Rev. Lett.}\ }\textbf {\bibinfo {volume} {122}},\ \bibinfo {pages} {090501} (\bibinfo {year} {2019})}\BibitemShut {NoStop}%
\bibitem [{\citenamefont {Stephen}\ \emph {et~al.}(2019)\citenamefont {Stephen}, \citenamefont {Nautrup}, \citenamefont {Bermejo-Vega}, \citenamefont {Eisert},\ and\ \citenamefont {Raussendorf}}]{Stephen2019subsystem}%
  \BibitemOpen
  \bibfield  {author} {\bibinfo {author} {\bibfnamefont {D.~T.}\ \bibnamefont {Stephen}}, \bibinfo {author} {\bibfnamefont {H.~P.}\ \bibnamefont {Nautrup}}, \bibinfo {author} {\bibfnamefont {J.}~\bibnamefont {Bermejo-Vega}}, \bibinfo {author} {\bibfnamefont {J.}~\bibnamefont {Eisert}},\ and\ \bibinfo {author} {\bibfnamefont {R.}~\bibnamefont {Raussendorf}},\ }\bibfield  {title} {\bibinfo {title} {Subsystem symmetries, quantum cellular automata, and computational phases of quantum matter},\ }\href {https://doi.org/10.22331/q-2019-05-20-142} {\bibfield  {journal} {\bibinfo  {journal} {{Quantum}}\ }\textbf {\bibinfo {volume} {3}},\ \bibinfo {pages} {142} (\bibinfo {year} {2019})}\BibitemShut {NoStop}%
\bibitem [{\citenamefont {Daniel}\ \emph {et~al.}(2020)\citenamefont {Daniel}, \citenamefont {Alexander},\ and\ \citenamefont {Miyake}}]{Daniel2020computational}%
  \BibitemOpen
  \bibfield  {author} {\bibinfo {author} {\bibfnamefont {A.~K.}\ \bibnamefont {Daniel}}, \bibinfo {author} {\bibfnamefont {R.~N.}\ \bibnamefont {Alexander}},\ and\ \bibinfo {author} {\bibfnamefont {A.}~\bibnamefont {Miyake}},\ }\bibfield  {title} {\bibinfo {title} {Computational universality of symmetry-protected topologically ordered cluster phases on 2{D} {A}rchimedean lattices},\ }\href {https://doi.org/10.22331/q-2020-02-10-228} {\bibfield  {journal} {\bibinfo  {journal} {{Quantum}}\ }\textbf {\bibinfo {volume} {4}},\ \bibinfo {pages} {228} (\bibinfo {year} {2020})}\BibitemShut {NoStop}%
\bibitem [{\citenamefont {Pollmann}\ \emph {et~al.}(2010)\citenamefont {Pollmann}, \citenamefont {Turner}, \citenamefont {Berg},\ and\ \citenamefont {Oshikawa}}]{PollmannEntanglement}%
  \BibitemOpen
  \bibfield  {author} {\bibinfo {author} {\bibfnamefont {F.}~\bibnamefont {Pollmann}}, \bibinfo {author} {\bibfnamefont {A.~M.}\ \bibnamefont {Turner}}, \bibinfo {author} {\bibfnamefont {E.}~\bibnamefont {Berg}},\ and\ \bibinfo {author} {\bibfnamefont {M.}~\bibnamefont {Oshikawa}},\ }\bibfield  {title} {\bibinfo {title} {Entanglement spectrum of a topological phase in one dimension},\ }\href {https://doi.org/10.1103/PhysRevB.81.064439} {\bibfield  {journal} {\bibinfo  {journal} {Phys. Rev. B}\ }\textbf {\bibinfo {volume} {81}},\ \bibinfo {pages} {064439} (\bibinfo {year} {2010})}\BibitemShut {NoStop}%
\bibitem [{\citenamefont {Pollmann}\ \emph {et~al.}(2012)\citenamefont {Pollmann}, \citenamefont {Berg}, \citenamefont {Turner},\ and\ \citenamefont {Oshikawa}}]{PollmannSPTO}%
  \BibitemOpen
  \bibfield  {author} {\bibinfo {author} {\bibfnamefont {F.}~\bibnamefont {Pollmann}}, \bibinfo {author} {\bibfnamefont {E.}~\bibnamefont {Berg}}, \bibinfo {author} {\bibfnamefont {A.~M.}\ \bibnamefont {Turner}},\ and\ \bibinfo {author} {\bibfnamefont {M.}~\bibnamefont {Oshikawa}},\ }\bibfield  {title} {\bibinfo {title} {Symmetry protection of topological phases in one-dimensional quantum spin systems},\ }\href {https://doi.org/10.1103/PhysRevB.85.075125} {\bibfield  {journal} {\bibinfo  {journal} {Phys. Rev. B}\ }\textbf {\bibinfo {volume} {85}},\ \bibinfo {pages} {075125} (\bibinfo {year} {2012})}\BibitemShut {NoStop}%
\bibitem [{\citenamefont {Chen}\ \emph {et~al.}(2013)\citenamefont {Chen}, \citenamefont {Gu}, \citenamefont {Liu},\ and\ \citenamefont {Wen}}]{Homology}%
  \BibitemOpen
  \bibfield  {author} {\bibinfo {author} {\bibfnamefont {X.}~\bibnamefont {Chen}}, \bibinfo {author} {\bibfnamefont {Z.-C.}\ \bibnamefont {Gu}}, \bibinfo {author} {\bibfnamefont {Z.-X.}\ \bibnamefont {Liu}},\ and\ \bibinfo {author} {\bibfnamefont {X.-G.}\ \bibnamefont {Wen}},\ }\bibfield  {title} {\bibinfo {title} {Symmetry protected topological orders and the group cohomology of their symmetry group},\ }\href {https://doi.org/10.1103/PhysRevB.87.155114} {\bibfield  {journal} {\bibinfo  {journal} {Phys. Rev. B}\ }\textbf {\bibinfo {volume} {87}},\ \bibinfo {pages} {155114} (\bibinfo {year} {2013})}\BibitemShut {NoStop}%
\bibitem [{\citenamefont {Zeng}\ \emph {et~al.}(2019)\citenamefont {Zeng}, \citenamefont {Chen}, \citenamefont {Zhou},\ and\ \citenamefont {Wen}}]{QImeetsQM}%
  \BibitemOpen
  \bibfield  {author} {\bibinfo {author} {\bibfnamefont {B.}~\bibnamefont {Zeng}}, \bibinfo {author} {\bibfnamefont {X.}~\bibnamefont {Chen}}, \bibinfo {author} {\bibfnamefont {D.}~\bibnamefont {Zhou}},\ and\ \bibinfo {author} {\bibfnamefont {X.}~\bibnamefont {Wen}},\ }\href {https://books.google.com/books?id=BPnPxgEACAAJ} {\emph {\bibinfo {title} {Quantum Information Meets Quantum Matter: From Quantum Entanglement to Topological Phases of Many-Body Systems}}},\ Quantum science and technology\ (\bibinfo  {publisher} {Springer New York},\ \bibinfo {year} {2019})\BibitemShut {NoStop}%
\bibitem [{\citenamefont {Cleve}\ and\ \citenamefont {Mittal}(2014)}]{NonlocalgamesBook}%
  \BibitemOpen
  \bibfield  {author} {\bibinfo {author} {\bibfnamefont {R.}~\bibnamefont {Cleve}}\ and\ \bibinfo {author} {\bibfnamefont {R.}~\bibnamefont {Mittal}},\ }\bibfield  {title} {\bibinfo {title} {Characterization of binary constraint system games},\ }in\ \href {https://doi.org/10.1007/978-3-662-43948-7_27} {\emph {\bibinfo {booktitle} {Automata, Languages, and Programming}}},\ \bibinfo {editor} {edited by\ \bibinfo {editor} {\bibfnamefont {J.}~\bibnamefont {Esparza}}, \bibinfo {editor} {\bibfnamefont {P.}~\bibnamefont {Fraigniaud}}, \bibinfo {editor} {\bibfnamefont {T.}~\bibnamefont {Husfeldt}},\ and\ \bibinfo {editor} {\bibfnamefont {E.}~\bibnamefont {Koutsoupias}}}\ (\bibinfo  {publisher} {Springer Berlin Heidelberg},\ \bibinfo {address} {Berlin, Heidelberg},\ \bibinfo {year} {2014})\ pp.\ \bibinfo {pages} {320--331}\BibitemShut {NoStop}%
\bibitem [{\citenamefont {Brunner}\ \emph {et~al.}(2014)\citenamefont {Brunner}, \citenamefont {Cavalcanti}, \citenamefont {Pironio}, \citenamefont {Scarani},\ and\ \citenamefont {Wehner}}]{NonlocalgamesPaper}%
  \BibitemOpen
  \bibfield  {author} {\bibinfo {author} {\bibfnamefont {N.}~\bibnamefont {Brunner}}, \bibinfo {author} {\bibfnamefont {D.}~\bibnamefont {Cavalcanti}}, \bibinfo {author} {\bibfnamefont {S.}~\bibnamefont {Pironio}}, \bibinfo {author} {\bibfnamefont {V.}~\bibnamefont {Scarani}},\ and\ \bibinfo {author} {\bibfnamefont {S.}~\bibnamefont {Wehner}},\ }\bibfield  {title} {\bibinfo {title} {Bell nonlocality},\ }\href {https://doi.org/10.1103/RevModPhys.86.419} {\bibfield  {journal} {\bibinfo  {journal} {Rev. Mod. Phys.}\ }\textbf {\bibinfo {volume} {86}},\ \bibinfo {pages} {419} (\bibinfo {year} {2014})}\BibitemShut {NoStop}%
\bibitem [{\citenamefont {Bell}(1964)}]{Bell}%
  \BibitemOpen
  \bibfield  {author} {\bibinfo {author} {\bibfnamefont {J.~S.}\ \bibnamefont {Bell}},\ }\bibfield  {title} {\bibinfo {title} {On the {E}instein {P}odolsky {R}osen paradox},\ }\href {https://doi.org/10.1103/PhysicsPhysiqueFizika.1.195} {\bibfield  {journal} {\bibinfo  {journal} {Physics Physique Fizika}\ }\textbf {\bibinfo {volume} {1}},\ \bibinfo {pages} {195} (\bibinfo {year} {1964})}\BibitemShut {NoStop}%
\bibitem [{\citenamefont {Anders}\ and\ \citenamefont {Browne}(2009)}]{nonlocalgamesMBQC}%
  \BibitemOpen
  \bibfield  {author} {\bibinfo {author} {\bibfnamefont {J.}~\bibnamefont {Anders}}\ and\ \bibinfo {author} {\bibfnamefont {D.~E.}\ \bibnamefont {Browne}},\ }\bibfield  {title} {\bibinfo {title} {Computational power of correlations},\ }\href {https://doi.org/10.1103/PhysRevLett.102.050502} {\bibfield  {journal} {\bibinfo  {journal} {Phys. Rev. Lett.}\ }\textbf {\bibinfo {volume} {102}},\ \bibinfo {pages} {050502} (\bibinfo {year} {2009})}\BibitemShut {NoStop}%
\bibitem [{\citenamefont {KOCHEN}\ and\ \citenamefont {SPECKER}(1967)}]{KS1967}%
  \BibitemOpen
  \bibfield  {author} {\bibinfo {author} {\bibfnamefont {S.}~\bibnamefont {KOCHEN}}\ and\ \bibinfo {author} {\bibfnamefont {E.~P.}\ \bibnamefont {SPECKER}},\ }\bibfield  {title} {\bibinfo {title} {The problem of hidden variables in quantum mechanics},\ }\href {http://www.jstor.org/stable/24902153} {\bibfield  {journal} {\bibinfo  {journal} {Journal of Mathematics and Mechanics}\ }\textbf {\bibinfo {volume} {17}},\ \bibinfo {pages} {59} (\bibinfo {year} {1967})}\BibitemShut {NoStop}%
\bibitem [{\citenamefont {Spekkens}(2005)}]{Spekkens2005}%
  \BibitemOpen
  \bibfield  {author} {\bibinfo {author} {\bibfnamefont {R.~W.}\ \bibnamefont {Spekkens}},\ }\bibfield  {title} {\bibinfo {title} {Contextuality for preparations, transformations, and unsharp measurements},\ }\href {https://doi.org/10.1103/PhysRevA.71.052108} {\bibfield  {journal} {\bibinfo  {journal} {Phys. Rev. A}\ }\textbf {\bibinfo {volume} {71}},\ \bibinfo {pages} {052108} (\bibinfo {year} {2005})}\BibitemShut {NoStop}%
\bibitem [{\citenamefont {Bulchandani}\ \emph {et~al.}(2023{\natexlab{a}})\citenamefont {Bulchandani}, \citenamefont {Burnell},\ and\ \citenamefont {Sondhi}}]{Bulchandani3}%
  \BibitemOpen
  \bibfield  {author} {\bibinfo {author} {\bibfnamefont {V.~B.}\ \bibnamefont {Bulchandani}}, \bibinfo {author} {\bibfnamefont {F.~J.}\ \bibnamefont {Burnell}},\ and\ \bibinfo {author} {\bibfnamefont {S.~L.}\ \bibnamefont {Sondhi}},\ }\bibfield  {title} {\bibinfo {title} {A multiplayer multiteam nonlocal game for the toric code},\ }\href {https://doi.org/10.1103/PhysRevB.107.035409} {\bibfield  {journal} {\bibinfo  {journal} {Phys. Rev. B}\ }\textbf {\bibinfo {volume} {107}},\ \bibinfo {pages} {035409} (\bibinfo {year} {2023}{\natexlab{a}})}\BibitemShut {NoStop}%
\bibitem [{\citenamefont {Bulchandani}\ \emph {et~al.}(2023{\natexlab{b}})\citenamefont {Bulchandani}, \citenamefont {Burnell},\ and\ \citenamefont {Sondhi}}]{Bulchandani2}%
  \BibitemOpen
  \bibfield  {author} {\bibinfo {author} {\bibfnamefont {V.~B.}\ \bibnamefont {Bulchandani}}, \bibinfo {author} {\bibfnamefont {F.~J.}\ \bibnamefont {Burnell}},\ and\ \bibinfo {author} {\bibfnamefont {S.~L.}\ \bibnamefont {Sondhi}},\ }\bibfield  {title} {\bibinfo {title} {Playing nonlocal games with phases of quantum matter},\ }\href {https://doi.org/10.1103/PhysRevB.107.045412} {\bibfield  {journal} {\bibinfo  {journal} {Phys. Rev. B}\ }\textbf {\bibinfo {volume} {107}},\ \bibinfo {pages} {045412} (\bibinfo {year} {2023}{\natexlab{b}})}\BibitemShut {NoStop}%
\bibitem [{\citenamefont {Hart}\ \emph {et~al.}(2025{\natexlab{a}})\citenamefont {Hart}, \citenamefont {Stephen}, \citenamefont {Wickenden},\ and\ \citenamefont {Nandkishore}}]{Hart}%
  \BibitemOpen
  \bibfield  {author} {\bibinfo {author} {\bibfnamefont {O.}~\bibnamefont {Hart}}, \bibinfo {author} {\bibfnamefont {D.~T.}\ \bibnamefont {Stephen}}, \bibinfo {author} {\bibfnamefont {E.}~\bibnamefont {Wickenden}},\ and\ \bibinfo {author} {\bibfnamefont {R.}~\bibnamefont {Nandkishore}},\ }\href@noop {} {\bibinfo {title} {Many-body contextuality and self-testing quantum matter via nonlocal games}} (\bibinfo {year} {2025}{\natexlab{a}}),\ \Eprint {https://arxiv.org/abs/2512.16886} {arXiv:2512.16886 [quant-ph]} \BibitemShut {NoStop}%
\bibitem [{\citenamefont {Zhao}\ \emph {et~al.}(2025)\citenamefont {Zhao}, \citenamefont {Liew}, \citenamefont {Ho}, \citenamefont {Liu},\ and\ \citenamefont {Bulchandani}}]{Bulchandani}%
  \BibitemOpen
  \bibfield  {author} {\bibinfo {author} {\bibfnamefont {W.}~\bibnamefont {Zhao}}, \bibinfo {author} {\bibfnamefont {H.~W.~S.}\ \bibnamefont {Liew}}, \bibinfo {author} {\bibfnamefont {W.~W.}\ \bibnamefont {Ho}}, \bibinfo {author} {\bibfnamefont {C.}~\bibnamefont {Liu}},\ and\ \bibinfo {author} {\bibfnamefont {V.~B.}\ \bibnamefont {Bulchandani}},\ }\href@noop {} {\bibinfo {title} {Scalable tests of quantum contextuality from stabilizer-testing nonlocal games}} (\bibinfo {year} {2025}),\ \Eprint {https://arxiv.org/abs/2512.16654} {arXiv:2512.16654 [quant-ph]} \BibitemShut {NoStop}%
\bibitem [{\citenamefont {Hart}\ \emph {et~al.}(2025{\natexlab{b}})\citenamefont {Hart}, \citenamefont {Stephen}, \citenamefont {Williamson}, \citenamefont {Foss-Feig},\ and\ \citenamefont {Nandkishore}}]{ToricCodeNonlocalGame}%
  \BibitemOpen
  \bibfield  {author} {\bibinfo {author} {\bibfnamefont {O.}~\bibnamefont {Hart}}, \bibinfo {author} {\bibfnamefont {D.~T.}\ \bibnamefont {Stephen}}, \bibinfo {author} {\bibfnamefont {D.~J.}\ \bibnamefont {Williamson}}, \bibinfo {author} {\bibfnamefont {M.}~\bibnamefont {Foss-Feig}},\ and\ \bibinfo {author} {\bibfnamefont {R.}~\bibnamefont {Nandkishore}},\ }\bibfield  {title} {\bibinfo {title} {Playing nonlocal games across a topological phase transition on a quantum computer},\ }\href {https://doi.org/10.1103/PhysRevLett.134.130602} {\bibfield  {journal} {\bibinfo  {journal} {Phys. Rev. Lett.}\ }\textbf {\bibinfo {volume} {134}},\ \bibinfo {pages} {130602} (\bibinfo {year} {2025}{\natexlab{b}})}\BibitemShut {NoStop}%
\bibitem [{\citenamefont {Hart}\ \emph {et~al.}(2025{\natexlab{c}})\citenamefont {Hart}, \citenamefont {Stephen}, \citenamefont {Williamson},\ and\ \citenamefont {Nandkishore}}]{Braidingforthewin}%
  \BibitemOpen
  \bibfield  {author} {\bibinfo {author} {\bibfnamefont {O.}~\bibnamefont {Hart}}, \bibinfo {author} {\bibfnamefont {D.~T.}\ \bibnamefont {Stephen}}, \bibinfo {author} {\bibfnamefont {D.~J.}\ \bibnamefont {Williamson}},\ and\ \bibinfo {author} {\bibfnamefont {R.}~\bibnamefont {Nandkishore}},\ }\bibfield  {title} {\bibinfo {title} {Braiding for the win: Harnessing braiding statistics in topological states to play quantum games},\ }\href {https://doi.org/10.1103/832w-618z} {\bibfield  {journal} {\bibinfo  {journal} {Phys. Rev. B}\ }\textbf {\bibinfo {volume} {112}},\ \bibinfo {pages} {165132} (\bibinfo {year} {2025}{\natexlab{c}})}\BibitemShut {NoStop}%
\bibitem [{\citenamefont {Daniel}\ and\ \citenamefont {Miyake}(2021)}]{Austin}%
  \BibitemOpen
  \bibfield  {author} {\bibinfo {author} {\bibfnamefont {A.}~\bibnamefont {Daniel}}\ and\ \bibinfo {author} {\bibfnamefont {A.}~\bibnamefont {Miyake}},\ }\bibfield  {title} {\bibinfo {title} {Quantum computational advantage with string order parameters of one-dimensional symmetry-protected topological order},\ }\href {https://doi.org/10.1103/PhysRevLett.126.090505} {\bibfield  {journal} {\bibinfo  {journal} {Phys. Rev. Lett.}\ }\textbf {\bibinfo {volume} {126}},\ \bibinfo {pages} {090505} (\bibinfo {year} {(2021)})}\BibitemShut {NoStop}%
\bibitem [{\citenamefont {Barrett}\ \emph {et~al.}(2007)\citenamefont {Barrett}, \citenamefont {Caves}, \citenamefont {Eastin}, \citenamefont {Elliott},\ and\ \citenamefont {Pironio}}]{modelingPaulimeas}%
  \BibitemOpen
  \bibfield  {author} {\bibinfo {author} {\bibfnamefont {J.}~\bibnamefont {Barrett}}, \bibinfo {author} {\bibfnamefont {C.~M.}\ \bibnamefont {Caves}}, \bibinfo {author} {\bibfnamefont {B.}~\bibnamefont {Eastin}}, \bibinfo {author} {\bibfnamefont {M.~B.}\ \bibnamefont {Elliott}},\ and\ \bibinfo {author} {\bibfnamefont {S.}~\bibnamefont {Pironio}},\ }\bibfield  {title} {\bibinfo {title} {Modeling pauli measurements on graph states with nearest-neighbor classical communication},\ }\href {https://doi.org/10.1103/PhysRevA.75.012103} {\bibfield  {journal} {\bibinfo  {journal} {Phys. Rev. A}\ }\textbf {\bibinfo {volume} {75}},\ \bibinfo {pages} {012103} (\bibinfo {year} {2007})}\BibitemShut {NoStop}%
\bibitem [{\citenamefont {Bravyi}\ \emph {et~al.}(2018)\citenamefont {Bravyi}, \citenamefont {Gosset},\ and\ \citenamefont {König}}]{ShallowCircuits}%
  \BibitemOpen
  \bibfield  {author} {\bibinfo {author} {\bibfnamefont {S.}~\bibnamefont {Bravyi}}, \bibinfo {author} {\bibfnamefont {D.}~\bibnamefont {Gosset}},\ and\ \bibinfo {author} {\bibfnamefont {R.}~\bibnamefont {König}},\ }\bibfield  {title} {\bibinfo {title} {Quantum advantage with shallow circuits},\ }\href {https://doi.org/10.1126/science.aar3106} {\bibfield  {journal} {\bibinfo  {journal} {Science}\ }\textbf {\bibinfo {volume} {362}},\ \bibinfo {pages} {308} (\bibinfo {year} {2018})}\BibitemShut {NoStop}%
\bibitem [{\citenamefont {Bravyi}\ \emph {et~al.}(2020)\citenamefont {Bravyi}, \citenamefont {Gosset}, \citenamefont {K{\"o}nig},\ and\ \citenamefont {Tomamichel}}]{Bravyi2020}%
  \BibitemOpen
  \bibfield  {author} {\bibinfo {author} {\bibfnamefont {S.}~\bibnamefont {Bravyi}}, \bibinfo {author} {\bibfnamefont {D.}~\bibnamefont {Gosset}}, \bibinfo {author} {\bibfnamefont {R.}~\bibnamefont {K{\"o}nig}},\ and\ \bibinfo {author} {\bibfnamefont {M.}~\bibnamefont {Tomamichel}},\ }\bibfield  {title} {\bibinfo {title} {Quantum advantage with noisy shallow circuits},\ }\href {https://doi.org/10.1038/s41567-020-0948-z} {\bibfield  {journal} {\bibinfo  {journal} {Nature Physics}\ }\textbf {\bibinfo {volume} {16}},\ \bibinfo {pages} {1040} (\bibinfo {year} {2020})}\BibitemShut {NoStop}%
\bibitem [{\citenamefont {den Nijs}\ and\ \citenamefont {Rommelse}(1989)}]{denNijs}%
  \BibitemOpen
  \bibfield  {author} {\bibinfo {author} {\bibfnamefont {M.}~\bibnamefont {den Nijs}}\ and\ \bibinfo {author} {\bibfnamefont {K.}~\bibnamefont {Rommelse}},\ }\bibfield  {title} {\bibinfo {title} {Preroughening transitions in crystal surfaces and valence-bond phases in quantum spin chains},\ }\href {https://doi.org/10.1103/PhysRevB.40.4709} {\bibfield  {journal} {\bibinfo  {journal} {Phys. Rev. B}\ }\textbf {\bibinfo {volume} {40}},\ \bibinfo {pages} {4709} (\bibinfo {year} {1989})}\BibitemShut {NoStop}%
\bibitem [{\citenamefont {P\'erez-Garc\'{\i}a}\ \emph {et~al.}(2008)\citenamefont {P\'erez-Garc\'{\i}a}, \citenamefont {Wolf}, \citenamefont {Sanz}, \citenamefont {Verstraete},\ and\ \citenamefont {Cirac}}]{SOPinSpinLattice}%
  \BibitemOpen
  \bibfield  {author} {\bibinfo {author} {\bibfnamefont {D.}~\bibnamefont {P\'erez-Garc\'{\i}a}}, \bibinfo {author} {\bibfnamefont {M.~M.}\ \bibnamefont {Wolf}}, \bibinfo {author} {\bibfnamefont {M.}~\bibnamefont {Sanz}}, \bibinfo {author} {\bibfnamefont {F.}~\bibnamefont {Verstraete}},\ and\ \bibinfo {author} {\bibfnamefont {J.~I.}\ \bibnamefont {Cirac}},\ }\bibfield  {title} {\bibinfo {title} {String order and symmetries in quantum spin lattices},\ }\href {https://doi.org/10.1103/PhysRevLett.100.167202} {\bibfield  {journal} {\bibinfo  {journal} {Phys. Rev. Lett.}\ }\textbf {\bibinfo {volume} {100}},\ \bibinfo {pages} {167202} (\bibinfo {year} {2008})}\BibitemShut {NoStop}%
\bibitem [{\citenamefont {Masui}\ and\ \citenamefont {Totsuka}(2025)}]{Masui2025}%
  \BibitemOpen
  \bibfield  {author} {\bibinfo {author} {\bibfnamefont {R.}~\bibnamefont {Masui}}\ and\ \bibinfo {author} {\bibfnamefont {K.}~\bibnamefont {Totsuka}},\ }\bibfield  {title} {\bibinfo {title} {Computational characterization of symmetry-protected topological phases in open quantum systems},\ }\href {https://doi.org/10.1103/9bbt-x32g} {\bibfield  {journal} {\bibinfo  {journal} {Phys. Rev. B}\ }\textbf {\bibinfo {volume} {112}},\ \bibinfo {pages} {085108} (\bibinfo {year} {2025})}\BibitemShut {NoStop}%
\bibitem [{\citenamefont {Zhang}\ \emph {et~al.}(2025)\citenamefont {Zhang}, \citenamefont {Agrawal},\ and\ \citenamefont {Vijay}}]{MixedSPT}%
  \BibitemOpen
  \bibfield  {author} {\bibinfo {author} {\bibfnamefont {Z.}~\bibnamefont {Zhang}}, \bibinfo {author} {\bibfnamefont {U.}~\bibnamefont {Agrawal}},\ and\ \bibinfo {author} {\bibfnamefont {S.}~\bibnamefont {Vijay}},\ }\bibfield  {title} {\bibinfo {title} {Quantum communication and mixed-state order in decohered symmetry-protected topological states},\ }\href {https://doi.org/10.1103/PhysRevB.111.115141} {\bibfield  {journal} {\bibinfo  {journal} {Phys. Rev. B}\ }\textbf {\bibinfo {volume} {111}},\ \bibinfo {pages} {115141} (\bibinfo {year} {2025})}\BibitemShut {NoStop}%
\bibitem [{\citenamefont {Lessa}\ \emph {et~al.}(2025)\citenamefont {Lessa}, \citenamefont {Ma}, \citenamefont {Zhang}, \citenamefont {Bi}, \citenamefont {Cheng},\ and\ \citenamefont {Wang}}]{Lessa2025}%
  \BibitemOpen
  \bibfield  {author} {\bibinfo {author} {\bibfnamefont {L.~A.}\ \bibnamefont {Lessa}}, \bibinfo {author} {\bibfnamefont {R.}~\bibnamefont {Ma}}, \bibinfo {author} {\bibfnamefont {J.-H.}\ \bibnamefont {Zhang}}, \bibinfo {author} {\bibfnamefont {Z.}~\bibnamefont {Bi}}, \bibinfo {author} {\bibfnamefont {M.}~\bibnamefont {Cheng}},\ and\ \bibinfo {author} {\bibfnamefont {C.}~\bibnamefont {Wang}},\ }\bibfield  {title} {\bibinfo {title} {Strong-to-weak spontaneous symmetry breaking in mixed quantum states},\ }\href {https://doi.org/10.1103/PRXQuantum.6.010344} {\bibfield  {journal} {\bibinfo  {journal} {PRX Quantum}\ }\textbf {\bibinfo {volume} {6}},\ \bibinfo {pages} {010344} (\bibinfo {year} {2025})}\BibitemShut {NoStop}%
\bibitem [{\citenamefont {Ma}\ and\ \citenamefont {Turzillo}(2025)}]{Ma2025}%
  \BibitemOpen
  \bibfield  {author} {\bibinfo {author} {\bibfnamefont {R.}~\bibnamefont {Ma}}\ and\ \bibinfo {author} {\bibfnamefont {A.}~\bibnamefont {Turzillo}},\ }\bibfield  {title} {\bibinfo {title} {Symmetry-protected topological phases of mixed states in the doubled space},\ }\href {https://doi.org/10.1103/PRXQuantum.6.010348} {\bibfield  {journal} {\bibinfo  {journal} {PRX Quantum}\ }\textbf {\bibinfo {volume} {6}},\ \bibinfo {pages} {010348} (\bibinfo {year} {2025})}\BibitemShut {NoStop}%
\bibitem [{\citenamefont {Briegel}\ and\ \citenamefont {Raussendorf}(2001)}]{Cluster}%
  \BibitemOpen
  \bibfield  {author} {\bibinfo {author} {\bibfnamefont {H.~J.}\ \bibnamefont {Briegel}}\ and\ \bibinfo {author} {\bibfnamefont {R.}~\bibnamefont {Raussendorf}},\ }\bibfield  {title} {\bibinfo {title} {Persistent entanglement in arrays of interacting particles},\ }\href {https://doi.org/10.1103/PhysRevLett.86.910} {\bibfield  {journal} {\bibinfo  {journal} {Phys. Rev. Lett.}\ }\textbf {\bibinfo {volume} {86}},\ \bibinfo {pages} {910} (\bibinfo {year} {2001})}\BibitemShut {NoStop}%
\bibitem [{\citenamefont {White}(2009)}]{White}%
  \BibitemOpen
  \bibfield  {author} {\bibinfo {author} {\bibfnamefont {S.~R.}\ \bibnamefont {White}},\ }\bibfield  {title} {\bibinfo {title} {Minimally entangled typical quantum states at finite temperature},\ }\href {https://doi.org/10.1103/PhysRevLett.102.190601} {\bibfield  {journal} {\bibinfo  {journal} {Phys. Rev. Lett.}\ }\textbf {\bibinfo {volume} {102}},\ \bibinfo {pages} {190601} (\bibinfo {year} {2009})}\BibitemShut {NoStop}%
\bibitem [{\citenamefont {Stoudenmire}\ and\ \citenamefont {White}(2010)}]{Stoudenmire_2010}%
  \BibitemOpen
  \bibfield  {author} {\bibinfo {author} {\bibfnamefont {E.~M.}\ \bibnamefont {Stoudenmire}}\ and\ \bibinfo {author} {\bibfnamefont {S.~R.}\ \bibnamefont {White}},\ }\bibfield  {title} {\bibinfo {title} {Minimally entangled typical thermal state algorithms},\ }\href {https://doi.org/10.1088/1367-2630/12/5/055026} {\bibfield  {journal} {\bibinfo  {journal} {New Journal of Physics}\ }\textbf {\bibinfo {volume} {12}},\ \bibinfo {pages} {055026} (\bibinfo {year} {2010})}\BibitemShut {NoStop}%
\bibitem [{\citenamefont {White}(1992)}]{White1992}%
  \BibitemOpen
  \bibfield  {author} {\bibinfo {author} {\bibfnamefont {S.~R.}\ \bibnamefont {White}},\ }\bibfield  {title} {\bibinfo {title} {Density matrix formulation for quantum renormalization groups},\ }\href {https://doi.org/10.1103/PhysRevLett.69.2863} {\bibfield  {journal} {\bibinfo  {journal} {Phys. Rev. Lett.}\ }\textbf {\bibinfo {volume} {69}},\ \bibinfo {pages} {2863} (\bibinfo {year} {1992})}\BibitemShut {NoStop}%
\bibitem [{\citenamefont {\"Ostlund}\ and\ \citenamefont {Rommer}(1995)}]{Ostlund1995}%
  \BibitemOpen
  \bibfield  {author} {\bibinfo {author} {\bibfnamefont {S.}~\bibnamefont {\"Ostlund}}\ and\ \bibinfo {author} {\bibfnamefont {S.}~\bibnamefont {Rommer}},\ }\bibfield  {title} {\bibinfo {title} {Thermodynamic limit of density matrix renormalization},\ }\href {https://doi.org/10.1103/PhysRevLett.75.3537} {\bibfield  {journal} {\bibinfo  {journal} {Phys. Rev. Lett.}\ }\textbf {\bibinfo {volume} {75}},\ \bibinfo {pages} {3537} (\bibinfo {year} {1995})}\BibitemShut {NoStop}%
\bibitem [{\citenamefont {Perez-Garcia}\ \emph {et~al.}(2007)\citenamefont {Perez-Garcia}, \citenamefont {Verstraete}, \citenamefont {Wolf},\ and\ \citenamefont {Cirac}}]{Perez-Garcia}%
  \BibitemOpen
  \bibfield  {author} {\bibinfo {author} {\bibfnamefont {D.}~\bibnamefont {Perez-Garcia}}, \bibinfo {author} {\bibfnamefont {F.}~\bibnamefont {Verstraete}}, \bibinfo {author} {\bibfnamefont {M.~M.}\ \bibnamefont {Wolf}},\ and\ \bibinfo {author} {\bibfnamefont {J.~I.}\ \bibnamefont {Cirac}},\ }\bibfield  {title} {\bibinfo {title} {Matrix product state representations},\ }\href {https://doi.org/10.26421/QIC7.5-6-1} {\bibfield  {journal} {\bibinfo  {journal} {Quant. Inf. Comput.}\ }\textbf {\bibinfo {volume} {7}},\ \bibinfo {pages} {401} (\bibinfo {year} {2007})}\BibitemShut {NoStop}%
\bibitem [{\citenamefont {{Verstraete}}\ \emph {et~al.}(2008)\citenamefont {{Verstraete}}, \citenamefont {{Murg}},\ and\ \citenamefont {{Cirac}}}]{Verstraete2008}%
  \BibitemOpen
  \bibfield  {author} {\bibinfo {author} {\bibfnamefont {F.}~\bibnamefont {{Verstraete}}}, \bibinfo {author} {\bibfnamefont {V.}~\bibnamefont {{Murg}}},\ and\ \bibinfo {author} {\bibfnamefont {J.~I.}\ \bibnamefont {{Cirac}}},\ }\bibfield  {title} {\bibinfo {title} {Matrix product states, projected entangled pair states, and variational renormalization group methods for quantum spin systems},\ }\href {https://doi.org/10.1080/14789940801912366} {\bibfield  {journal} {\bibinfo  {journal} {Advances in Physics}\ }\textbf {\bibinfo {volume} {57}},\ \bibinfo {pages} {143} (\bibinfo {year} {2008})}\BibitemShut {NoStop}%
\bibitem [{\citenamefont {Schollwöck}(2011)}]{Schollwock2011}%
  \BibitemOpen
  \bibfield  {author} {\bibinfo {author} {\bibfnamefont {U.}~\bibnamefont {Schollwöck}},\ }\bibfield  {title} {\bibinfo {title} {The density-matrix renormalization group in the age of matrix product states},\ }\href {https://doi.org/https://doi.org/10.1016/j.aop.2010.09.012} {\bibfield  {journal} {\bibinfo  {journal} {Annals of Physics}\ }\textbf {\bibinfo {volume} {326}},\ \bibinfo {pages} {96} (\bibinfo {year} {2011})},\ \bibinfo {note} {january 2011 Special Issue}\BibitemShut {NoStop}%
\bibitem [{\citenamefont {Fishman}\ \emph {et~al.}(2022)\citenamefont {Fishman}, \citenamefont {White},\ and\ \citenamefont {Stoudenmire}}]{itensor}%
  \BibitemOpen
  \bibfield  {author} {\bibinfo {author} {\bibfnamefont {M.}~\bibnamefont {Fishman}}, \bibinfo {author} {\bibfnamefont {S.~R.}\ \bibnamefont {White}},\ and\ \bibinfo {author} {\bibfnamefont {E.~M.}\ \bibnamefont {Stoudenmire}},\ }\bibfield  {title} {\bibinfo {title} {{The ITensor Software Library for Tensor Network Calculations}},\ }\href {https://doi.org/10.21468/SciPostPhysCodeb.4} {\bibfield  {journal} {\bibinfo  {journal} {SciPost Phys. Codebases}\ ,\ \bibinfo {pages} {4}} (\bibinfo {year} {2022})}\BibitemShut {NoStop}%
\bibitem [{\citenamefont {Roberts}\ \emph {et~al.}(2017)\citenamefont {Roberts}, \citenamefont {Yoshida}, \citenamefont {Kubica},\ and\ \citenamefont {Bartlett}}]{Roberts}%
  \BibitemOpen
  \bibfield  {author} {\bibinfo {author} {\bibfnamefont {S.}~\bibnamefont {Roberts}}, \bibinfo {author} {\bibfnamefont {B.}~\bibnamefont {Yoshida}}, \bibinfo {author} {\bibfnamefont {A.}~\bibnamefont {Kubica}},\ and\ \bibinfo {author} {\bibfnamefont {S.~D.}\ \bibnamefont {Bartlett}},\ }\bibfield  {title} {\bibinfo {title} {Symmetry-protected topological order at nonzero temperature},\ }\href {https://doi.org/10.1103/PhysRevA.96.022306} {\bibfield  {journal} {\bibinfo  {journal} {Phys. Rev. A}\ }\textbf {\bibinfo {volume} {96}},\ \bibinfo {pages} {022306} (\bibinfo {year} {2017})}\BibitemShut {NoStop}%
\bibitem [{\citenamefont {Chen}\ \emph {et~al.}(2010)\citenamefont {Chen}, \citenamefont {Gu},\ and\ \citenamefont {Wen}}]{Chen}%
  \BibitemOpen
  \bibfield  {author} {\bibinfo {author} {\bibfnamefont {X.}~\bibnamefont {Chen}}, \bibinfo {author} {\bibfnamefont {Z.-C.}\ \bibnamefont {Gu}},\ and\ \bibinfo {author} {\bibfnamefont {X.-G.}\ \bibnamefont {Wen}},\ }\bibfield  {title} {\bibinfo {title} {Local unitary transformation, long-range quantum entanglement, wave function renormalization, and topological order},\ }\href {https://doi.org/10.1103/PhysRevB.82.155138} {\bibfield  {journal} {\bibinfo  {journal} {Phys. Rev. B}\ }\textbf {\bibinfo {volume} {82}},\ \bibinfo {pages} {155138} (\bibinfo {year} {2010})}\BibitemShut {NoStop}%
\bibitem [{\citenamefont {Dennis}\ \emph {et~al.}(2002)\citenamefont {Dennis}, \citenamefont {Kitaev}, \citenamefont {Landahl},\ and\ \citenamefont {Preskill}}]{toric}%
  \BibitemOpen
  \bibfield  {author} {\bibinfo {author} {\bibfnamefont {E.}~\bibnamefont {Dennis}}, \bibinfo {author} {\bibfnamefont {A.}~\bibnamefont {Kitaev}}, \bibinfo {author} {\bibfnamefont {A.}~\bibnamefont {Landahl}},\ and\ \bibinfo {author} {\bibfnamefont {J.}~\bibnamefont {Preskill}},\ }\bibfield  {title} {\bibinfo {title} {Topological quantum memory},\ }\href {https://doi.org/10.1063/1.1499754} {\bibfield  {journal} {\bibinfo  {journal} {Journal of Mathematical Physics}\ }\textbf {\bibinfo {volume} {43}},\ \bibinfo {pages} {4452} (\bibinfo {year} {2002})}\BibitemShut {NoStop}%
\bibitem [{\citenamefont {Kitaev}(2003)}]{Kitaev}%
  \BibitemOpen
  \bibfield  {author} {\bibinfo {author} {\bibfnamefont {A.}~\bibnamefont {Kitaev}},\ }\bibfield  {title} {\bibinfo {title} {Fault-tolerant quantum computation by anyons},\ }\href {https://doi.org/https://doi.org/10.1016/S0003-4916(02)00018-0} {\bibfield  {journal} {\bibinfo  {journal} {Annals of Physics}\ }\textbf {\bibinfo {volume} {303}},\ \bibinfo {pages} {2} (\bibinfo {year} {2003})}\BibitemShut {NoStop}%
\bibitem [{\citenamefont {Raussendorf}(2013)}]{Raussendorf}%
  \BibitemOpen
  \bibfield  {author} {\bibinfo {author} {\bibfnamefont {R.}~\bibnamefont {Raussendorf}},\ }\bibfield  {title} {\bibinfo {title} {Contextuality in measurement-based quantum computation},\ }\href {https://doi.org/10.1103/PhysRevA.88.022322} {\bibfield  {journal} {\bibinfo  {journal} {Phys. Rev. A}\ }\textbf {\bibinfo {volume} {88}},\ \bibinfo {pages} {022322} (\bibinfo {year} {2013})}\BibitemShut {NoStop}%
\bibitem [{\citenamefont {Hastings}(2007)}]{Hastings_2007}%
  \BibitemOpen
  \bibfield  {author} {\bibinfo {author} {\bibfnamefont {M.~B.}\ \bibnamefont {Hastings}},\ }\bibfield  {title} {\bibinfo {title} {An area law for one-dimensional quantum systems},\ }\href {https://doi.org/10.1088/1742-5468/2007/08/P08024} {\bibfield  {journal} {\bibinfo  {journal} {Journal of Statistical Mechanics: Theory and Experiment}\ }\textbf {\bibinfo {volume} {2007}},\ \bibinfo {pages} {P08024} (\bibinfo {year} {2007})}\BibitemShut {NoStop}%
\bibitem [{\citenamefont {Hung}\ and\ \citenamefont {Wen}(2014)}]{twistphase}%
  \BibitemOpen
  \bibfield  {author} {\bibinfo {author} {\bibfnamefont {L.-Y.}\ \bibnamefont {Hung}}\ and\ \bibinfo {author} {\bibfnamefont {X.-G.}\ \bibnamefont {Wen}},\ }\bibfield  {title} {\bibinfo {title} {Universal symmetry-protected topological invariants for symmetry-protected topological states},\ }\href {https://doi.org/10.1103/PhysRevB.89.075121} {\bibfield  {journal} {\bibinfo  {journal} {Phys. Rev. B}\ }\textbf {\bibinfo {volume} {89}},\ \bibinfo {pages} {075121} (\bibinfo {year} {2014})}\BibitemShut {NoStop}%
\bibitem [{\citenamefont {Verstraete}\ \emph {et~al.}(2005)\citenamefont {Verstraete}, \citenamefont {Cirac}, \citenamefont {Latorre}, \citenamefont {Rico},\ and\ \citenamefont {Wolf}}]{QuantumRG}%
  \BibitemOpen
  \bibfield  {author} {\bibinfo {author} {\bibfnamefont {F.}~\bibnamefont {Verstraete}}, \bibinfo {author} {\bibfnamefont {J.~I.}\ \bibnamefont {Cirac}}, \bibinfo {author} {\bibfnamefont {J.~I.}\ \bibnamefont {Latorre}}, \bibinfo {author} {\bibfnamefont {E.}~\bibnamefont {Rico}},\ and\ \bibinfo {author} {\bibfnamefont {M.~M.}\ \bibnamefont {Wolf}},\ }\bibfield  {title} {\bibinfo {title} {Renormalization-group transformations on quantum states},\ }\href {https://doi.org/10.1103/PhysRevLett.94.140601} {\bibfield  {journal} {\bibinfo  {journal} {Phys. Rev. Lett.}\ }\textbf {\bibinfo {volume} {94}},\ \bibinfo {pages} {140601} (\bibinfo {year} {2005})}\BibitemShut {NoStop}%
\bibitem [{\citenamefont {Greenberger}\ \emph {et~al.}(1990)\citenamefont {Greenberger}, \citenamefont {Horne}, \citenamefont {Shimony},\ and\ \citenamefont {Zeilinger}}]{Greenberger1990}%
  \BibitemOpen
  \bibfield  {author} {\bibinfo {author} {\bibfnamefont {D.~M.}\ \bibnamefont {Greenberger}}, \bibinfo {author} {\bibfnamefont {M.~A.}\ \bibnamefont {Horne}}, \bibinfo {author} {\bibfnamefont {A.}~\bibnamefont {Shimony}},\ and\ \bibinfo {author} {\bibfnamefont {A.}~\bibnamefont {Zeilinger}},\ }\bibfield  {title} {\bibinfo {title} {Bell's theorem without inequalities},\ }\href {https://doi.org/10.1119/1.16243} {\bibfield  {journal} {\bibinfo  {journal} {American Journal of Physics}\ }\textbf {\bibinfo {volume} {58}},\ \bibinfo {pages} {1131} (\bibinfo {year} {1990})}\BibitemShut {NoStop}%
\bibitem [{\citenamefont {Daniel}\ \emph {et~al.}(2022)\citenamefont {Daniel}, \citenamefont {Zhu}, \citenamefont {Alderete}, \citenamefont {Buchemmavari}, \citenamefont {Green}, \citenamefont {Nguyen}, \citenamefont {Thurtell}, \citenamefont {Zhao}, \citenamefont {Linke},\ and\ \citenamefont {Miyake}}]{AustinExperiment}%
  \BibitemOpen
  \bibfield  {author} {\bibinfo {author} {\bibfnamefont {A.~K.}\ \bibnamefont {Daniel}}, \bibinfo {author} {\bibfnamefont {Y.}~\bibnamefont {Zhu}}, \bibinfo {author} {\bibfnamefont {C.~H.}\ \bibnamefont {Alderete}}, \bibinfo {author} {\bibfnamefont {V.}~\bibnamefont {Buchemmavari}}, \bibinfo {author} {\bibfnamefont {A.~M.}\ \bibnamefont {Green}}, \bibinfo {author} {\bibfnamefont {N.~H.}\ \bibnamefont {Nguyen}}, \bibinfo {author} {\bibfnamefont {T.~G.}\ \bibnamefont {Thurtell}}, \bibinfo {author} {\bibfnamefont {A.}~\bibnamefont {Zhao}}, \bibinfo {author} {\bibfnamefont {N.~M.}\ \bibnamefont {Linke}},\ and\ \bibinfo {author} {\bibfnamefont {A.}~\bibnamefont {Miyake}},\ }\bibfield  {title} {\bibinfo {title} {Quantum computational advantage attested by nonlocal games with the cyclic cluster state},\ }\href {https://doi.org/10.1103/PhysRevResearch.4.033068} {\bibfield  {journal} {\bibinfo  {journal} {Phys. Rev. Res.}\ }\textbf {\bibinfo {volume} {4}},\ \bibinfo {pages} {033068} (\bibinfo {year}
  {2022})}\BibitemShut {NoStop}%
\bibitem [{\citenamefont {Raussendorf}\ \emph {et~al.}(2023)\citenamefont {Raussendorf}, \citenamefont {Yang},\ and\ \citenamefont {Adhikary}}]{Raussendorf2023measurementbased}%
  \BibitemOpen
  \bibfield  {author} {\bibinfo {author} {\bibfnamefont {R.}~\bibnamefont {Raussendorf}}, \bibinfo {author} {\bibfnamefont {W.}~\bibnamefont {Yang}},\ and\ \bibinfo {author} {\bibfnamefont {A.}~\bibnamefont {Adhikary}},\ }\bibfield  {title} {\bibinfo {title} {Measurement-based quantum computation in finite one-dimensional systems: string order implies computational power},\ }\href {https://doi.org/10.22331/q-2023-12-28-1215} {\bibfield  {journal} {\bibinfo  {journal} {{Quantum}}\ }\textbf {\bibinfo {volume} {7}},\ \bibinfo {pages} {1215} (\bibinfo {year} {2023})}\BibitemShut {NoStop}%
\bibitem [{\citenamefont {Brand{\~a}o}\ and\ \citenamefont {Kastoryano}(2019)}]{Brandao2019}%
  \BibitemOpen
  \bibfield  {author} {\bibinfo {author} {\bibfnamefont {F.~G. S.~L.}\ \bibnamefont {Brand{\~a}o}}\ and\ \bibinfo {author} {\bibfnamefont {M.~J.}\ \bibnamefont {Kastoryano}},\ }\bibfield  {title} {\bibinfo {title} {Finite correlation length implies efficient preparation of quantum thermal states},\ }\href {https://doi.org/10.1007/s00220-018-3150-8} {\bibfield  {journal} {\bibinfo  {journal} {Communications in Mathematical Physics}\ }\textbf {\bibinfo {volume} {365}},\ \bibinfo {pages} {1} (\bibinfo {year} {2019})}\BibitemShut {NoStop}%
\bibitem [{\citenamefont {Sang}\ \emph {et~al.}(2024)\citenamefont {Sang}, \citenamefont {Zou},\ and\ \citenamefont {Hsieh}}]{Sang2024}%
  \BibitemOpen
  \bibfield  {author} {\bibinfo {author} {\bibfnamefont {S.}~\bibnamefont {Sang}}, \bibinfo {author} {\bibfnamefont {Y.}~\bibnamefont {Zou}},\ and\ \bibinfo {author} {\bibfnamefont {T.~H.}\ \bibnamefont {Hsieh}},\ }\bibfield  {title} {\bibinfo {title} {Mixed-state quantum phases: Renormalization and quantum error correction},\ }\href {https://doi.org/10.1103/PhysRevX.14.031044} {\bibfield  {journal} {\bibinfo  {journal} {Phys. Rev. X}\ }\textbf {\bibinfo {volume} {14}},\ \bibinfo {pages} {031044} (\bibinfo {year} {2024})}\BibitemShut {NoStop}%
\bibitem [{\citenamefont {Lake}\ \emph {et~al.}(2025)\citenamefont {Lake}, \citenamefont {Balasubramanian},\ and\ \citenamefont {Choi}}]{Lake2025}%
  \BibitemOpen
  \bibfield  {author} {\bibinfo {author} {\bibfnamefont {E.}~\bibnamefont {Lake}}, \bibinfo {author} {\bibfnamefont {S.}~\bibnamefont {Balasubramanian}},\ and\ \bibinfo {author} {\bibfnamefont {S.}~\bibnamefont {Choi}},\ }\bibfield  {title} {\bibinfo {title} {Exact quantum algorithms for quantum phase recognition: Renormalization group and error correction},\ }\href {https://doi.org/10.1103/PRXQuantum.6.010350} {\bibfield  {journal} {\bibinfo  {journal} {PRX Quantum}\ }\textbf {\bibinfo {volume} {6}},\ \bibinfo {pages} {010350} (\bibinfo {year} {2025})}\BibitemShut {NoStop}%
\bibitem [{\citenamefont {Lieb}\ \emph {et~al.}(1961)\citenamefont {Lieb}, \citenamefont {Schultz},\ and\ \citenamefont {Mattis}}]{LIEB1961407}%
  \BibitemOpen
  \bibfield  {author} {\bibinfo {author} {\bibfnamefont {E.}~\bibnamefont {Lieb}}, \bibinfo {author} {\bibfnamefont {T.}~\bibnamefont {Schultz}},\ and\ \bibinfo {author} {\bibfnamefont {D.}~\bibnamefont {Mattis}},\ }\bibfield  {title} {\bibinfo {title} {Two soluble models of an antiferromagnetic chain},\ }\href {https://doi.org/https://doi.org/10.1016/0003-4916(61)90115-4} {\bibfield  {journal} {\bibinfo  {journal} {Annals of Physics}\ }\textbf {\bibinfo {volume} {16}},\ \bibinfo {pages} {407} (\bibinfo {year} {1961})}\BibitemShut {NoStop}%
\bibitem [{\citenamefont {Lieu}\ \emph {et~al.}(2020)\citenamefont {Lieu}, \citenamefont {Belyansky}, \citenamefont {Young}, \citenamefont {Lundgren}, \citenamefont {Albert},\ and\ \citenamefont {Gorshkov}}]{Lieu2020}%
  \BibitemOpen
  \bibfield  {author} {\bibinfo {author} {\bibfnamefont {S.}~\bibnamefont {Lieu}}, \bibinfo {author} {\bibfnamefont {R.}~\bibnamefont {Belyansky}}, \bibinfo {author} {\bibfnamefont {J.~T.}\ \bibnamefont {Young}}, \bibinfo {author} {\bibfnamefont {R.}~\bibnamefont {Lundgren}}, \bibinfo {author} {\bibfnamefont {V.~V.}\ \bibnamefont {Albert}},\ and\ \bibinfo {author} {\bibfnamefont {A.~V.}\ \bibnamefont {Gorshkov}},\ }\bibfield  {title} {\bibinfo {title} {Symmetry breaking and error correction in open quantum systems},\ }\href {https://doi.org/10.1103/PhysRevLett.125.240405} {\bibfield  {journal} {\bibinfo  {journal} {Phys. Rev. Lett.}\ }\textbf {\bibinfo {volume} {125}},\ \bibinfo {pages} {240405} (\bibinfo {year} {2020})}\BibitemShut {NoStop}%
\bibitem [{\citenamefont {de~Groot}\ \emph {et~al.}(2022)\citenamefont {de~Groot}, \citenamefont {Turzillo},\ and\ \citenamefont {Schuch}}]{deGroot2022}%
  \BibitemOpen
  \bibfield  {author} {\bibinfo {author} {\bibfnamefont {C.}~\bibnamefont {de~Groot}}, \bibinfo {author} {\bibfnamefont {A.}~\bibnamefont {Turzillo}},\ and\ \bibinfo {author} {\bibfnamefont {N.}~\bibnamefont {Schuch}},\ }\bibfield  {title} {\bibinfo {title} {Symmetry {P}rotected {T}opological {O}rder in {O}pen {Q}uantum {S}ystems},\ }\href {https://doi.org/10.22331/q-2022-11-10-856} {\bibfield  {journal} {\bibinfo  {journal} {{Quantum}}\ }\textbf {\bibinfo {volume} {6}},\ \bibinfo {pages} {856} (\bibinfo {year} {2022})}\BibitemShut {NoStop}%
\bibitem [{\citenamefont {O’Sullivan}\ \emph {et~al.}(2025)\citenamefont {O’Sullivan}, \citenamefont {Reuer}, \citenamefont {Grigorev}, \citenamefont {Dai}, \citenamefont {Hernández-Antón}, \citenamefont {Muñoz-Arias}, \citenamefont {Hellings}, \citenamefont {Flasby}, \citenamefont {Colao~Zanuz}, \citenamefont {Besse}, \citenamefont {Blais}, \citenamefont {Malz}, \citenamefont {Eichler},\ and\ \citenamefont {Wallraff}}]{OSullivan2025}%
  \BibitemOpen
  \bibfield  {author} {\bibinfo {author} {\bibfnamefont {J.}~\bibnamefont {O’Sullivan}}, \bibinfo {author} {\bibfnamefont {K.}~\bibnamefont {Reuer}}, \bibinfo {author} {\bibfnamefont {A.}~\bibnamefont {Grigorev}}, \bibinfo {author} {\bibfnamefont {X.}~\bibnamefont {Dai}}, \bibinfo {author} {\bibfnamefont {A.}~\bibnamefont {Hernández-Antón}}, \bibinfo {author} {\bibfnamefont {M.~H.}\ \bibnamefont {Muñoz-Arias}}, \bibinfo {author} {\bibfnamefont {C.}~\bibnamefont {Hellings}}, \bibinfo {author} {\bibfnamefont {A.}~\bibnamefont {Flasby}}, \bibinfo {author} {\bibfnamefont {D.}~\bibnamefont {Colao~Zanuz}}, \bibinfo {author} {\bibfnamefont {J.-C.}\ \bibnamefont {Besse}}, \bibinfo {author} {\bibfnamefont {A.}~\bibnamefont {Blais}}, \bibinfo {author} {\bibfnamefont {D.}~\bibnamefont {Malz}}, \bibinfo {author} {\bibfnamefont {C.}~\bibnamefont {Eichler}},\ and\ \bibinfo {author} {\bibfnamefont {A.}~\bibnamefont {Wallraff}},\ }\bibfield  {title} {\bibinfo {title} {Deterministic generation of two-dimensional
  multi-photon cluster states},\ }\href {https://doi.org/10.1038/s41467-025-60472-3} {\bibfield  {journal} {\bibinfo  {journal} {Nature Communications}\ }\textbf {\bibinfo {volume} {16}},\ \bibinfo {pages} {5505} (\bibinfo {year} {2025})}\BibitemShut {NoStop}%
\bibitem [{\citenamefont {Bluvstein}\ \emph {et~al.}(2022)\citenamefont {Bluvstein}, \citenamefont {Levine}, \citenamefont {Semeghini}, \citenamefont {Wang}, \citenamefont {Ebadi}, \citenamefont {Kalinowski}, \citenamefont {Keesling}, \citenamefont {Maskara}, \citenamefont {Pichler}, \citenamefont {Greiner}, \citenamefont {Vuleti{\'c}},\ and\ \citenamefont {Lukin}}]{Harvard}%
  \BibitemOpen
  \bibfield  {author} {\bibinfo {author} {\bibfnamefont {D.}~\bibnamefont {Bluvstein}}, \bibinfo {author} {\bibfnamefont {H.}~\bibnamefont {Levine}}, \bibinfo {author} {\bibfnamefont {G.}~\bibnamefont {Semeghini}}, \bibinfo {author} {\bibfnamefont {T.~T.}\ \bibnamefont {Wang}}, \bibinfo {author} {\bibfnamefont {S.}~\bibnamefont {Ebadi}}, \bibinfo {author} {\bibfnamefont {M.}~\bibnamefont {Kalinowski}}, \bibinfo {author} {\bibfnamefont {A.}~\bibnamefont {Keesling}}, \bibinfo {author} {\bibfnamefont {N.}~\bibnamefont {Maskara}}, \bibinfo {author} {\bibfnamefont {H.}~\bibnamefont {Pichler}}, \bibinfo {author} {\bibfnamefont {M.}~\bibnamefont {Greiner}}, \bibinfo {author} {\bibfnamefont {V.}~\bibnamefont {Vuleti{\'c}}},\ and\ \bibinfo {author} {\bibfnamefont {M.~D.}\ \bibnamefont {Lukin}},\ }\bibfield  {title} {\bibinfo {title} {A quantum processor based on coherent transport of entangled atom arrays},\ }\href {https://doi.org/10.1038/s41586-022-04592-6} {\bibfield  {journal} {\bibinfo  {journal} {Nature}\
  }\textbf {\bibinfo {volume} {604}},\ \bibinfo {pages} {451} (\bibinfo {year} {2022})}\BibitemShut {NoStop}%
\bibitem [{\citenamefont {Qin}\ and\ \citenamefont {Scarola}(2025)}]{VTech}%
  \BibitemOpen
  \bibfield  {author} {\bibinfo {author} {\bibfnamefont {Z.}~\bibnamefont {Qin}}\ and\ \bibinfo {author} {\bibfnamefont {V.~W.}\ \bibnamefont {Scarola}},\ }\bibfield  {title} {\bibinfo {title} {Scaling of computational order parameters in rydberg-atom graph states},\ }\href {https://doi.org/10.1103/PhysRevA.111.042617} {\bibfield  {journal} {\bibinfo  {journal} {Phys. Rev. A}\ }\textbf {\bibinfo {volume} {111}},\ \bibinfo {pages} {042617} (\bibinfo {year} {2025})}\BibitemShut {NoStop}%
\bibitem [{\citenamefont {Cao}\ \emph {et~al.}(2023)\citenamefont {Cao} \emph {et~al.}}]{Cao2023}%
  \BibitemOpen
  \bibfield  {author} {\bibinfo {author} {\bibfnamefont {S.}~\bibnamefont {Cao}} \emph {et~al.},\ }\bibfield  {title} {\bibinfo {title} {Generation of genuine entanglement up to 51 superconducting qubits},\ }\href {https://doi.org/10.1038/s41586-023-06195-1} {\bibfield  {journal} {\bibinfo  {journal} {Nature}\ }\textbf {\bibinfo {volume} {619}},\ \bibinfo {pages} {738} (\bibinfo {year} {2023})}\BibitemShut {NoStop}%
\bibitem [{\citenamefont {Jiang}\ \emph {et~al.}(2025)\citenamefont {Jiang} \emph {et~al.}}]{jiang2025generation95qubitgenuineentanglement}%
  \BibitemOpen
  \bibfield  {author} {\bibinfo {author} {\bibfnamefont {T.}~\bibnamefont {Jiang}} \emph {et~al.},\ }\href@noop {} {\bibinfo {title} {Generation of 95-qubit genuine entanglement and verification of symmetry-protected topological phases}} (\bibinfo {year} {2025}),\ \Eprint {https://arxiv.org/abs/2505.01978} {arXiv:2505.01978 [quant-ph]} \BibitemShut {NoStop}%
\bibitem [{\citenamefont {Flammia}\ and\ \citenamefont {Liu}(2011)}]{FlammiaLiu}%
  \BibitemOpen
  \bibfield  {author} {\bibinfo {author} {\bibfnamefont {S.~T.}\ \bibnamefont {Flammia}}\ and\ \bibinfo {author} {\bibfnamefont {Y.-K.}\ \bibnamefont {Liu}},\ }\bibfield  {title} {\bibinfo {title} {Direct fidelity estimation from few pauli measurements},\ }\href {https://doi.org/10.1103/PhysRevLett.106.230501} {\bibfield  {journal} {\bibinfo  {journal} {Phys. Rev. Lett.}\ }\textbf {\bibinfo {volume} {106}},\ \bibinfo {pages} {230501} (\bibinfo {year} {2011})}\BibitemShut {NoStop}%
\bibitem [{\citenamefont {Kumar}\ \emph {et~al.}(2025)\citenamefont {Kumar} \emph {et~al.}}]{Google}%
  \BibitemOpen
  \bibfield  {author} {\bibinfo {author} {\bibfnamefont {S.}~\bibnamefont {Kumar}} \emph {et~al.},\ }\href@noop {} {\bibinfo {title} {Quantum-classical separation in bounded-resource tasks arising from measurement contextuality}} (\bibinfo {year} {2025}),\ \Eprint {https://arxiv.org/abs/2512.02284} {arXiv:2512.02284 [quant-ph]} \BibitemShut {NoStop}%
\bibitem [{\citenamefont {Affleck}\ \emph {et~al.}(1987)\citenamefont {Affleck}, \citenamefont {Kennedy}, \citenamefont {Lieb},\ and\ \citenamefont {Tasaki}}]{AKLT}%
  \BibitemOpen
  \bibfield  {author} {\bibinfo {author} {\bibfnamefont {I.}~\bibnamefont {Affleck}}, \bibinfo {author} {\bibfnamefont {T.}~\bibnamefont {Kennedy}}, \bibinfo {author} {\bibfnamefont {E.~H.}\ \bibnamefont {Lieb}},\ and\ \bibinfo {author} {\bibfnamefont {H.}~\bibnamefont {Tasaki}},\ }\bibfield  {title} {\bibinfo {title} {Rigorous results on valence-bond ground states in antiferromagnets},\ }\href {https://doi.org/10.1103/PhysRevLett.59.799} {\bibfield  {journal} {\bibinfo  {journal} {Phys. Rev. Lett.}\ }\textbf {\bibinfo {volume} {59}},\ \bibinfo {pages} {799} (\bibinfo {year} {1987})}\BibitemShut {NoStop}%
\bibitem [{\citenamefont {Levin}\ \emph {et~al.}(2006)\citenamefont {Levin}, \citenamefont {Peres},\ and\ \citenamefont {Wilmer}}]{LevinPeresWilmer2006}%
  \BibitemOpen
  \bibfield  {author} {\bibinfo {author} {\bibfnamefont {D.~A.}\ \bibnamefont {Levin}}, \bibinfo {author} {\bibfnamefont {Y.}~\bibnamefont {Peres}},\ and\ \bibinfo {author} {\bibfnamefont {E.~L.}\ \bibnamefont {Wilmer}},\ }\href {http://scholar.google.com/scholar.bib?q=info:3wf9IU94tyMJ:scholar.google.com/&output=citation&hl=en&as_sdt=2000&ct=citation&cd=0} {\emph {\bibinfo {title} {{Markov chains and mixing times}}}}\ (\bibinfo  {publisher} {American Mathematical Society},\ \bibinfo {year} {2006})\BibitemShut {NoStop}%
\bibitem [{\citenamefont {Yang}\ and\ \citenamefont {White}(2020)}]{Krylov}%
  \BibitemOpen
  \bibfield  {author} {\bibinfo {author} {\bibfnamefont {M.}~\bibnamefont {Yang}}\ and\ \bibinfo {author} {\bibfnamefont {S.~R.}\ \bibnamefont {White}},\ }\bibfield  {title} {\bibinfo {title} {Time-dependent variational principle with ancillary krylov subspace},\ }\href {https://doi.org/10.1103/PhysRevB.102.094315} {\bibfield  {journal} {\bibinfo  {journal} {Phys. Rev. B}\ }\textbf {\bibinfo {volume} {102}},\ \bibinfo {pages} {094315} (\bibinfo {year} {2020})}\BibitemShut {NoStop}%
\bibitem [{\citenamefont {Wietek}\ \emph {et~al.}(2021)\citenamefont {Wietek}, \citenamefont {He}, \citenamefont {White}, \citenamefont {Georges},\ and\ \citenamefont {Stoudenmire}}]{Wietek}%
  \BibitemOpen
  \bibfield  {author} {\bibinfo {author} {\bibfnamefont {A.}~\bibnamefont {Wietek}}, \bibinfo {author} {\bibfnamefont {Y.-Y.}\ \bibnamefont {He}}, \bibinfo {author} {\bibfnamefont {S.~R.}\ \bibnamefont {White}}, \bibinfo {author} {\bibfnamefont {A.}~\bibnamefont {Georges}},\ and\ \bibinfo {author} {\bibfnamefont {E.~M.}\ \bibnamefont {Stoudenmire}},\ }\bibfield  {title} {\bibinfo {title} {Stripes, antiferromagnetism, and the pseudogap in the doped hubbard model at finite temperature},\ }\href {https://doi.org/10.1103/PhysRevX.11.031007} {\bibfield  {journal} {\bibinfo  {journal} {Phys. Rev. X}\ }\textbf {\bibinfo {volume} {11}},\ \bibinfo {pages} {031007} (\bibinfo {year} {2021})}\BibitemShut {NoStop}%
\end{thebibliography}%

\appendix

\section{Minimally entangled typical thermal states (METTS) algorithm} \label{sec:METTS}

We review a numerical method to efficiently approximate expectation values of Gibbs thermal states. The Gibbs thermal state is defined as
\begin{align}
    \rho=\frac{1}{\mathcal{Z}}e^{-\beta H},\hspace{5mm}\mathcal{Z}=\tr(e^{-\beta H}),
\end{align}
where $\beta=(k_BT)^{-1}$. As direct calculation is intractable for large system sizes, we instead use matrix product states (MPS) \cite{Perez-Garcia} via ITensor \cite{itensor} to efficiently simulate the states. To obtain expectation values of the symmetry operators (Fig. \ref{fig:SOP}) in the Gibbs state, we use a Monte Carlo Markov chain approach known as the minimally entangled typical thermal states (METTS) algorithm \cite{Stoudenmire_2010,White}, which approximates expectation values by sampling over ``typical thermal" states distributed over an appropriate probability distribution. This is efficiently performed within the MPS framework.

The METTS algorithm relies on a rearrangement of the expectation value of a desired operator $A$ in the Gibbs state:
\begin{align}
\ev{A}_\beta=\frac{1}{\mathcal{Z}}\tr(Ae^{-\beta H})&=\frac{1}{\mathcal{Z}}\sum_j\bra{j}Ae^{-\beta H}\ket{j}\nonumber\\
&=\sum_j\frac{p_j}{\mathcal{Z}}\bra{\phi_j}A\ket{\phi_j},\label{eq:METTS}
\end{align}
where $p_j=\bra{j}e^{-\beta H}\ket{j}$ and $ \ket{\phi_j}=p_j^{-1/2}e^{-\beta H/2}\ket{j}$. In the second line we used the cyclic trace property to symmetrically distribute the exponential around $A$. The final equation implies that by sampling the states $\ket{\phi_j}$ from the distribution $p_j/\mathcal{Z}$, one may approximate the expectation value in the Gibbs state.

An efficient algorithm was constructed in Ref. \cite{Stoudenmire_2010} to sample the states $\ket{\phi_j}$ from this probability distribution:
\begin{enumerate}
\item Pick a random product state $\ket{j}$.
\item Calculate $\ket{\phi_j}$ through imaginary time evolution and record the expectation value $\bra{\phi_j}A\ket{\phi_j}$.
\item Collapse to a new product state $\ket{k}$ in a chosen basis with probability $\abs{\bra{k}\ket{\phi_j}}^2$ and return to step 2.
\end{enumerate}
In the rest of this section, we refer to a cycle of steps 2 and 3 as a ``METTS iteration''.
By using the transition probabilities in step 3
\begin{align}
    T(j\xrightarrow{}k)=\abs{\braket{k}{\phi_j}}^2,\label{eq:transitionprobs}
\end{align}
it is guaranteed that the Markov chain satisfies the detailed-balance equation
\begin{align}
p_jT(j\xrightarrow{}k)=p_kT(k\xrightarrow{}j),\label{eq:detailedbalance}
\end{align}
which is known to be a sufficient condition for the fixed point of the chain to be $P(j)=p_j/\mathcal{Z}$ \cite{LevinPeresWilmer2006}. Following the steps outlined above and accumulating an ensemble of METTS states $\{\ket{\phi_j}\}_{j=1}^{N_I}$, then the ensemble average of the expectation value of A
\begin{align}
    \frac{1}{N_I}\sum_{j=1}^{N_I}\bra{\phi_j}A\ket{\phi_j},\label{eq:METTSev}
\end{align}
will converge to $\ev{A}_\beta$ by the law of large numbers. This approach does not suffer from the sign problem which compromises other quantum Monte-Carlo methods; the weights $p_j$ are simply the norm of the (unnormalized) state $e^{-\beta H/2}\ket{j}$ and are therefore always nonnegative.

\subsection{Imaginary time evolution}\label{sec:TDVP}

Creating the states $\ket{\phi_j}$ by applying the imaginary time evolution $e^{-\beta H/2}\ket{j}$ is the most expensive part of the algorithm, however, it can be done efficiently using MPS. We implement this using ITensor's Time-Dependent Variational Principle (TDVP) algorithm which time evolves the input state within an MPS manifold of controlled bond dimension. A ``two-site'' variant of the function lets the bond dimension temporarily increase before truncating to a specified maximum, allowing the algorithm to better capture entanglement growth. 

We also use a Krylov expansion method \cite{Krylov} to more accurately apply the time-evolution operator during each TDVP step. The Krylov method constructs a subspace consisting of the span of powers of $H$ applied to the state $\ket{j}$ up to some power $\delta$, then exactly solves the projection of this Hamiltonian in this low-dimensional subspace. This improves the fidelity of the evolution within the MPS manifold and allows for larger time steps without sacrificing accuracy. In our numerics we perform a Krylov expansion with $\delta=1$ between each TDVP step. The TDVP algorithm performs imaginary time evolution in steps of $\Delta T$, so that $\beta/2$ is broken into $\beta/2\Delta T$ discrete steps. We find that a TDVP time step size of
\begin{align}
    \Delta T=0.5\label{eq:TDVPstep}
\end{align}
is sufficient to capture the relevant properties of the time-evolved state. A larger step size resulted in both inaccurate expectation values when compared against exact solutions available within certain parts of the phase diagram and increased variance of the $N_I$ individual METTS expectation values. The specifics of these exact solutions are discussed in Appendix \ref{app:exactsolns}, and comparisons against the METTS estimates and exact solutions are discussed more in Section \ref{sec:results}.

Additionally, the TDVP function takes in a maximum bond dimension $m$ and cutoff $\varepsilon$ which determine the accuracy at which the state is captured. Surprisingly, we find that a low bond dimension $m=16$ is sufficient to capture accurate thermal behavior. We choose a cutoff of $\varepsilon=10^{-10}$.

\subsection{Error analysis}\label{sec:erroranalysis}

\begin{figure}[t]
\centering

\begin{tikzpicture}[font=\small]

% -----------------------
% Layout parameters
% -----------------------
\def\W{8cm}          % width of each panel
\def\Ysep{0.6cm}     % vertical separation between panels

% ======================
% Panel (a)
% ======================
\node[inner sep=0pt, anchor=north] (A) at (0,0)
{
    \includegraphics[width=\W, keepaspectratio]{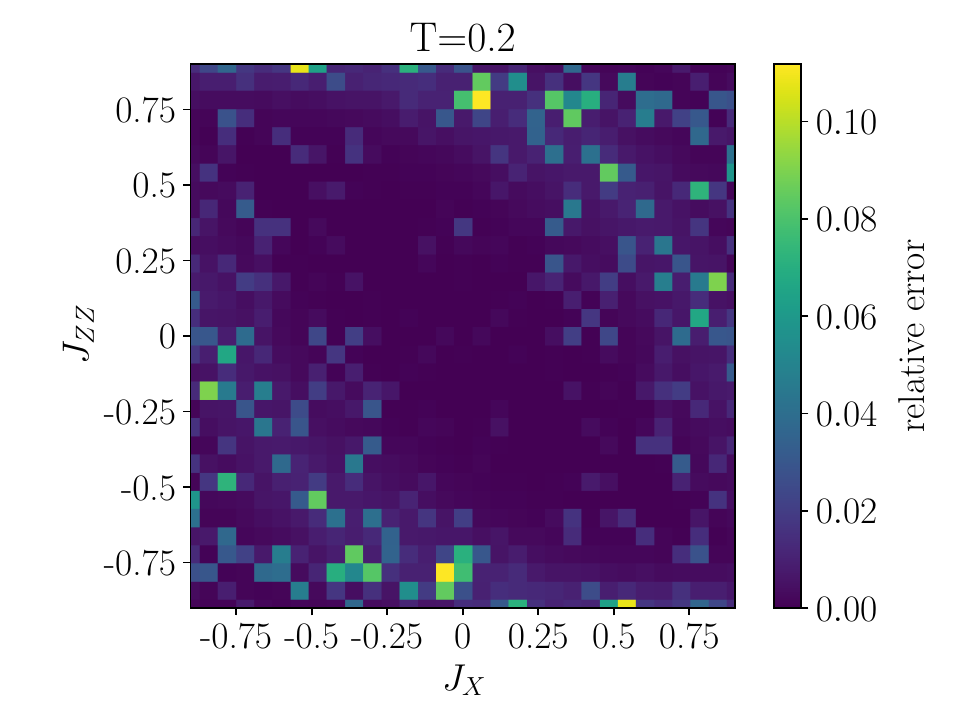}
};
\node[anchor=north] at (A.south) {(a)};

% ======================
% Panel (b)
% ======================
\node[inner sep=0pt, anchor=north] (B)
    at ($(A.south) - (0,\Ysep)$)
{
    \includegraphics[width=\W, keepaspectratio]{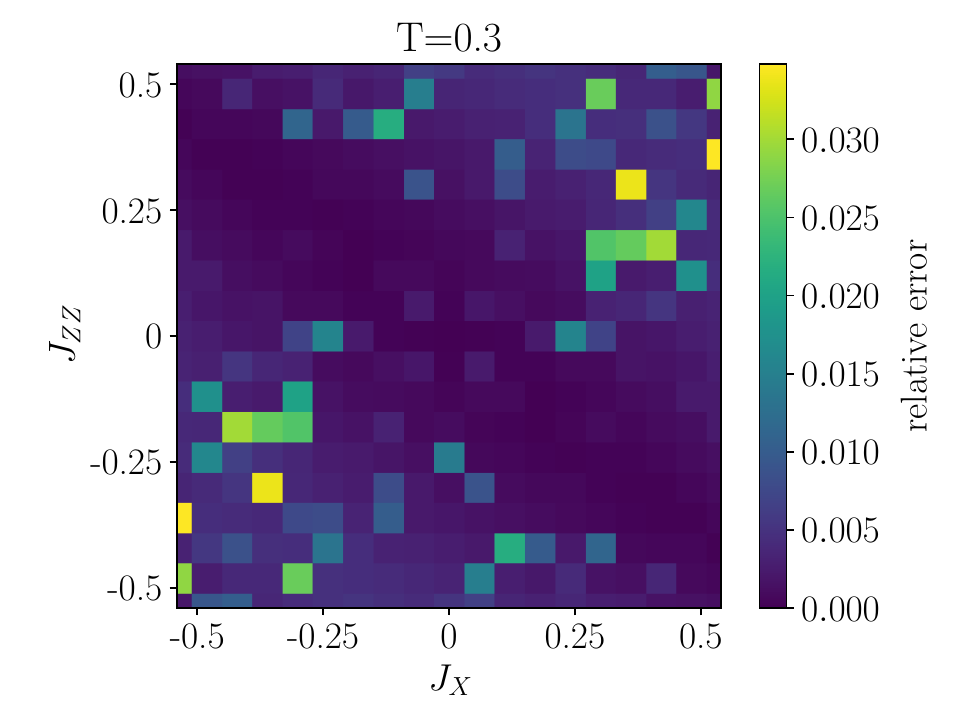}
};
\node[anchor=north] at (B.south) {(b)};

\end{tikzpicture}

\caption{ Relative error of the winning probability $P_{64}^{(Q,min)}$ estimate over METTS iterations is low across the phase diagram.}
    \label{fig:relativeerror}
\end{figure}

From the METTS estimate of the expectation value $\ev{A}_\beta$ given by Eq. \eqref{eq:METTSev} (which we will now denote as just $\ev{A}$) we estimate the standard error across the $N_I$ contributing expectation values $A_j=\bra{\phi_j}A\ket{\phi_j}$ as
\begin{align}
    s=\sqrt{\frac{\tau}{N_I}}\sigma,\label{eq:stderr}
\end{align}
where $\sigma^2$ is the sample variance
\begin{align}
    \sigma^2=\frac{1}{N_I-1}\sum_j(A_j-\ev{A})^2,\label{eq:samplevar}
\end{align}
and $\tau$ is the autocorrelation time
\begin{align}
    \tau=1+2\sum_{t=1}^\infty C(t),\quad C(t)=\frac{\ev{A_jA_{j+t}}-\ev{A}^2}{\ev{A^2}-\ev{A}^2},\label{eq:autocorrelation}
\end{align}
where $\ev{A_jA_{j+t}}$ is averaged over all $j$ in the Markov chain.
The autocorrelation time is included as the transition probabilities in Eq. \eqref{eq:transitionprobs} guarantee that consecutive samples are not independent. In practice one sets a finite cutoff $M$ for the sum by choosing the value of $t$ for which $C(t)$ first reaches 0.
The expectation values above are approximated by averaging over METTS samples. For example:
\begin{align}
\ev{A}&\approx\frac{1}{N_I}\sum_{k=1}^{N_I}A_k,\\
\ev{A_jA_{j+t}}&\approx\frac{1}{N_I-t}\sum_{k=1}^{N_I-t}A_kA_{k+t}.\label{eq:sampleevs}
\end{align}
One is usually interested in the relative error given by $s/\ev{A}$. We choose $N_I=110$ for all METTS calculations and start recording expectation values after a 10 iteration warm-up, which is sufficiently high to give a small relative error across the Hamiltonians of interest (Fig. \ref{fig:relativeerror}).

\subsection{Choice of collapse basis}\label{sec:collapse}
The choice of collapse basis $\{\ket{k}\}$ in step 3 of the algorithm is a critical step: a good choice can result in low autocorrelation between METTS iterations, leading to fast convergence. On the other hand, a poor choice of collapse basis will result in a complete failure of the algorithm.

Several basis choices are common in the METTS literature. The simplest approach is to collapse into a single fixed basis for each METTS iteration. However, this often results in large autocorrelation times across METTS iterations due to the large overlap between iteratively adjacent METTS states. Two common remedies to this problem are to either randomly draw each single-qubit collapse basis from a uniform distribution over all states on the Bloch sphere or to collapse into alternating orthogonal bases on successive METTS iterations. 

Interestingly, in our numerics we find contradictory behavior to the usual; collapsing into the $Z$ basis outperforms other choices of collapse basis throughout the SPT phase, including the alternating orthogonal basis and random collapse methods. However, a single basis collapse with other bases does not perform as well as the $Z$ basis. For example, collapsing in either the $X$ or $Y$ basis does not always predict the correct expectation values. For the $X$ basis, this is due to the fact that this is a symmetry eigenbasis of our Hamiltonian. Repeated steps of collapsing into this basis and evolving the state by $e^{-\beta H/2}$ will not allow the algorithm to explore different symmetry sectors. However, why the $Y$ collapse does not work is more mysterious. Similar behavior was seen in Ref. \cite{Wietek}, however, their model has a non-abelian and continuous $SU(2)$ symmetry, whereas ours has an abelian and discrete $\mathds{Z}_2\times\mathds{Z}_2$ symmetry.

\begin{figure}[t]
        \centering
        \includegraphics[width=\linewidth]{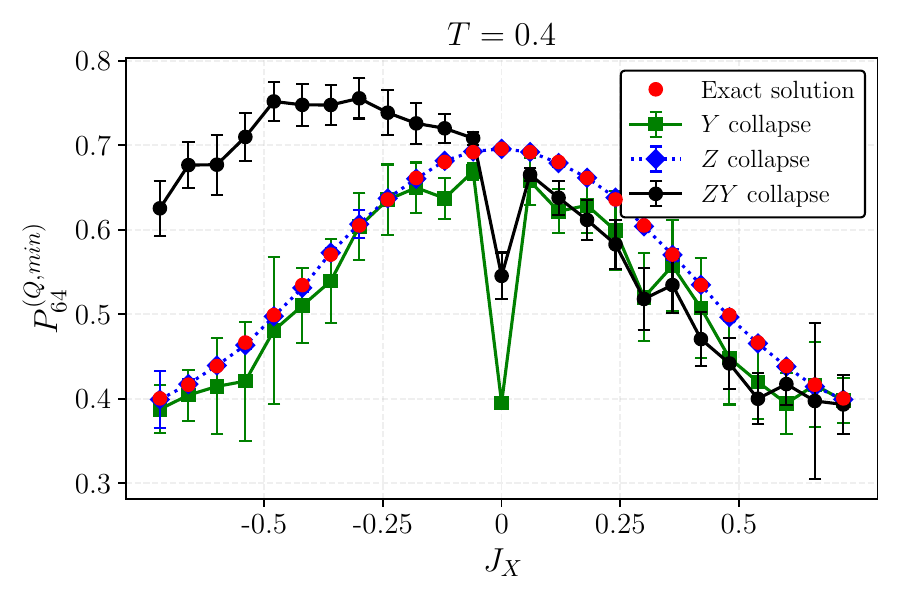}

    \caption{ Effect of collapse basis on the METTS estimate of the minimum winning probability $P_{64}^{(Q,min)}$. The $Z$-only collapse basis performs best in terms of both accuracy and relative error across METTS iterations when benchmarked against the exact solution of the $J_X$ axis of the Hamiltonian Eq. \eqref{eq:Hamiltonian}.}

    \label{fig:Z_ZX_rand_vs_exact}
\end{figure}

In Fig. \ref{fig:Z_ZX_rand_vs_exact} we analyze the accuracy and precision of collapse algorithms by benchmarking their results against the exact solutions of the $J_X$ and $J_{ZZ}$ axes at a fixed temperature $T=0.4$. In each case the METTS algorithm is run the same, with the only difference being the choice of collapse basis. The $Z$-only collapse algorithm vastly outperforms the other collapse choices.

\section{Exact solutions of $J_X$ and $J_{ZZ}$ axes}\label{app:exactsolns}

The Hamiltonian Eq. \eqref{eq:Hamiltonian} is exactly solvable along the $J_X$ and $J_{ZZ}$ axes by a Jordan-Wigner (JW) transformation into a fermionic quadratic form. Here we outline the general method to calculate exact expectation values of symmetry operators and twisted string order parameters in the corresponding Gibbs states, from which the triangle game winning probabilities can be deduced by Eq. \eqref{eq:pwin}. Our procedure follows the approach of Ref. \cite{LIEB1961407}. We will outline the approach for the case along the $J_X$ axis, i.e. the transverse-field cluster model. Expectation values along the $J_{ZZ}$ can also be obtained, however the analysis is more complicated as it relies on a global unitary transformation through $\tr(e^{-\beta H}\mathcal{O})=\tr(U_{CZ}e^{-\beta H}U_{CZ}^\dag U_{CZ}\mathcal{O}U_{CZ}^\dag)$ where $U_{CZ}=\prod_jCZ_{j,j+1}$. The transformed Hamiltonian is then a decoupled transverse-field Ising model on even and odd sublattices, which can be exactly solved by the same method that follows. However, we have found that these expectation values are exactly the same as those along the $J_X$ axis.

For the JW transformation it is necessary to work with the Hamiltonian in the form
\begin{align}
    H=-\sum_{j=1}^n\bigg(X_{j-1}Z_jX_{j+1}+J_XZ_j\bigg),\label{eq:JxAxis}
\end{align}
which is just a rotated version of the $J_X$ axis of our Hamiltonian. We define the (inverse) JW transformation which maps Paulis to fermionic operators as
\begin{align}
    Z_j &= \mathds{1}-2c_j^\dag c_j,\nonumber\\
    X_j &= \exp(i\pi \sum_{k=1}^{j-1}c_k^\dag c_k)(c_j^\dag + c_j). \label{eq:JordanWigner}
\end{align}
The Hamiltonian in Eq. \eqref{eq:JxAxis} then takes the form
\begin{align}
H=\sum_{j=1}^n\bigg(\big(-c_{j-1}^\dag c_{j+1}^\dag - c_{j-1}^\dag c_{j+1} + h.c.\big) + 2J_Xc_j^\dag c_j\bigg),\label{eq:JWtransformedH}
\end{align}
up to a constant energy shift $-nJ_X$. Note that we have ignored boundary effects which come about from the nonlocal JW mapping. Although this is inexact for small system sizes, it quickly converges to the exact solution as the system size increases, as the relative weight of these boundary terms decreases.

Eq. \eqref{eq:JWtransformedH} is now in a quadratic form, i.e. it can be written as
\begin{align}
    H=\sum_{i,j}\bigg(c_i^\dag A_{ij}c_j+\frac{1}{2}(c_i^\dag B_{ij}c_j^\dag - c_i B_{ij}^* c_j)\bigg),\label{eq:quadraticform}
\end{align}
for some $A,B\in\mathds{C}^{n\times n}$. Since all the coefficients in Eq. \eqref{eq:JWtransformedH} are real, $A,B\in R^{n\times n}$,  in which case the Hermiticity of $H$ requires $A^T=A$. Additionally, anticommutation relations require $B^T=-B$ with zeros on the diagonal. By rewriting $c_i^\dag c_j$ terms as $(c_i^\dag c_j + \delta_{ij}-c_j c_i^\dag)/2$ we can write
\begin{align}
    H=\frac{1}{2}\Gamma^\dag M\Gamma+\frac{\mathds{1}}{2}\tr(A),
\end{align}
where
\begin{subequations}
\begin{align}
    \Gamma^\dag&=
    \begin{pmatrix}
        c_1^\dag & c_2^\dag & \hdots & c_n^\dag & \vline & c_1 & c_2 & \hdots & c_n
    \end{pmatrix},
\\
    M&=
    \left(
    \begin{array}{c|c}
    A & B\\
    \hline
    -B & -A
    \end{array}
    \right).
    \label{eq:Mmatrix}
\end{align}
\end{subequations}
Fermionic Hamiltonians in quadratic form are exactly solvable, in that their energy spectrum, ground state wavefunctions, and expectation values have an analytic form. This is because there always exists a similarity transformation that turns the Hamiltonian into an equivalent free fermion model with the form
\begin{align}
H=\sum_{k=1}^n\Lambda_k\eta_k^\dag\eta_k,\label{eq:freefermion}
\end{align}
where $\eta_k^\dag$ and $\eta_k$ are operators which satisfy the fermionic anticommutation relations. In a free fermion model, all hopping and pairing terms have zero expectation value, and all expectation values of number operators are given by the Fermi-Dirac distribution $f(\Lambda_j)=(e^{\beta\Lambda_j}+1)^{-1}$, i.e.
\begin{subequations}\label{eq:freefermionevs}
\begin{align}
\ev{\eta_j^\dag\eta_k}&=\delta_{jk}f(\Lambda_j),\\
\ev{\eta_j^\dag\eta_k^\dag}&=0,\\
\ev{\eta_j\eta_k}&=0.
\end{align}
\end{subequations}
The similarity transformation writes fermionic creation and annihilation operators from linear combinations of free fermion operators $\eta,\eta^\dag$. Since $M$ in Eq. \eqref{eq:Mmatrix} is a real symmetric matrix, it can be diagonalized by an orthogonal matrix $O$, constraining the transformation to take the form \begin{subequations}
\begin{align}
c_j=\sum_{k=1}^n\bigg(U_{jk}\eta_k+V_{jk}\eta_k^\dag\bigg),\\
c_j^\dag=\sum_{k=1}^n\bigg(V_{jk}\eta_k+U_{jk}\eta_k^\dag\bigg).
\end{align}
\end{subequations}
In other words, $\Gamma=O\Delta$. Then $\Gamma^\dag M \Gamma = \Delta^\dag O^T M O \Delta=\Delta^\dag D \Delta$, where
\begin{subequations}
\begin{align}
    \Delta^\dag&=
    \begin{pmatrix}
        \eta_1^\dag & \eta_2^\dag & \hdots & \eta_n^\dag & \vline & \eta_1 & \eta_2 & \hdots & \eta_n
    \end{pmatrix},
\\
    O&=
    \left(
    \begin{array}{c|c}
    U & V\\
    \hline
    V & U
    \end{array}
    \right),
\end{align}
\end{subequations}
and the form of $M$ allows us to deduce that $D=\text{diag}(\Lambda_1,\Lambda_2,\hdots,\Lambda_n,-\Lambda_1,-\Lambda_2,\hdots,-\Lambda_n)$.
By defining the symmetric and antisymmetric combinations
\begin{subequations}
\begin{align}
    \phi &= U+V,\label{eq:phi}\\
    \psi &= U-V,\label{eq:psi}
\end{align}
\end{subequations}
one obtains a set of coupled eigenvalue equations
\begin{subequations}\label{eq:phipsieqns}
\begin{align}
(A+B)\ket{\phi_k}=\Lambda_k\ket{\psi_k},\label{eq:phipsieqnsa}\\
    (A-B)\ket{\psi_k}=\Lambda_k\ket{\phi_k},\label{eq:phipsieqnsb}
\end{align}
\end{subequations}
where $\ket{\phi_k}$ ($\ket{\psi_k}$) is the $k$'th column of $\phi$ ($\psi$). We use the translational symmetry of our model to obtain
\begin{align}
\ket{\phi_k}=\sum_{j=1}^n
\begin{cases}\sqrt{\frac{2}{n}}\cos(\frac{2\pi jk}{n})\ket{j}, & k\in\{1,2,...,\frac{n}{2}-1\}\\
\frac{1}{\sqrt{n}}(-1)^j\ket{j}, & k=\frac{n}{2}\\
\sqrt{\frac{2}{n}}\sin(\frac{2\pi jk}{n})\ket{j}, & k\in\{\frac{n}{2}+1,...,n-1\}\\
\frac{1}{\sqrt{n}}\ket{j}, & k=n
\end{cases}\label{eq:phik}
\end{align}
and
\begin{align}
    \Lambda_k=2\sqrt{1+J_X^2-2J_X\cos(\frac{4\pi k}{n})}.\label{eq:lambdak}
\end{align}
Equation \eqref{eq:phipsieqnsa} can then be used to find the corresponding $\ket{\psi_k}$. Since the set $\{\ket{\phi_k}, \ket{\psi_k}\}_{k=1}^n$ completely describe the transformation matrices $U$ and $V$, the similarity transformation is exactly solved.

To calculate expectation values, one considers the operators
\begin{subequations}
\begin{align}
    P_j&=c_j^\dag+c_j,\\
    Q_j&=c_j^\dag-c_j,
\end{align}
\end{subequations}
which satisfy the anticommutation relations $\{P_j,Q_k\}=0$ $\forall j,k$ and $\{P_j,P_k\}=\{Q_j,Q_k\}=0$  $\forall j\ne k$. They are expressed in terms of the free fermion operators $\eta, \eta^\dag$ as
\begin{subequations}
\begin{align}
    P_j=\sum_k\phi_{jk}(\eta_k^\dag+\eta_k),\\
    Q_j=\sum_k\psi_{jk}(\eta_k^\dag-\eta_k).
\end{align}
\end{subequations}
Then using Eq. \eqref{eq:freefermionevs},
\begin{subequations}
\begin{align}
\ev{P_iP_j}&=\sum_{k}\phi_{ik}\phi_{jk}=\delta_{ij},\\
\ev{Q_iQ_j}&=-\sum_{k}\psi_{ik}\psi_{jk}=-\delta_{ij},\\
\ev{Q_iP_j}&=-\sum_k\tanh(\frac{\beta\Lambda_k}{2})\psi_{ik}\phi_{jk}\equiv G_{ij}.\label{eq:G}
\end{align}
\end{subequations}
Using these results, all expectation values of twisted SOPs and symmetry operators can be evaluated. For example, the symmetry operator $U(z)=\prod_jX_j$ corresponding to the choice of $g=z=(1,1)$ in Eq. \eqref{eq:fullsymmetry} becomes $\prod_jZ_j$ in the rotated model, which then becomes $\prod_j(\mathds{1}-2c_j^\dag c_j)=\prod_jP_jQ_j=\prod_jQ_jP_j$ after a JW transformation (the last step results in no negative because $n$ is even). Using Wick's theorem, the expectation value of $U(z)$ can then be written as the determinant of G defined in Eq. \eqref{eq:G}

The twisted SOP expectation values can similarly be found by determinants of particular submatrices of G. For example, the cluster state twisted SOP chosen with $g=z=(1,1)$ and $h=x=(0,1)$ in \eqref{eq:twistedSOP} can be written as $T_{[p,q]}^{(z,x)}=-U(x)S_{[p,q]}(z)$. Defining $j=2p$ and $k=2p-1$ this can be written
\begin{align}
   -\bigg(\prod_{\substack{u=1 \\ u\text{ odd}}}^{j-2}X_u\bigg)Z_{j-1}Z_j\bigg(\prod_{\substack{v=j+1 \\ v\text{ even}}}^{k-1}X_v\bigg)Y_kY_{k+1}\bigg(\prod_{\substack{w=k+2 \\ w\text{ odd}}}^nX_w\bigg),
\end{align}
or in the rotated model
\begin{align}
    -\bigg(\prod_{u=1}^{j-2}Z_u\bigg)X_{j-1}X_j\bigg(\prod_{v=j+1}^{k-1}Z_v\bigg)Y_kY_{k+1}\bigg(\prod_{w=k+2}^nZ_w\bigg).
\end{align}
After the JW transformation this becomes
\begin{equation}
\begin{aligned}
-\bigg(\prod_{u=1}^{j-2} P_u Q_u\bigg)
  Q_{j-1} P_j
  \bigg(\prod_{v=j+1}^{k-1} P_vQ_v\bigg) \\
\quad\cdot P_k Q_{k+1}
  \bigg(\prod_{w=k+2}^{n} P_w Q_w\bigg).
\end{aligned}
\end{equation}
We consider the case when $j$ is even and $k,j$ are evenly spaced from the boundaries so that $k=n-j+1$. In this case rearranging the operators to $QP$ ordering removes the negative sign. From Wick's theorem, the simplest contribution to $\ev{T_{[p,q]}^{(g,h)}}$ is then
\begin{align}
    \bigg(\prod_{u=1}^{j-2}G_{uu}\bigg)G_{j-1,j}\bigg(\prod_{v=j+1}^{k-1}G_{vv}\bigg)G_{k,k+1}\bigg(\prod_{w=k+2}^nG_{ww}\bigg).
    \label{eq:diagonal}
\end{align}
The full expectation value can be deduced by taking the determinant of the matrix which has Eq. \eqref{eq:diagonal} as the product of the diagonal elements, with all other matrix elements logically deduced from the diagonal. For example, for $n=16,j=6,k=11$  (in other words $p=3$, $q=6$) this is
\begin{widetext}
\begin{align}
\ev{T_{[3,6]}^{(z,x)}}=\det
\left(
\begin{array}{ccc|ccc|cc}
    G_{22} & G_{24} & G_{24} & G_{27} & G_{29} & G_{2,11} & G_{2,14} & G_{2,16} \\
    G_{42} & G_{44} & G_{46} & G_{47} & G_{49} & G_{4,11} & G_{4,14} & G_{4,16}\\
    \hline
    G_{52} & G_{54} & G_{56} & G_{57} & G_{59} & G_{5,11} & G_{5,14} & G_{5,16}\\
    G_{72} & G_{74} & G_{76} & G_{77} & G_{79} & G_{7,11} & G_{7,14} & G_{7,16}\\
    G_{92} & G_{94} & G_{96} & G_{97} & G_{99} & G_{9,11} & G_{9,14} & G_{9,16}\\
    \hline
    G_{12,2} & G_{12,4} & G_{12,6} & G_{12,7} & G_{12,9} & G_{12,11} & G_{12,14} & G_{12,16}\\
    G_{14,2} & G_{14,4} & G_{14,6} & G_{14,7} & G_{14,9} & G_{14,11} & G_{14,14} & G_{14,16}\\
    G_{16,2} & G_{16,4} & G_{16,6} & G_{16,7} & G_{16,9} & G_{16,11} & G_{16,14} & G_{16,16}
\end{array}
\right).
\label{eq:Gtwisted}
\end{align}
\end{widetext}
The lines in Eq. \eqref{eq:Gtwisted} show submatrices that have a consistent structure. The corner blocks are elements of the $G$-submatrix for $U(x)$, whereas the center block is an element of the corresponding $G$-submatrix for $U(y)$. The blocks connecting these two regions show the ``twist'' between different symmetry operators.
In general this will be a $n/2 \times n/2$-dimensional matrix with block boundaries between columns $j/2,j/2+1$ and $(k+1)/2,(k+3)/2$ and rows $j/2-1,j/2$ and $(k-1)/2,(k+1)/2$.
\end{document}